\newtheorem{theorem}{Theorem}[section]
\newtheorem{proposition}[theorem]{Proposition}
\newtheorem{conjecture}[theorem]{Conjecture}
\newtheorem{lemma}[theorem]{Lemma}
\newtheorem{claim}[theorem]{Claim}
\newtheorem{definition}[theorem]{Definition}
\newcommand{\mc}[1]{\mathcal{#1}}
\newcommand{\mb}[1]{\mathbb{#1}}
\newcommand{\pdc}[1]{\hat{#1}}
\newcommand{\Out}{\normalfont{Out}}
\newcommand{\ccw}{counterclockwise }
\newcommand{\cw}{clockwise }
\newcommand{\cww}{clockwise}
\newcounter{sclaim}
\newcounter{ssclaim}
\newenvironment{proof}{\noindent \setcounter{ssclaim}{0}\emph{Proof.}\ }{\hfill
    $\Box$\vspace{1em}}
  \newenvironment{proofclaim}{\noindent \emph{Proof.}\ }{\hfill
    $\Diamond$\vspace{1em}}
\newenvironment{sclaim}[1][]%
{\refstepcounter{sclaim}\vspace{1ex}\noindent{\it  (\arabic{sclaim})  {#1}{}}\it}{\vspace{1ex}}
\newenvironment{proofsclaim}[1][]%
	{\noindent {}{#1}{}}{ This proves~(\arabic{sclaim}).\vspace{1ex}}
\title{On the structure of Schnyder woods on orientable surfaces\thanks{This
      work was supported by the grant EGOS ANR-12-JS02-002-01}}
  \author{Daniel Gon\c{c}alves\thanks{CNRS, Université de Montpellier,
      LIRMM UMR 5506, CC477, 161 rue Ada, 34095 Montpellier Cedex 5,
      France.  \texttt{daniel.goncalves@lirmm.fr}}, Kolja
    Knauer\thanks{Aix Marseille Universit\'e, LIF UMR 7279, Parc
      Scientifique et Technologique de Luminy, 163 avenue de Luminy -
      Case 901, 13288 Marseille Cedex 9, France.
      \texttt{kolja.knauer@lif.univ-mrs.fr}}, Benjamin
    L\'ev\^eque\thanks{CNRS, Laboratoire G-SCOP UMR 5272, 46 Avenue Félix
      Viallet, 38031 Grenoble Cedex 1, France
      \texttt{benjamin.leveque@cnrs.fr}}}
\begin{document}
\maketitle

\begin{abstract}
  We propose a simple generalization of Schnyder woods from the plane
  to maps on orientable surfaces of higher genus. This is done in the
  language of angle labelings.  Generalizing results of De Fraysseix
  and Ossona de Mendez, and Felsner, we establish a correspondence
  between these labelings and orientations and characterize the set of
  orientations of a map that correspond to such a Schnyder
  labeling. Furthermore, we study the set of these orientations of a
  given map and provide a natural partition into distributive lattices
  depending on the surface homology.  This generalizes earlier results
  of Felsner and Ossona de Mendez.  In the toroidal case, a new proof
  for the existence of Schnyder woods is derived from this approach.
\end{abstract}

\section{Introduction}

Schnyder~\cite{Sch89} introduced Schnyder woods for planar
triangulations with
the following local property:

\begin{definition}[Schnyder property]
\label{def:schnyderproperty}
Given a map $G$, a vertex $v$ and an orientation and
coloring\footnote{Throughout the paper colors and some of the indices
  are given modulo 3.} of the edges incident to $v$ with the colors
$0$, $1$, $2$, we say that $v$ satisfies the \emph{Schnyder property},
(see Figure~\ref{fig:LSP}) if $v$ satisfies the following local
property:

\begin{itemize}
\item Vertex $v$ has out-degree one in each color.
\item The edges $e_0(v)$, $e_1(v)$, $e_2(v)$ leaving $v$ in colors
  $0$, $1$, $2$, respectively, occur in counterclockwise order.
\item Each edge entering $v$ in color $i$ enters $v$ in the
  counterclockwise sector from $e_{i+1}(v)$ to
  $e_{i-1}(v)$.
\end{itemize}
\end{definition}

\begin{figure}[!h]
\center
\includegraphics[scale=0.5]{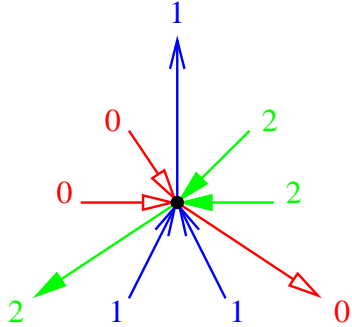}
\caption{The Schnyder property. The depicted correspondence between red, blue,
  green, 0, 1, 2, and the arrow shapes will be used through the paper.}
\label{fig:LSP}
\end{figure}

\begin{definition}[Schnyder wood]
\label{def:schnyder}
Given a planar triangulation $G$, a \emph{Schnyder wood} is an
orientation and coloring of the inner edges of $G$ with the colors
$0$, $1$, $2$ (edges are oriented in one direction only), where each
inner vertex $v$ satisfies the \emph{Schnyder property}.
\end{definition}

See Figure~\ref{fig:planar-triangulation} for an example of a Schnyder
wood.

\begin{figure}[!h]
\center
\includegraphics[scale=0.6]{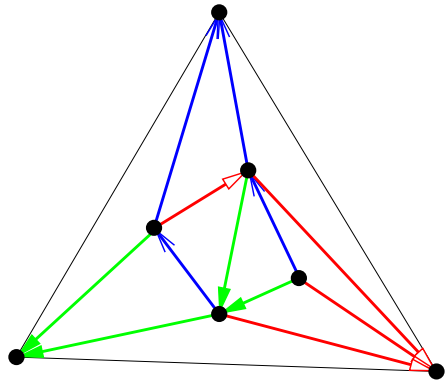}
\caption{Example of a Schnyder wood of a planar triangulation.}
\label{fig:planar-triangulation}
\end{figure}

Schnyder woods are today one of the main tools in the area of planar
graph representations. Among their most prominent applications are the
following: They provide a machinery to construct space-efficient
straight-line drawings~\cite{Sch90, Kan96,Fel01}, yield a
characterization of planar graphs via the dimension of their
vertex-edge incidence poset~\cite{Sch89,Fel01}, and are used to encode
triangulations~\cite{PS06,Bar12}. Further applications lie in
enumeration~\cite{Bon05}, representation by geometric
objects~\cite{FOR94, GLP11}, graph spanners~\cite{BGHI10}, etc.  The
richness of these applications has stimulated research towards
generalizing Schnyder woods to non planar graphs.

For higher genus triangulated surfaces, a generalization of Schnyder
woods has been proposed by Castelli Aleardi, Fusy and
Lewiner~\cite{CFL09}, with applications to encoding.  In this
definition, the simplicity and the symmetry of the original definition
of Schnyder woods are lost. Here we propose an alternative
generalization of Schnyder woods for higher genus that generalizes the
one proposed in~\cite{GL13} for the toroidal case.

A closed curve on a surface is \emph{contractible} if it can be
continuously transformed into a single point. Except if stated
otherwise, we consider  graphs embedded on orientable
surfaces such that they do not have contractible cycles of size $1$ or
$2$ (i.e. no contractible loops and no contractible double
edges). Note that this is a weaker assumption, than the graph being
\emph{simple}, i.e. not having \emph{any} cycles of size $1$ or $2$
(i.e. no loops and no multiple edges). A graph embedded on a surface
is called a \emph{map} on this surface if all its faces are
homeomorphic to open disks.  A map is a triangulation if all its faces
are triangles.

In this paper we consider finite maps. We denote by $n$ be the number
of vertices and $m$ the number of edges of a graph. Given a graph
embedded on a surface, we use $f$ for the number of faces.  Euler's
formula says that any map on an orientable surface of genus $g$
satisfies $n-m+f=2-2g$. In particular, the plane is the surface of
genus $0$, the torus the surface of genus $1$, the double torus the
surface of genus $2$, etc.  By Euler's formula, a triangulation of
genus $g$ has exactly $3n+6(g-1)$ edges. So having a generalization of
Schnyder woods in mind, for all $g\ge 2$ there are too many edges to
force all vertices to have outdegree exactly three. This problem can be
overcome by allowing vertices to fulfill the Schnyder property
``several times'', i.e. such vertices have outdegree 6, 9, etc. with
the color property of Figure~\ref{fig:LSP} repeated several times (see
Figure~\ref{fig:369}).

\begin{figure}[!h]
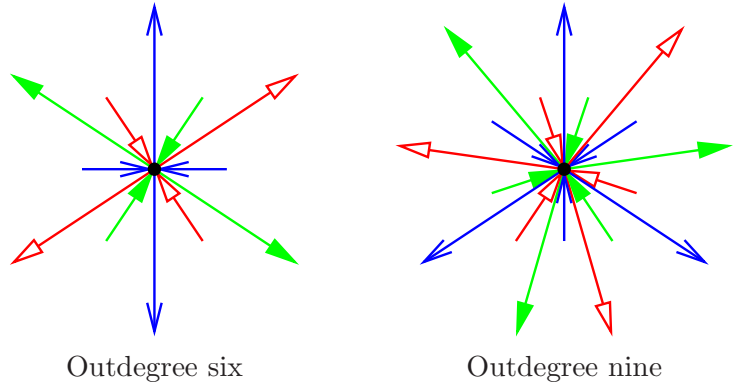

\center
\begin{tabular}{cc}
\includegraphics[scale=0.5]{type-2.eps} \ \ \ & \ \ \
\includegraphics[scale=0.5]{type-3.eps} \\
Outdegree six \ \ \ & \ \ \ Outdegree nine \\
\end{tabular}
\caption{The Schnyder property repeated several times around a
  vertex.}
\label{fig:369}
\end{figure}

Figure~\ref{fig:doubletorus} is an example of such a Schnyder wood on
a triangulation of the double torus. The double torus is represented
by a fundamental polygon -- an octagon. The sides of the octagon are
identified according to their labels. All the vertices of the
triangulation have outdegree three except two vertices, the circled ones,
that have outdegree six. Each of the latter appear twice in the
representation.

\begin{figure}[!h]
\center
\includegraphics[scale=0.3]{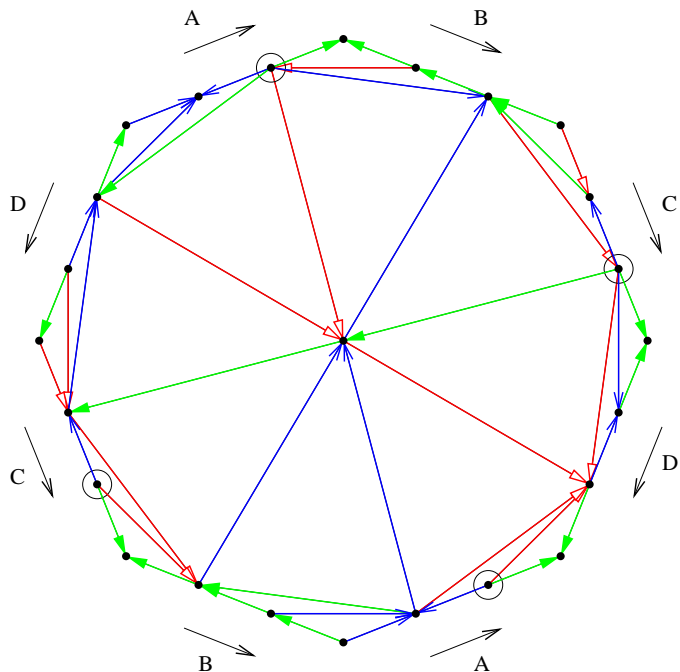}
\caption{A Schnyder wood of a triangulation of the double torus.}
\label{fig:doubletorus}
\end{figure}

In this paper we formalize this idea to obtain a concept of Schnyder
woods applicable to general maps (not only triangulations) on
arbitrary orientable surfaces. This is based on the definition of
Schnyder woods via angle labelings in
Section~\ref{sec:generalization}. We prove several basic properties of
these objects. While every map admits a ``trivial'' Schnyder wood, the
existence of a non-trivial one remains open but
leads to interesting conjectures.

By a result of De Fraysseix and Ossona de Mendez~\cite{FO01}, for any
planar triangulation there is a bijection between its Schnyder woods
and the orientations of its inner edges where every inner vertex has
outdegree three. Thus, any orientation with the proper outdegree
corresponds to a Schnyder wood and there is a unique way, up to
symmetry of the colors, to assign colors to the oriented edges in
order to fulfill the Schnyder property at every inner vertex. This is
not true in higher genus as already in the torus, there exist
orientations that do not correspond to any Schnyder wood (see
Figure~\ref{fig:orientation}).  In Section~\ref{sec:char}, we
characterize orientations that correspond to our generalization of
Schnyder woods.

\begin{figure}[!h]
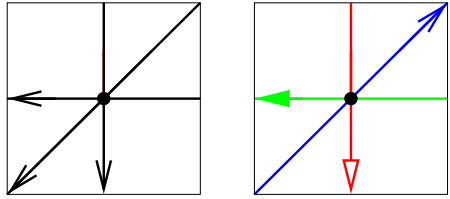

\center
\includegraphics[scale=0.5]{orientation.eps} 
\ \ \ \ 
\includegraphics[scale=0.5]{orientation-col.eps}
\caption{Two different orientations of a toroidal triangulation. Only
  the one on the right corresponds to a Schnyder wood.}
\label{fig:orientation}
\end{figure}

In Section~\ref{sec:stru}, we study the transformations between
Schnyder orientations. We obtain a partition of the set of Schnyder
woods into homology classes of orientations, each of these classes
being a distributive lattice.
This generalizes corresponding results obtained for the
plane by Ossona de Mendez~\cite{Oss94} and
Felsner~\cite{Fel04}. The particular properties of the minimal element of such a
lattice recently led to an optimal linear encoding method for toroidal triangulations by Despré, the first author, and the third author~\cite{DGL15}. 
This generalizes previous results of Poulalhon and Schaeffer for the plane~\cite{PS06}.

In Section~\ref{sec:toretri}, we focus on toroidal
triangulations. We use the characterization theorem of
Section~\ref{sec:char} to give a new proof of the existence of
Schnyder woods in this case. We show that the so-called ``crossing''
property allows to define a canonical lattice. Note that this special
lattice is the one used in~\cite{DGL15} to obtain a bijection. Finally
the results of the paper are illustrated by an example.

\section{Generalization of Schnyder woods}
\label{sec:generalization}

\subsection{Angle labelings}
\label{sec:anglelabeling}

Consider a map $G$ on an orientable surface. An \emph{angle labeling}
of $G$ is a labeling of the angles of $G$ (i.e. face corners of $G$)
in colors $0$, $1$, $2$.  More formally, we denote an angle labeling
by a function $\ell:\mc{A}\to \mb{Z}_3$, where $\mc{A}$ is the set of
angles of $G$.  Given an angle labeling, we define several properties
of vertices, faces and edges that generalize the notion of Schnyder
angle labeling in the planar case~\cite{Fel-book}.

Consider an angle labeling $\ell$ of $G$.
A vertex or a face $v$ is of \emph{type $k$}, for $k\geq 1$, if the
labels of the angles around $v$ form, in counterclockwise order, $3k$
nonempty intervals such that in the $j$-th interval all the angles
have color $(j \mod 3)$. A vertex or a face $v$ is of \emph{type
  $0$}, if the labels of the angles around $v$ are all of color $i$
for some $i$ in $\{0,1,2\}$. 

An edge $e$ is of \emph{type $1$ or $2$} if the labels of the four
angles incident to edge $e$ are, in clockwise order, $i-1$, $i$,
$i$, $i+1$ for some $i$ in $\{0,1,2\}$. The edge $e$ is of \emph{type
  $1$} if the two angles with the same color are incident to the same
extremity of $e$ and of \emph{type $2$} if the two angles are incident
to the same side of $e$. An edge $e$ is of
\emph{type $0$} if the labels of the four angles incident to edge $e$
are all $i$ for some $i$ in $\{0,1,2\}$ (See
Figure~\ref{fig:edgelabeling}).

If there exists a function $f:V\to \mathbb N$ such that every vertex
$v$ of $G$ is of type $f(v)$, we say that $\ell$ is $f$-{\sc
  vertex}. If we do not want to specify the function $f$, we simply say that $\ell$ is {\sc vertex}.
  We sometimes use the notation $K$-{\sc
  vertex} if the labeling is $f$-{\sc vertex} for a function $f$ with
$f(V)\subseteq K$. When $K=\{k\}$, i.e. $f$ is a constant function,
then we use the notation $k$-{\sc vertex} instead of $f$-{\sc
  vertex}. Similarly we define {\sc face}, $K$-{\sc face}, $k$-{\sc
  face}, {\sc edge}, $K$-{\sc edge}, $k$-{\sc edge}.

The following lemma expresses that property {\sc edge} is the central
notion here.  Properties $K$-{\sc vertex} and $K$-{\sc face} are used
later on to express additional requirements on the angle labelings
that are considered.

\begin{lemma}
\label{lem:EDGElabeling}
An {\sc edge} angle labeling is {\sc vertex} and {\sc face}.
\end{lemma}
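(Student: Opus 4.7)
The plan is to track, for each edge, the color change it produces at each incident vertex and face, to show that every such change lies in $\{0,+1\} \subseteq \mathbb{Z}_3$, and then to conclude by a cyclic summation. Concretely, around a vertex $v$ I list the angles $a_0, \ldots, a_{d-1}$ in counterclockwise order with edge $e_j$ separating $a_{j-1}$ from $a_j$, and set $t_j := \ell(a_j) - \ell(a_{j-1}) \in \mathbb{Z}_3$; around a face $F$ I do the same, traversing $\partial F$ with $F$ on the left. The key claim is: for every {\sc edge}-labeled edge $e$, every transition produced in this way lies in $\{0,+1\}$.

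I would prove the claim by a case analysis on the edge type. Type $0$ edges have four equal incident angles and hence contribute $0$ to every incident transition. For a type $1$ edge the two same-colored angles (both equal to some $c$) sit at one extremity, giving transition $0$ there; at the other extremity the two incident angles have colors $c-1$ and $c+1$, and matching the prescribed clockwise sequence $(i-1,i,i,i+1)$ around $e$ against the counterclockwise order around the vertex shows the transition is $+1$ (and not $-1$); on both adjacent faces the same computation yields $+1$. For a type $2$ edge the two same-colored angles lie on the same side of $e$, so the face on that side has transition $0$ while the opposite face has transition $+1$, and a similar check shows each incident vertex has transition $+1$. The delicate main technical point is the orientation bookkeeping: using orientability of the surface, all the ``$\pm 1$'' signs must actually come out to $+1$; the opposite sign would produce $-1 \equiv +2$, which is precisely what would prevent the cyclic pattern $0,1,2,0,1,2,\ldots$.

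Granting the claim, the conclusion is quick. Traversing a full cycle around $v$ returns $\ell$ to its starting value, so $\sum_j t_j \equiv 0 \pmod{3}$; since each $t_j \in \{0,1\}$, the number of $+1$'s is a multiple of $3$, say $3k$. If $k=0$ all angles around $v$ share a single color, so $v$ is of type $0$; otherwise the $+1$ transitions cut the cyclic sequence of angles into $3k$ nonempty arcs whose colors step through $\mathbb{Z}_3$ in order, so $v$ is of type $k$. The identical argument, applied to $\partial F$ traversed with $F$ on the left, shows every face has a well-defined type. Hence $\ell$ is both {\sc vertex} and {\sc face}.
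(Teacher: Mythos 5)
Your proof is correct and follows essentially the same approach as the paper's: the paper also observes that for two counterclockwise-consecutive angles $a,a'$ around a vertex or face, the {\sc edge} property forces $\ell(a')\in\{\ell(a),\,\ell(a)+1\}$, and then concludes by going once around the vertex or face. Your write-up simply makes explicit the case analysis behind that observation (which the paper delegates to Figure~\ref{fig:edgelabeling}) and the final cyclic-sum count giving $3k$ increments, hence type $k$.
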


\begin{proof}
  Let $\ell$ be an {\sc edge} angle labeling. Consider two
  counterclockwise consecutive angles $a,a'$ around a vertex (or a
  face). Property {\sc edge} implies that $\ell(a')=\ell(a)$ or
  $\ell(a')=\ell(a)+1$ (see Figure~\ref{fig:edgelabeling}). Thus by
  considering all the angles around a vertex or a face, it is clear
  that $\ell$ is also {\sc vertex} and {\sc face}.
\end{proof}

Thus we define a Schnyder labeling as follows:

\begin{definition}[Schnyder labeling]
\label{def:schnyderlabeling}
Given a map $G$ on an orientable surface, a \emph{Schnyder labeling} of $G$ is
an {\sc edge} angle labeling of $G$.
\end{definition}

Figure~\ref{fig:edgelabeling} shows how an {\sc edge} angle labeling
defines an orientation and coloring of the edges of the graph with
edges oriented in one direction or in two opposite directions.  

\begin{figure}[!h]
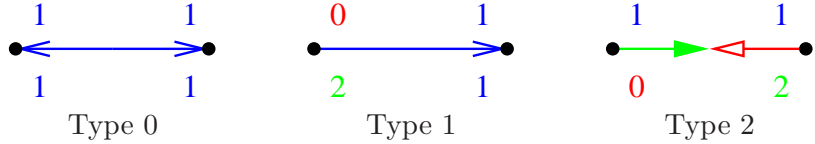

\center
\begin{tabular}{ccc}
\includegraphics[scale=0.5]{rule-edge-angle-0.eps} \ \ \  &  \ \ \
\includegraphics[scale=0.5]{rule-edge-angle-1.eps} \ \ \  &  \ \ \
\includegraphics[scale=0.5]{rule-edge-angle-2.eps} \\
Type 0  \ \ \  &  \ \ \ Type 1  \ \ \  &  \ \ \ Type 2  \\
\end{tabular}
\caption{Correspondence between {\sc edge} angle labelings and some
  bi-orientations and colorings of the edges.}
\label{fig:edgelabeling}
\end{figure}

In the next two sections, the correspondence from Figure~\ref{fig:edgelabeling}
is used to show that Schnyder labelings correspond to or generalize
previously defined Schnyder woods in the plane and in the torus.
Hence, they are a natural generalization of Schnyder woods for higher genus.

\subsection{Planar Schnyder woods}
\label{sec:plane}

Originally, Schnyder woods were defined only for planar
triangulations~\cite{Sch89}. Felsner~\cite{Fel01, Fel03} extended this
definition to planar maps. To do so he allowed edges to be
oriented in one direction or in two opposite directions (originally
only one direction was possible). The formal definition is the
following:

\begin{definition}[Planar Schnyder wood]
\label{def:felsner}
Given a planar map $G$. Let $x_0$, $x_1$, $x_2$ be three vertices
occurring in counterclockwise order on the outer face of $G$. The
\emph{suspension} $G^\sigma$ is obtained by attaching a half-edge that
reaches into the outer face to each of these special vertices.  A
\emph{planar Schnyder wood} rooted at $x_0$, $x_1$, $x_2$ is an
orientation and coloring of the edges of $G^\sigma$ with the colors
$0$, $1$, $2$, where every edge $e$ is oriented in one direction or in
two opposite directions (each direction having a distinct color and
being outgoing), satisfying the following conditions:

\begin{itemize}
\item Every vertex satisfies the Schnyder property and the
  half-edge at $x_i$ is directed outward and colored $i$.
\item There is no interior face whose boundary is a
monochromatic cycle.
\end{itemize}
\end{definition}

See Figure~\ref{fig:planar-felsner} for two examples of
planar Schnyder woods.

\begin{figure}[!h]
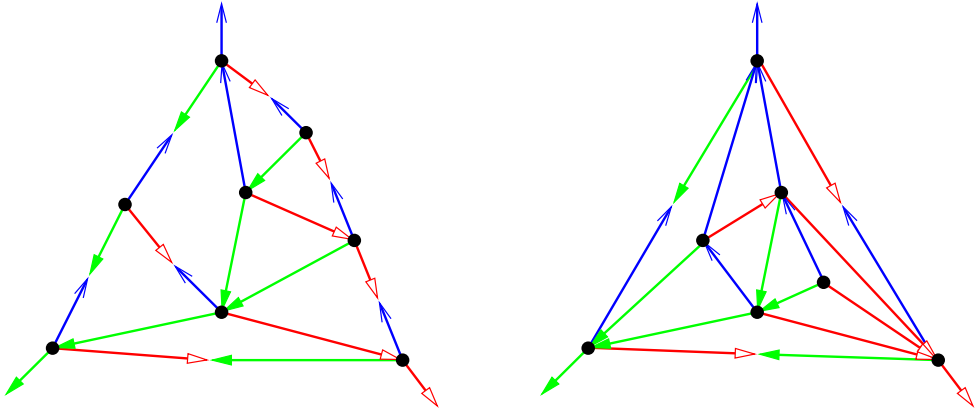

\center
\includegraphics[scale=0.5]{planar-3connected.eps}
\hspace{1cm}
\includegraphics[scale=0.5]{planar-triangulation-felsner.eps}
\caption{A planar Schnyder wood of a planar map and of a planar
  triangulation.}
\label{fig:planar-felsner}
\end{figure}

The correspondence of Figure~\ref{fig:edgelabeling} gives the
following bijection, as proved by Felsner~\cite{Fel03}:

\begin{proposition}[\cite{Fel03}]
\label{prop:bijfelsner}
  If $G$ is a planar map and $x_0$, $x_1$, $x_2$ are three vertices
  occurring in counterclockwise order on the outer face of $G$, then
  the planar Schnyder woods of $G^\sigma$ are in bijection with the
  \{1,2\}-{\sc edge}, 1-{\sc vertex}, 1-{\sc face} angle labelings of
  $G^\sigma$ (with the outer face being 1-{\sc face} but in clockwise
  order).
\end{proposition}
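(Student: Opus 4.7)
The plan is to establish the bijection via the edge-local correspondence of Figure~\ref{fig:edgelabeling}, which converts angle labels on the four corners of an edge into an orientation and colouring of that edge, and vice versa. Most of the work is to show that the global side conditions on each side of the bijection translate into each other.

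First, starting from a $\{1,2\}$-{\sc edge}, $1$-{\sc vertex}, $1$-{\sc face} labeling of $G^\sigma$ (with the outer face $1$-{\sc face} in clockwise order), I apply Figure~\ref{fig:edgelabeling} to each edge. This yields an orientation (in one or two directions) and a colouring. At any interior vertex $v$, the $1$-{\sc vertex} hypothesis gives exactly three counterclockwise intervals of angle colours $0,1,2$. Reading Figure~\ref{fig:edgelabeling} locally, an edge between two consecutive angles of the same colour $i$ is incoming at $v$ in colour $i$, while an edge at the boundary between an $i$-interval and an $(i+1)$-interval is outgoing at $v$ in a determined colour. One thus obtains exactly three outgoing edges $e_0(v), e_1(v), e_2(v)$ in counterclockwise order, and every incoming edge of colour $i$ lies in the counterclockwise sector from $e_{i+1}(v)$ to $e_{i-1}(v)$. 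This is precisely the Schnyder property. The $1$-{\sc face} condition on an interior face yields three non-empty counterclockwise colour intervals along its boundary, which prevents it from being monochromatic. The clockwise $1$-{\sc face} condition on the outer face forces the three colour transitions to occur exactly at the three half-edges, which are then outgoing and coloured $0,1,2$, with $x_i$ receiving colour $i$.

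Conversely, given a planar Schnyder wood on $G^\sigma$, I colour each angle by the following local rule: the three outgoing edges at $v$ split its star into three counterclockwise sectors, and all angles in a sector are given the uniform colour prescribed by Figure~\ref{fig:edgelabeling}. Consistency across each edge is again provided by the figure, so $\ell$ is a well-defined angle labeling. By construction every edge is type $1$ or $2$ (a type-$0$ edge would require both endpoints to be non-outgoing), and every interior vertex is type $1$. The outward half-edges at $x_0,x_1,x_2$ furnish the clockwise $1$-{\sc face} condition on the outer face. The absence of monochromatic interior face boundaries yields $1$-{\sc face} for every interior face: a colour transition along the boundary walk happens exactly at a corner where two sectors of different colours meet, and in the plane, where face boundaries are contractible, a local winding count using Figure~\ref{fig:edgelabeling} forces the number of counterclockwise intervals to equal three rather than some larger multiple.

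Both constructions are edge-local and use Figure~\ref{fig:edgelabeling} in opposite directions, hence they are mutually inverse. The main obstacle is the last step in the converse direction: excluding interior faces with $6$, $9$, $\ldots$ colour intervals. This is where the no-monochromatic-face clause of Definition~\ref{def:felsner} is genuinely needed, together with the contractibility of face boundaries in the plane, which rules out the higher-type configurations that do occur on surfaces of positive genus.
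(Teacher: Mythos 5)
The paper does not actually prove Proposition~\ref{prop:bijfelsner}; it is cited directly from Felsner~\cite{Fel03}, so there is no ``paper's own proof'' to compare against. Judged on its own terms, your attempt correctly sets up the edge-local dictionary of Figure~\ref{fig:edgelabeling} and correctly locates the non-trivial content in the global side conditions. However, there is a genuine gap precisely at the step you flag as ``the main obstacle.''

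In the converse direction you must show that every interior face is of type exactly $1$. From Lemma~\ref{lem:EDGElabeling} one only gets that each face is of some type $k\ge 0$, and the no-monochromatic-boundary clause can be used to push this to $k\ge 1$, but ruling out $k\ge 2$ does not follow from ``a local winding count using Figure~\ref{fig:edgelabeling}'' together with contractibility. Around a contractible face boundary the label increases by $0$ or $1$ at each corner and the total increase is $3k$; nothing local forbids $k=2$ or more, since type-$k$ faces with $k\ge 2$ are perfectly consistent with the {\sc edge} condition (this is exactly the phenomenon the higher-genus part of the paper is built around, cf.\ Figure~\ref{fig:369} and the discussion around Figure~\ref{fig:annoying-maps}). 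What actually rules it out in the plane is a \emph{global} count: every type-$1$ or type-$2$ edge produces exactly three colour changes, every vertex of type $k_v$ produces $3k_v$, every face of type $k_F$ produces $3k_F$, and combining these with Euler's formula for $G^\sigma$ forces $k_v=k_F=1$ everywhere (this is in spirit the same argument the paper runs in reverse for the ``annoying maps'' to show type-$0$ edges are unavoidable for $g\ge2$). Alternatively one can argue structurally via the acyclicity of $T_i\cup T_{i-1}^{-1}\cup T_{i+1}^{-1}$, but either way a global input is needed, and your write-up does not supply it. A secondary, smaller point: in the converse direction the claim that ``consistency across each edge is again provided by the figure'' deserves a short check that the local sector-colouring rules at the two endpoints of an edge actually combine into one of the type-$1$ or type-$2$ patterns; this is true, but it is not a tautology.
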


Felsner~\cite{Fel01} and Miller~\cite{Mil02} characterized the planar
maps that admit a planar Schnyder wood. Namely, they are the
internally 3-connected maps (i.e. those with three vertices on the
outer face such that the graph obtained from $G$ by adding a vertex
adjacent to the three vertices is 3-connected).

\subsection{Generalized Schnyder woods}
\label{sec:generalSW}

Any map (on any orientable surface) admits a trivial {\sc edge} angle
labeling: the one with all angles labeled $i$ (and thus all edges,
vertices, and faces are of type 0).  A natural non-trivial case, that is also
symmetric for the duality, is to consider {\sc edge},
$\mathbb{N}^*$-{\sc vertex}, $\mathbb{N}^*$-{\sc face} angle labelings
of general maps (where $\mathbb{N}^*=\mathbb{N}\setminus\{0\}$).  In
planar Schnyder woods only type 1 and type 2 edges are used. Here we
allow type 0 edges because they seem unavoidable for some maps (see
discussion below).  This suggests the following definition of Schnyder
woods in higher genus.

First, the generalization of the Schnyder property is the following:

\begin{definition}[Generalized Schnyder property]
\label{def:generalschnyderproperty}
Given a map $G$ on a genus $g\geq 1$ orientable surface, a vertex $v$
and an orientation and coloring of the edges incident to $v$ with the
colors $0$, $1$, $2$, we say that  $v$ satisfies the
\emph{generalized Schnyder property} (see Figure~\ref{fig:369}), if
$v$ satisfies the following local property for $k\geq 1$:

\begin{itemize}
\item Vertex $v$ has out-degree $3k$.
\item The edges $e_0(v),\ldots, e_{3k-1}(v)$ leaving $v$ in
  counterclockwise order are such that $e_j(v)$ has color
  $j \bmod 3$.
\item Each edge entering $v$ in color $i$ enters $v$ in a
  counterclockwise sector from $e_{j}(v)$ to $e_{j+1}(v)$ 
  with
  $i\not\equiv j\ (\bmod 3)$ and $i \not\equiv j+1 \ (\bmod 3)$.
\end{itemize}
\end{definition}

Then, the generalization of Schnyder woods is the
following (where the three types of edges depicted on
Figure~\ref{fig:edgelabeling} are allowed):

\begin{definition}[Generalized Schnyder wood]
\label{def:highgenus}
  Given a map $G$ on a genus $g\geq 1$ orientable surface, a
  \emph{generalized Schnyder wood} of $G$ is an orientation and
  coloring of the edges of $G$ with the colors $0$, $1$, $2$, where
  every edge is oriented in one direction or in two opposite
  directions (each direction having a distinct color and being
  outgoing, or each direction having the same color and being
  incoming), satisfying the following conditions:
 
\begin{itemize}
\item Every vertex satisfies the generalized Schnyder property.
\item There is no face whose boundary is a
monochromatic cycle.
\end{itemize}
\end{definition}

When there is no ambiguity we call ``generalized Schnyder woods'' just ``Schnyder woods''.
See Figure~\ref{fig:tore-primal} for two examples of Schnyder woods in
the torus.

\begin{figure}[h!]
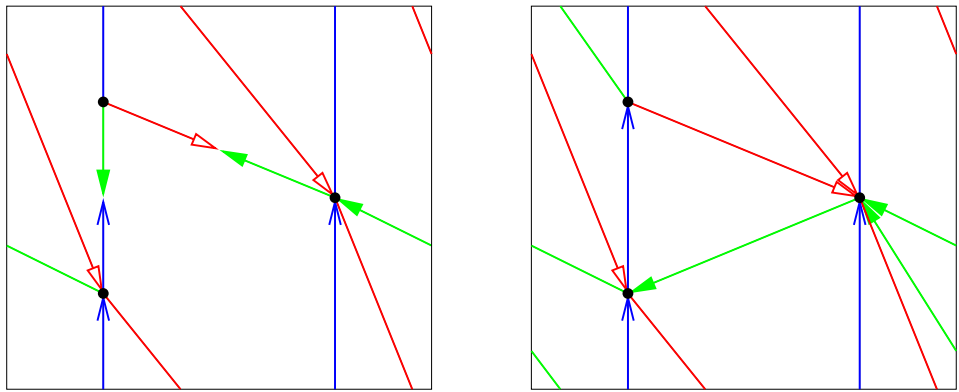

\center
\includegraphics[scale=0.4]{tore-primal.eps}
\hspace{1cm}
\includegraphics[scale=0.4]{tore-tri.eps}
\caption{A Schnyder wood of a toroidal map and of a toroidal
  triangulation.}
\label{fig:tore-primal}
\end{figure}

The first and third author already defined Schnyder woods for toroidal
maps in~\cite{GL13}. Our definition is broader as in~\cite{GL13},
there is also a (global) condition on the way monochromatic cycles
intersect. See Section~\ref{sec:crossing} for a discussion on this
property.

Figure~\ref{fig:doubletorus} is an example of a Schnyder wood on
a triangulation of the double torus. 
The correspondence from Figure~\ref{fig:edgelabeling} immediately gives the
following bijection whose proof is omitted.

\begin{proposition}
\label{prop:bijtore}
If $G$ is a map on a genus $g\geq 1$ orientable surface, then the
generalized Schnyder woods of $G$ are in bijection with the {\sc
  edge}, $\mathbb{N}^*$-{\sc vertex}, $\mathbb{N}^*$-{\sc face} angle
labelings of $G$.
\end{proposition}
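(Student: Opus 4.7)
The plan is to show that the bijection is given by applying the local correspondence of Figure~\ref{fig:edgelabeling} independently at every edge; this mirrors the proof of Proposition~\ref{prop:bijfelsner} for the planar case. The two extra features to handle in higher genus are type-0 edges and vertex/face types larger than~$1$, and the main task is to verify their compatibility with the generalized Schnyder property of Definition~\ref{def:generalschnyderproperty}.

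For the forward direction, fix an {\sc edge}, $\mathbb{N}^*$-{\sc vertex}, $\mathbb{N}^*$-{\sc face} angle labeling $\ell$ and orient and color each edge according to its type via Figure~\ref{fig:edgelabeling}. Let $v$ be a vertex of type $k \geq 1$: the angles around $v$ split, in counterclockwise order, into $3k$ nonempty monochromatic blocks cycling through the colors $0,1,2$. Inspection of the three local rules shows that the transition from a block of color $i-1$ to the next block of color $i$ occurs exactly across an outgoing edge of color $i$ at $v$, yielding outdegree $3k$ with the required cyclic pattern of outgoing colors. An edge incident to $v$ whose two $v$-side angles lie strictly inside a single block of color $i$ is, by a check on the three edge types, either type~0 or type~2 and is incoming at $v$ in a color distinct from $i-1$ and $i$; this is precisely the sector restriction of the generalized Schnyder property. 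Finally, a face $F$ whose boundary is a monochromatic cycle in color $i$ would, by Figure~\ref{fig:edgelabeling}, force every angle of $F$ to be colored $i$, making $F$ of type~$0$ and contradicting $\mathbb{N}^*$-{\sc face}.

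The reverse direction reads the same correspondence backwards: given a generalized Schnyder wood, the rules of Figure~\ref{fig:edgelabeling} assign each angle a color in a locally consistent way, so the resulting labeling is {\sc edge}. By Lemma~\ref{lem:EDGElabeling} it is automatically {\sc vertex} and {\sc face}; the generalized Schnyder property at a vertex $v$ of outdegree $3k$ forces the outgoing edges to split the cyclic neighborhood of $v$ into $3k$ sectors within which all angles share a common color, and the in-edge sector restriction forces these colors to cycle through $0,1,2$, so $v$ has type $k\geq 1$ and the labeling is $\mathbb{N}^*$-{\sc vertex}. Dually, if some face were of type~$0$, all its boundary angles would share a single color $i$, and the edge rules would force every boundary edge to be either type~0 or type~2 oriented so that its color-$i$ angle pair lies inside the face, producing a monochromatic boundary cycle in color~$i$; since such cycles are forbidden, the labeling is $\mathbb{N}^*$-{\sc face}. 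The two constructions are inverse to each other by construction, which gives the bijection.

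The main obstacle is the case analysis involving type~0 edges, which are specific to the higher-genus setting and absent from the planar version: one must confirm that a type~0 edge incident to $v$ always lies strictly inside a monochromatic block at $v$ and is compatible with the incoming-sector rule of Definition~\ref{def:generalschnyderproperty}. Once this local check is handled, the rest reduces to bookkeeping essentially identical to Felsner's planar argument.
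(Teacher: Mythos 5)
Your overall plan — apply the local correspondence of Figure~\ref{fig:edgelabeling} edge by edge and verify the generalized Schnyder conditions, mirroring Felsner's planar argument — is the right one and is exactly what the paper has in mind in omitting the proof. However, several of your local checks are stated incorrectly, and your handling of the $\mathbb{N}^*$-{\sc face}/monochromatic-boundary equivalence has a genuine gap.

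On the vertex side, with the convention of Figure~\ref{fig:edgelabeling} the angle labels in the sector between consecutive outgoing edges $e_j(v)$ and $e_{j+1}(v)$ all equal $j-1$, so the transition from a block labeled $i-1$ to one labeled $i$ occurs when crossing $e_{i+1}(v)$ (color $i+1$), not an edge of color $i$. More importantly, an edge incident to $v$ whose two $v$-side angles lie inside a single block of color $i$ is of type~$0$ or of type~$1$ with the matched pair at $v$, \emph{not} type~$2$: a type-$2$ edge is bi-oriented with both directions outgoing, hence is never "incoming at $v$", and by definition its two same-colored angles lie on the same \emph{side} of the edge, so its two $v$-side angles differ. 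Such an edge enters $v$ in color exactly $i$ (equivalently, distinct from $i+1$ and $i-1$), not "distinct from $i-1$ and $i$" as you wrote. These slips happen to not break the conclusion (the sector restriction is indeed verified), but the reasoning as written is wrong.

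The real gap is in the face step. You correctly observe that a type-$0$ face $F$ (all angles labeled $i$) forces every boundary edge to be type~$0$ or type~$2$ with the color-$i$ pair on the $F$ side. But you then assert this "produces a monochromatic boundary cycle in color $i$." A type-$0$ boundary edge is indeed colored $i$; a type-$2$ boundary edge with its $i$-pair on the $F$ side, however, is bi-oriented with colors $i+1$ (in the counterclockwise traversal of $F$) and $i-1$ (clockwise), not $i$. So a type-$0$ face all of whose boundary edges are type~$2$ gives a monochromatic cycle in color $i+1$, not $i$; and a type-$0$ face whose boundary \emph{mixes} type-$0$ and type-$2$ edges carries edge-colors $i$ and $i\pm1$ simultaneously and is not monochromatic in any single color. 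Consequently, the "no monochromatic face boundary" clause of Definition~\ref{def:highgenus} does not obviously exclude such a face, so the claimed equivalence between the two conditions is not established. To close the proof you would need to show that a type-$0$ face cannot have a mixed boundary (or to argue that such a configuration is ruled out for some other reason). This is the only genuinely non-trivial step in the correspondence, and it is precisely the one your proposal glosses over.
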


The examples in Figures~\ref{fig:tore-primal}
and~\ref{fig:doubletorus} do not have type 0 edges. However, for all
$g\ge 2$, there are genus $g$ maps, with vertex degrees and face
degrees at most five.  Figure~\ref{fig:annoying-maps} depicts how to
construct such maps, for all $g\ge 2$.  For these maps, type 0 edges
are unavoidable.  Indeed, take such a map with an angle labeling that
has only type 1 and type 2 edges. Around a type 1 or type 2 edge there
are exactly three changes of labels, so in total there are exactly $3m$
such changes. As vertices and faces have degree at most five, they are
either of type 0 or 1, hence the number of label changes should be at
most $3n +3f$. Thus, $3m \le 3n + 3f$, which contradicts Euler's
formula for $g\ge 2$.
Furthermore, note that the maps described in
Figure~\ref{fig:annoying-maps}, as well as their dual maps, are 3-connected.
Actually they can be modified to be 4-connected and of arbitrary
large face-width.

\begin{figure}[!h]
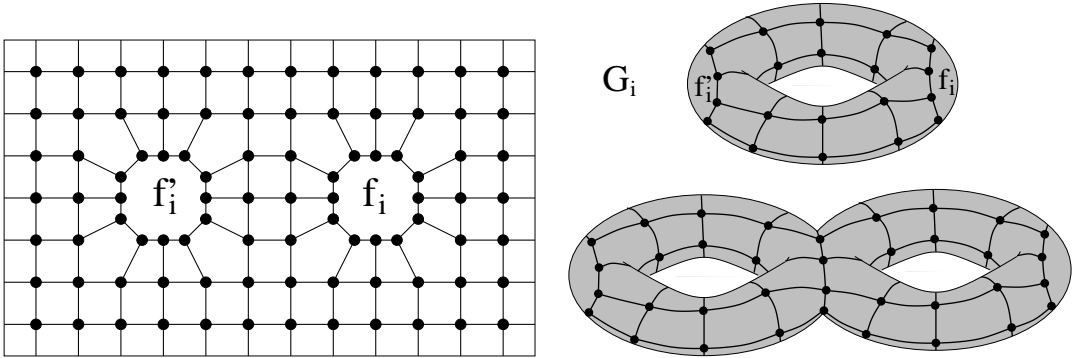

\center
\begin{tabular}{cc}
\includegraphics[scale=0.22]{annoying-map11.eps} &
\includegraphics[scale=0.28]{annoying-map2.eps}
\end{tabular}
\caption{A toroidal map $G_i$ with two distinguished faces, $f_i$ and
  $f'_i$. Take $g$ copies $G_i$ with $1\le i\le g$ and glue them by
  identifying $f_i$ and $f'_{i+1}$ for all $1\le i<g$. Faces $f_1$ and
$f'_g$ are filled to have only vertices and faces of degree at most five.}
\label{fig:annoying-maps}
\end{figure}

An orientation and coloring of the edges corresponding to an {\sc
  edge}, $\mathbb{N}^*$-{\sc vertex}, $\mathbb{N}^*$-{\sc face} angle
labelings is given for the double-toroidal map of
Figure~\ref{fig:annoying-maps-small}. It contains two edges of type 0
and it is $1$-{\sc vertex} and $1$-{\sc face}. Similarly, one can
obtain {\sc edge}, $\mathbb{N}^*$-{\sc vertex}, $\mathbb{N}^*$-{\sc
  face} angle labelings for any map in Figure~\ref{fig:annoying-maps}.

\begin{figure}[!h]
\center
\includegraphics[scale=0.3]{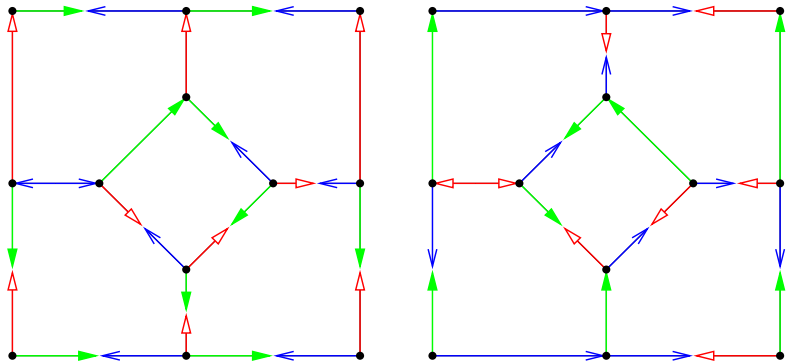}
\caption{An orientation and coloring of the edges of a double-toroidal
  map that correspond to an {\sc edge}, $\mathbb{N}^*$-{\sc vertex},
  $\mathbb{N}^*$-{\sc face} angle labeling. Here, the two parts are
  toroidal and the two central faces are identified (by preserving the
  colors) to obtain a double-toroidal map.}
\label{fig:annoying-maps-small}
\end{figure}

\subsection{Schnyder woods in the universal cover}
\label{sec:universalcover}
In this section we prove some properties of Schnyder woods in the
universal cover.  We refer to~\cite{Massey} for the general theory of
universal covers.  The \emph{universal cover} of the torus (resp. an
orientable surface of genus $g\geq 2$) is a surjective mapping $p$
from the plane (resp. the open unit disk) to the surface that is locally a
homeomorphism.  The universal cover of the torus is obtained by
replicating a flat representation of the torus to tile the plane.
Figure~\ref{fig:universalcover} shows how to obtain the universal
cover of the double torus.  The key property is that a closed curve on
the surface corresponds to a closed curve in the universal cover if
and only if it is contractible.

\begin{figure}[!h]
\center
\includegraphics[scale=1.2]{universalcover-base.eps}
\hspace{2em}
\includegraphics[scale=0.75]{tile.eps}
\caption{Canonical representation and universal cover of the double torus
  (source : Yann Ollivier
  \texttt{http://www.yann-ollivier.org/maths/primer.php}).}
\label{fig:universalcover}
\end{figure}

Universal covers can be used to represent a map on an orientable surface
as an infinite planar map. Any property of the map can be lifted to
its universal cover, as long as it is defined locally. Thus universal
covers are an interesting tool for the study of Schnyder labelings
since all the definitions we have given so far are purely local.

Consider a map $G$ on a genus $g\geq 1$ orientable surface.  Let
$G^\infty$ be the infinite planar map drawn on the universal cover and
defined by $p^{-1}(G)$.

We need the following general lemma concerning universal covers:

\begin{lemma}
\label{lem:finitecc}
Suppose that for a finite set of vertices $X$ of $G^\infty$, the graph
$G^\infty\setminus X$ is not connected. Then $G^\infty\setminus X$ has
a finite connected component.
\end{lemma}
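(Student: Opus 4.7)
The plan is to enclose $X$ in a large closed topological disk $D$ in the universal cover $\tilde S$ and show that the induced subgraph on the vertices outside $D$ is connected; any component of $G^\infty\setminus X$ disjoint from this piece will then be contained in $D$ and therefore finite. First note that $\tilde S$ is $\mb R^2$ (torus case) or the open unit disk (genus $\geq 2$), so the complement of any closed topological disk in $\tilde S$ is connected. Moreover $p^{-1}(V(G))$ is a discrete subset of $\tilde S$ and every face of $G^\infty$ is an open disk with compact closure, being a lift of a disk-face of the finite map $G$. Using discreteness of the vertices and local finiteness of $G^\infty$, I would choose $D$ in general position with $X\subset \text{int}(D)$, with $\partial D$ avoiding all vertices of $G^\infty$, crossing only finitely many edges, and entering each face it meets in a single arc. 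By compactness of $D$ it then contains only finitely many vertices of $G^\infty$.

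The main step is to prove that the induced subgraph $G_{\text{ext}}$ of $G^\infty$ on the vertices outside $D$ is connected. Let $f_1,\dots,f_r$ be the faces that $\partial D$ meets, in the cyclic order of visits. For each $f_i$, the arc $\partial D\cap f_i$ divides the disk $f_i$ into an inside piece and an outside piece, and correspondingly splits the boundary cycle $\partial f_i$ into two subpaths; the outside one, call it $\alpha_i$, is a path in $G_{\text{ext}}$ between the exterior endpoints of the two edges along which $\partial D$ enters and leaves $f_i$. Consecutive faces $f_i, f_{i+1}$ share the crossing edge of $\partial D$ that lies between them, so $\alpha_i$ and $\alpha_{i+1}$ share the exterior endpoint of that edge; hence $\alpha_1\cup\cdots\cup\alpha_r$ is connected in $G_{\text{ext}}$. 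For an arbitrary exterior vertex $v$, take any path in $G^\infty$ from $v$ to a vertex of $D$ and let $v'$ be the last exterior vertex on this path. The edge from $v'$ to its successor crosses $\partial D$, so $v'$ lies on some $\alpha_i$, while the sub-path from $v$ to $v'$ is entirely in $G_{\text{ext}}$. Thus $v$ is $G_{\text{ext}}$-connected to $\bigcup_i \alpha_i$.

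Since $X\subset\text{int}(D)$ we have $G_{\text{ext}}\subseteq G^\infty\setminus X$, so by the previous step $G_{\text{ext}}$ lies inside a single connected component $C_\infty$ of $G^\infty\setminus X$. Any other component $C'$ is disjoint from $G_{\text{ext}}$, hence contains no vertex outside $D$, so $C'$ is contained in $D$ and is therefore finite. The main obstacle is the general-position choice of $\partial D$ together with the boundary-arc bookkeeping for the faces $f_i$; once this is in place, the remainder follows cleanly from the simply-connectedness of $\tilde S$ and the local finiteness of $G^\infty$.
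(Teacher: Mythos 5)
Your proof takes a genuinely different, more constructive route than the paper's. The paper argues by contradiction: if $G^\infty\setminus X$ had no finite component, it would have a face bounded by infinitely many vertices (because all components, being infinite and locally finite, are unbounded, and a bounded face would trap one of them); re-inserting the finitely many bounded-degree vertices of $X$ subdivides that face into finitely many pieces, at least one still infinite, contradicting that every face of $G^\infty$ is a lift of a finite disk-face of $G$. You instead enclose $X$ in a compact disk $D$, show the induced subgraph on the exterior vertices is connected via the chain of boundary arcs $\alpha_i$, and conclude that any component missing this connected exterior is trapped in $D$ and hence finite. Both arguments exploit the same two facts (faces of $G^\infty$ have compact closure; the vertex set of $G^\infty$ is locally finite), but you use them to build an explicit separating annulus of exterior vertices rather than to derive an infinite face. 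What your approach buys is directness and no appeal to a pigeonhole over sub-faces; what it costs is the general-position bookkeeping that you flag yourself: the existence of a disk $D\supset X$ whose boundary avoids vertices, crosses finitely many edges transversally, and meets each face in a single arc. That last requirement is not automatic from genericity and is the one real gap. A clean way to close it is to replace the geometric disk by a combinatorial one: choose a finite set of closed faces of $G^\infty$ whose union is a topological disk containing $X$ in its interior; then $\partial D$ is itself a cycle of $G^\infty$, every path into the interior must pass through a vertex of $\partial D$, and the exterior-connectedness argument becomes purely graph-theoretic with no arc bookkeeping. With that adjustment your proof is complete and correct.
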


\begin{proof}
  Suppose the lemma is false and $G^\infty\setminus X$ is not
  connected and has no finite component. Then it has a face bounded by
  an infinite number of vertices. As $G$ is finite, the vertices of
  $G^\infty$ have bounded degree. Putting back the vertices of $X$, a
  face bounded by an infinite number of vertices would remain.  The
  border of this face does not correspond to a contractible cycle of
  $G$, a contradiction with $G$ being a map.
\end{proof}

Recall that a graph is \emph{$k$-connected} if it has at least $k+1$
vertices and if it remains connected after removing any $k-1$ vertices.
Extending the notion of essentially 2-connectedness defined in
\cite{MR98} for the toroidal case, we say that $G$ is
\emph{essentially $k$-connected} if $G^\infty$ is $k$-connected. Note
that the notion of being essentially $k$-connected is different from $G$
being $k$-connected. There is no implications in any direction. But note
that since $G$ is a map, it is essentially $1$-connected.

Suppose now that $G$ is given with a Schnyder wood (i.e. an {\sc
  edge}, $\mathbb{N}^*$-{\sc vertex}, $\mathbb{N}^*$-{\sc face} angle
labeling by Proposition~\ref{prop:bijtore}).  Consider the orientation
and coloring of the edges of $G^\infty$ corresponding to the Schnyder
wood of $G$.

Let $G^\infty_i$ be the directed graph induced by the edges of color
$i$ of $G^\infty$. This definition includes edges that are
half-colored $i$, and in this case, the edges get only the direction
corresponding to color $i$.  The graph $(G^\infty_i)^{-1}$ is the
graph obtained from $G^\infty_i$ by reversing all its edges.  The
graph $G^\infty_i\cup (G^\infty_{i-1})^{-1}\cup (G^\infty_{i+1})^{-1}$
is obtained from the graph $G$ by orienting edges in one or two
directions depending on whether this orientation is present in
$G^\infty_i$, $(G^\infty_{i-1})^{-1}$ or $(G^\infty_{i+1})^{-1}$.
Similarly to what happens for planar Schnyder woods, we have the
following:

\begin{lemma}
  \label{lem:nodirectedcycle}
The graph $G^\infty_i\cup (G^\infty_{i-1})^{-1}\cup
(G^\infty_{i+1})^{-1}$ does not contain directed cycle.
\end{lemma}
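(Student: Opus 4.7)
\noindent\emph{Plan.} I would argue by contradiction. Suppose $H := G^\infty_i \cup (G^\infty_{i-1})^{-1} \cup (G^\infty_{i+1})^{-1}$ contains a directed cycle, and let $C$ be such a cycle with fewest edges; then $C$ is simple. Since $G^\infty$ is embedded in a simply connected surface (the plane if $g = 1$, the open disk if $g \geq 2$), $C$ separates the ambient space into two regions, and Lemma~\ref{lem:finitecc} guarantees that one of them, call it $D$, is a finite topological disk. I orient $C$ counterclockwise around $D$.

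The next step is a local analysis at each vertex $v \in V(C)$. Writing $k_v$ for the type of $v$, the generalized Schnyder property combined with the edge-orientation correspondence of Figure~\ref{fig:edgelabeling} shows that the $6k_v$ angular positions around $v$ (the $3k_v$ outgoing half-edges $e_0(v), \ldots, e_{3k_v-1}(v)$ and the $3k_v$ sectors between them) decompose, in ccw order, into $k_v$ alternating blocks of three consecutive outgoing positions in $H$ (``OUT-blocks'') and $k_v$ blocks of three consecutive incoming positions in $H$ (``IN-blocks''). In particular, each OUT-to-IN transition in $H$ at $v$ sits precisely at an outgoing color-$(i+1)$ half-edge of $v$, and each IN-to-OUT transition at an outgoing color-$(i-1)$ half-edge. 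Because $C$ is a directed cycle in $H$, the in-edge $e^-(v) \in C$ lies in an IN-block and the out-edge $e^+(v) \in C$ lies in an OUT-block; a direct count then shows that the ccw interior arc from $e^+(v)$ to $e^-(v)$ contains exactly one more OUT-to-IN than IN-to-OUT transition.

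Summing these local contributions over $v \in V(C)$ gives $|V(C)|$ as the signed count of transitions along the interior arcs. On the other hand, using Euler's formula $n_D - m_D + f_D = 1$ for the disk $D$ together with the $\mathbb{N}^*$-{\sc vertex} and $\mathbb{N}^*$-{\sc face} properties (Proposition~\ref{prop:bijtore}) and the per-edge structure of Figure~\ref{fig:edgelabeling}, I would re-evaluate the same signed count face by face inside $D$ and show that it must vanish, which contradicts the first evaluation and so rules out $C$.

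The main obstacle is precisely this global re-evaluation. In Felsner's planar Schnyder wood setting only type-$1$ vertices and type-$1$/type-$2$ edges appear, and the analogous count is standard; here, type-$0$ edges (which contribute no angular transition but still occupy angular positions) and vertices or faces of arbitrary type $k \geq 2$ (producing $k$ alternating OUT- and IN-blocks rather than a single one) make the bookkeeping significantly more intricate. A convenient way to organize the argument is to recast the signed transition count as a closed discrete $1$-form on $G^\infty$---closedness being verifiable face by face from the {\sc face} property---so that its integral around any simple closed curve in $G^\infty$, and in particular around $C$, must be $0$, contradicting the $|V(C)|$ obtained from the interior-arc count.
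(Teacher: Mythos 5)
Your proposal takes a genuinely different route from the paper, and as written it contains a real gap, not merely an omission of routine detail.

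The paper's argument is not a counting argument. It chooses $C$ to enclose the \emph{fewest faces}, then shows that such a minimal directed cycle must simultaneously be a directed monochromatic cycle of color $i$ (ccw) and of color $i+1$ (cw), using the fact that every vertex inside or on $D$ has an outgoing edge of each color pointing into $D$. The ``no monochromatic face'' hypothesis then forces $D$ to contain more than one face, so there is a vertex $v$ with an interior outgoing edge; following the monochromatic walks of colors $i$ and $i+1$ from $v$ until they re-enter $C$ produces a new directed cycle enclosing strictly fewer faces, a contradiction. The global hypothesis about faces enters at a single, visible point.

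Your plan has a structural problem. The signed transition count you define lives on \emph{angles} (face corners), and the two evaluations you want to play off against each other --- per vertex over interior arcs, and ``face by face inside $D$'' --- sum exactly the same set of angles, namely those incident to faces lying in $D$. Regrouping identical summands by vertex or by face cannot change the total, so both evaluations give $|V(C)|$, not $0$; the contradiction you announce does not arise in the form stated. The $1$-form reformulation you gesture at is a genuinely different object: you would have to \emph{construct} an edge-valued form whose curl per face vanishes and whose boundary sum recovers $|V(C)|$, and you have not done so (indeed you flag this as ``the main obstacle''). That construction is also exactly where the $\mathbb{N}^*$-{\sc face} hypothesis must do its work: the lemma is simply false without it, since a face whose boundary is a monochromatic cycle of color $i$, $i-1$, or $i+1$ already yields a directed cycle in $H$. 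Nothing in your draft lets this hypothesis bite. So the proposal is incomplete as an argument and, taken literally (``the face-by-face count vanishes''), inconsistent with your own first evaluation. The local block analysis around a vertex is essentially correct, but the global step needs to be rethought, not just filled in.
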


\begin{proof}
  Suppose there is a directed cycle in $G^\infty_i\cup
  (G^\infty_{i-1})^{-1}\cup (G^\infty_{i+1})^{-1}$.  Let $C$ be such a
  cycle containing the minimum number of faces in the map $D$ with
  border $C$. Suppose by symmetry that $C$ turns around $D$
  counterclockwisely. Every vertex of $D$ has at least one outgoing
  edge of color $i+1$ in $D$.  So there is a cycle of color $(i+1)$ in
  $D$ and this cycle is $C$ by minimality of $C$.  Every vertex of $D$
  has at least one outgoing edge of color $i$ in $D$. So, again by
  minimality of $C$, the cycle $C$ is a cycle of color $i$.  Thus all
  the edges of $C$ are oriented in color $i$ counterclockwisely and in
  color $i+1$ clockwisely.

  By the definition of Schnyder woods, there is no face the boundary
  of which is a monochromatic cycle, so $D$ is not a face.  Let $vx$
  be an edge in the interior of $D$ that is outgoing for $v$. The
  vertex $v$ can be either in the interior of $D$ or in $C$ (if $v$
  has more than three outgoing arcs). In both cases, $v$ has
  necessarily an edge $e_i$ of color $i$ and an edge $e_{i+1}$ of
  color $i+1$, leaving $v$ and in the interior of $D$.  Consider
  $W_i(v)$ (resp. $W_{i+1}(v)$) a monochromatic walk starting from
  $e_i$ (resp. $e_{i+1}$), obtained by following outgoing edges of
  color $i$ (resp. $i+1$).  By minimality of $C$ those walks are not
  contained in $D$. We hence have that $W_i(v)\setminus v$ and
  $W_{i+1}(v)\setminus v$ intersect $C$. Thus each of these walks
  contains a non-empty subpath from $v$ to $C$. The union of these two
  paths, plus a part of $C$ contradicts the minimality of $C$.
\end{proof}

Let $v$ be a vertex of $G^\infty$. For each color $i$, vertex $v$ is
the starting vertex of some walks of color $i$, we denote the union of
these walks by $P_i(v)$. Every vertex has at least one outgoing edge of
color $i$ and the set $P_i(v)$ is obtained by following all these edges
of color $i$ starting from $v$. Note that for some vertices $v$, $P_i(v)$ may
consist of a single walk. It is the case when $v$ cannot reach a
vertex of outdegree six or more.

\begin{lemma}
  \label{lem:nocommon}
  For every vertex $v$ and color $i$, the two graphs
  $P_{i-1}(v)$ and $P_{i+1}(v)$ intersect only on $v$.
\end{lemma}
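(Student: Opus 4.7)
The plan is to argue by contradiction, using Lemma~\ref{lem:nodirectedcycle} as the main hammer. Suppose that there exists a vertex $w\neq v$ lying in both $P_{i-1}(v)$ and $P_{i+1}(v)$. By the very definition of $P_j(v)$ as the union of monochromatic walks of color $j$ starting from $v$ and following outgoing color-$j$ edges, membership of $w$ in $P_{i-1}(v)$ gives a directed walk $W_{i-1}$ from $v$ to $w$ in $G^\infty_{i-1}$, and membership in $P_{i+1}(v)$ gives a directed walk $W_{i+1}$ from $v$ to $w$ in $G^\infty_{i+1}$.

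Next, I would reverse the second walk: $W_{i+1}$, viewed in $(G^\infty_{i+1})^{-1}$, becomes a directed walk from $w$ back to $v$. Concatenating $W_{i-1}$ (from $v$ to $w$) with this reversed walk (from $w$ to $v$) produces a closed directed walk in the subgraph $G^\infty_{i-1}\cup (G^\infty_{i+1})^{-1}$. Since this walk is nontrivial (it traverses at least the first edge of $W_{i-1}$), it must contain a directed cycle.

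Finally, I would apply Lemma~\ref{lem:nodirectedcycle} with the color index shifted from $i$ to $i-1$: it asserts that
\[
G^\infty_{i-1}\cup (G^\infty_{i-2})^{-1}\cup (G^\infty_i)^{-1}
\;=\;G^\infty_{i-1}\cup (G^\infty_{i+1})^{-1}\cup (G^\infty_i)^{-1}
\]
contains no directed cycle. Since $G^\infty_{i-1}\cup (G^\infty_{i+1})^{-1}$ is a subgraph of this, we obtain the desired contradiction, and conclude that $P_{i-1}(v)\cap P_{i+1}(v)=\{v\}$.

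The argument is short and essentially no obstacle arises beyond being careful with the color indexing: one must invoke Lemma~\ref{lem:nodirectedcycle} for the color $i-1$ rather than $i$, so that the two colors which are reversed in the statement are exactly $i$ and $i+1$, leaving the colors $i-1$ (forward) and $i+1$ (reversed) both available in the combined digraph. One minor point worth verifying is that the concatenated closed walk is genuinely nontrivial, which follows because $w\neq v$ forces $W_{i-1}$ to have at least one edge.
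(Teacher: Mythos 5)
Your proof is correct and follows exactly the paper's argument, which compresses it into a single sentence: two intersection points yield a directed cycle in $G^\infty_{i-1}\cup (G^\infty_{i+1})^{-1}$, contradicting Lemma~\ref{lem:nodirectedcycle}. The paper leaves implicit the color-index shift and the extraction of a cycle from the closed walk, both of which you spell out correctly.
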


\begin{proof}
  If $P_{i-1}(v)$ and $P_{i+1}(v)$ intersect on two vertices, then
  $G^\infty_{i-1}\cup (G_{i+1}^{^\infty})^{-1}$ contains a cycle,
  contradicting Lemma~\ref{lem:nodirectedcycle}.
\end{proof}

Now  we can prove the following:

\begin{theorem}
\label{lem:conjessentially}
If a map $G$ on a genus $g\geq 1$ orientable surface admits an {\sc
  edge}, $\mathbb{N}^*$-{\sc vertex}, $\mathbb{N}^*$-{\sc face} angle
labeling, then $G$ is essentially 3-connected.
\end{theorem}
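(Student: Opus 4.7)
The plan is to argue by contradiction. Suppose $G^\infty$ is not 3-connected. Since $G$ has genus $g\geq 1$, the universal cover $G^\infty$ is drawn in $\mathbb{R}^2$ or the open disk and is infinite (in particular it has at least $4$ vertices), and since $G$ is a map, $G^\infty$ is connected. Hence the obstruction must be a set $X$ of at most $2$ vertices such that $G^\infty \setminus X$ is disconnected. By Lemma~\ref{lem:finitecc}, one of the components of $G^\infty \setminus X$ is a finite subgraph $H$. Pick any vertex $v \in H$; since $X \cap H = \emptyset$, any edge leaving $H$ ends in $X$.

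The key step is to show that, for each color $i \in \{0,1,2\}$, the subgraph $P_i(v)$ must contain a vertex of $X$. The lifted generalized Schnyder property guarantees that every vertex of $G^\infty$ has outdegree at least $1$ in color $i$, so starting from $v$ one can always extend a directed walk along color-$i$ edges. By Lemma~\ref{lem:nodirectedcycle}, $G^\infty_i$ contains no directed cycle, so such a walk is simple and therefore visits infinitely many vertices. Since $H$ is finite, the walk must leave $H$, and hence pass through a vertex of $X$.

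Now apply pigeonhole: the three walks $P_0(v), P_1(v), P_2(v)$ each hit the set $X$, but $|X| \leq 2$, so there exist two distinct colors $i \neq j$ and a vertex $u \in X$ with $u \in P_i(v) \cap P_j(v)$. Since $v \in H$ and $u \notin H$, we have $u \neq v$, contradicting Lemma~\ref{lem:nocommon}. The main obstacle I foresee is in the second step: one must cleanly justify that $P_i(v)$ really extends to infinity (rather than, say, getting stuck in $H$ because of subtle branching behaviour), which requires combining the outdegree-$\geq 1$-per-color property with the absence of monochromatic directed cycles in $G^\infty$ provided by Lemma~\ref{lem:nodirectedcycle}; the rest is just a pigeonhole argument on the three color classes.
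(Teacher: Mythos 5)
Your proof is correct and follows essentially the same route as the paper: take a finite component $H$ of $G^\infty\setminus X$ via Lemma~\ref{lem:finitecc}, note that each monochromatic outgoing walk from a vertex $v\in H$ cannot stay in $H$ (by Lemma~\ref{lem:nodirectedcycle}) and so must hit $X$, then pigeonhole over the three colors to contradict Lemma~\ref{lem:nocommon}. The ``obstacle'' you flag is not really one---acyclicity of $G^\infty_i$ (a subgraph of the graph in Lemma~\ref{lem:nodirectedcycle}) immediately forces any maximal color-$i$ walk to be infinite, exactly as the paper uses it.
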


\begin{proof}
  Towards a contradiction, suppose that there exist two vertices $x,y$ of
  $G^\infty$ such that $G'=G^\infty\setminus\{x,y\}$ is not
  connected. Then, by Lemma~\ref{lem:finitecc}, the graph $G'$ has a
  finite connected component $R$. Let $v$ be a vertex of $R$. By
  Lemma~\ref{lem:nodirectedcycle}, for $0\leq i \leq 2$, the graph
  $P_{i}(v)$ does not lie in $R$ so it intersects either $x$ or
  $y$. So for two distinct colors $i,j$, the two graphs
  $P_{i}(v)$ and $P_{j}(v)$ intersect in a vertex distinct from $v$,
  a contradiction to Lemma~\ref{lem:nocommon}.
\end{proof}

\subsection{Conjectures on the existence of Schnyder woods}
\label{sec:conjectureexistence}

Proving that every triangulation on a genus $g\geq 1$ orientable
surface admits a 1-{\sc edge} angle labeling would imply the following
theorem of Bar\'at and Thomassen~\cite{BT06}:

\begin{theorem}[\cite{BT06}]
\label{th:barat}
A simple triangulation on a genus $g\geq 1$ orientable surface admits
an orientation of its edges such that every vertex has outdegree
divisible by three.
\end{theorem}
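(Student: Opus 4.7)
The plan is to derive the theorem from the (open) conjecture that every simple triangulation of genus $g\geq 1$ admits a 1-{\sc edge} angle labeling, which is the conjecture motivating this section. Once this is granted, by Lemma~\ref{lem:EDGElabeling} the labeling is automatically $K$-{\sc vertex} for some $K\subseteq\mathbb{N}$, and by the correspondence of Figure~\ref{fig:edgelabeling} it gives rise to an orientation of the triangulation in which every edge carries a single direction, since no type-0 or type-2 edges are present. A direct inspection of this correspondence shows that the outdegree at every vertex $v$ equals $3$ times the vertex-type of $v$ in the labeling, and is therefore divisible by $3$, which is precisely the conclusion of the theorem.

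The key steps would be the following. First, a sanity-check via Euler's formula: a genus-$g$ triangulation has $m=3(n+2(g-1))$ edges, so the total outdegree is automatically divisible by $3$, a necessary consistency condition. Second, the construction of a 1-{\sc edge} angle labeling on any simple triangulation. For $g=1$, this is accomplished by the explicit construction of Section~\ref{sec:toretri}, which uses the crossing property to produce a canonical Schnyder wood and thereby yields a new self-contained proof of the theorem in the toroidal case. For $g\geq 2$, my plan would be to induct on the genus: cut along a carefully chosen non-contractible simple cycle $C$, apply the inductive hypothesis on the resulting map of lower genus (with boundary), and then re-glue along $C$ while locally repairing the angle labels in a neighborhood of the cut so that every edge --- including those near $C$ --- remains of type~1.

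The main obstacle is this last re-gluing step. The two copies of $C$ produced by the cut carry angle-color data that will in general be incompatible under the identification, and any local repair needed to reconcile them can cascade through the map and force the appearance of a forbidden type-0 or type-2 edge, or of a contractible monochromatic cycle. The homological analysis of Section~\ref{sec:stru} suggests that the freedom available to resolve such conflicts is controlled by a global homology class of the induced orientation, so one has to choose $C$ together with the inductive labeling in a coordinated manner so that this class lands in the narrow range admitting a clean re-gluing. Because such a coordination appears hard to establish by purely local means, the 1-{\sc edge} conjecture remains open for $g\geq 2$, which is why the known proof of Bar\'at-Thomassen in~\cite{BT06} proceeds by a different, more combinatorial route that does not go through the Schnyder-wood framework.
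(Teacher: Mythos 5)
The statement you were asked to prove is, in the paper, a \emph{citation}: Theorem~\ref{th:barat} is attributed to Bar\'at and Thomassen~\cite{BT06} and is quoted as a known result, not re-proved. The paper's only surrounding remark is that establishing a $1$-{\sc edge} angle labeling for every genus-$g$ triangulation \emph{would imply} Theorem~\ref{th:barat}; it does not claim the converse, and indeed the implication only flows that way because the $1$-{\sc edge} conjecture (Conjecture~\ref{conjecture}) is strictly stronger.

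Your proposal inverts that logical arrow. You set out to ``derive the theorem from the (open) conjecture that every simple triangulation of genus $g\geq 1$ admits a $1$-{\sc edge} angle labeling,'' but deducing a known theorem from an unproven, strictly stronger statement is not a proof of anything; it merely restates the paper's observation in the wrong direction. You correctly notice this yourself at the end, conceding that for $g\geq 2$ the cut-and-reglue strategy has an unresolved obstruction. That concession means the argument does not close for $g \geq 2$, and so the theorem is not established.

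The appeal to Section~\ref{sec:toretri} for the $g=1$ case is also circular. That section explicitly begins by invoking Theorem~\ref{th:barat} to obtain a starting $3$-orientation of a toroidal triangulation (see the sentence preceding Lemma~\ref{lem:kmoins3}); Lemma~\ref{lem:2middlecycle} is proved by \emph{modifying} an assumed-to-exist $3$-orientation, not by constructing one from scratch. Using that machinery to prove the very existence statement it presupposes is not admissible.

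Your sanity check via Euler's formula (that the total outdegree $m = 3n + 6(g-1)$ is divisible by $3$) is correct but, as you note, only a necessary condition. The actual proof of Theorem~\ref{th:barat} is the one in~\cite{BT06}, which proceeds by a direct combinatorial argument (a corollary of their work on claw-decompositions and Tutte-orientations) that does not pass through angle labelings or Schnyder woods at all; the present paper simply uses the result as a black box.
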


Recently, Theorem~\ref{th:barat} has been improved
by Albar,  the first author, and the second author~\cite{AGK14}:

\begin{theorem}[\cite{AGK14}]
\label{th:AGK}
A simple triangulation on a genus $g\geq 1$ orientable
surface admits an orientation of its edges such that every vertex has
outdegree at least three, and divisible by three.
\end{theorem}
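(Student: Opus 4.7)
The plan is to bootstrap from Theorem~\ref{th:barat}: take an orientation $D$ of $G$ in which every outdegree is a non-negative multiple of three, and modify $D$ until no vertex has outdegree zero. Among all such orientations, we pick one minimizing $\Phi(D) := \left|\{u : d^+_D(u)=0\}\right|$. Assume for contradiction $\Phi(D) > 0$ and fix $v$ with $d^+_D(v)=0$. Since $\sum_u d^+_D(u) = m = 3n+6(g-1) \geq 3n$, divisibility forces the set $W := \{u : d^+_D(u) \geq 6\}$ to be nonempty. The local modification we aim for is three edge-disjoint directed paths in $D$, all from a single $w \in W$ to $v$; reversing such a triple changes outdegrees only at $v$ (by $+3$) and at $w$ (by $-3$), preserving divisibility, keeping $d^+(w) \geq 3$, and strictly lowering $\Phi$, contradicting minimality.

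The existence of the three paths will be shown via a Menger-type cut estimate. As a first step we prove the weaker multi-source statement: the minimum directed $W$-$v$ edge-cut in $D$ has size at least three. Suppose $(A,B)$ is a cut with $W \subseteq A$, $v \in B$ and at most two edges from $A$ to $B$. Since $W \subseteq A$, each vertex of $B$ has outdegree $0$ or $3$; letting $B_3 \subseteq B$ be its outdegree-three vertices, summing outdegrees in $A$ yields $m - 3|B_3| \leq e(G[A]) + 2$. Any $w \in W \cap A$ has at least four out-neighbors inside $A$, so $|A| \geq 5$ and $G[A]$ is a simple graph with at least three vertices on an orientable surface of genus $g$, giving $e(G[A]) \leq 3|A| + 6(g-1)$. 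Substituting $m = 3n + 6(g-1)$ and rearranging yields $|A| + |B_3| \geq n - 2/3$, hence $|A| + |B_3| \geq n$, which forces $|B_3| = |B|$ and contradicts $v \in B \setminus B_3$. Menger's theorem then produces three edge-disjoint directed paths from $W$ collectively to $v$.

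The hard part, we anticipate, is upgrading this multi-source estimate to a genuine single-source one: if the three paths start at three distinct elements of $W$, reversing them decreases each source's outdegree by $-1$, breaking divisibility by three. Two natural approaches are worth pursuing. First, we can try to sharpen the cut estimate for a single well-chosen $w \in W$: the argument above actually shows that any size-two cut separating a single $w$ from $v$ forces another high-outdegree vertex onto the $B$-side, so submodular uncrossing of these minimum cuts should eventually concentrate the flow onto one source. Second, we can complement any multi-source reversal by auxiliary corrections: directed triangles in $D$ can be flipped without altering any outdegree, and edge-disjoint directed paths between pairs of elements of $W$ let us redistribute contributions among sources to restore divisibility. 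Once this single-source refinement is secured, reversing the three paths contradicts the minimality of $\Phi(D)$ and completes the proof.
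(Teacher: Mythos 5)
The paper does not prove this statement; Theorem~\ref{th:AGK} is cited from~\cite{AGK14} and used as a black box, so there is no in-paper argument to compare against. Your proposal therefore has to stand on its own.

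Your multi-source cut estimate is clean and correct: the Euler bound $e(G[A])\le 3|A|+6(g-1)$ (valid since $G[A]$ is simple, embeds on genus $\le g$, and $|A|\ge 5$), combined with $m=3n+6(g-1)$ and integrality, does force any $W$-$v$ cut to have at least three forward edges, and the observation that $W\neq\emptyset$ whenever some $d^+(v)=0$ is also right. The genuine gap is exactly the one you flag, and it is not a loose end but the heart of the theorem. Menger/max-flow only produces three edge-disjoint directed paths from the \emph{set} $W$ to $v$; nothing in the cut argument concentrates them on a single source. Indeed, a min $W$-$v$ cut of size $\ge 3$ is perfectly compatible with every individual $w\in W$ being separated from $v$ by a cut of size two (the bad sets $B_w$ simply each contain another element of $W$), so the per-source Menger bound you would need does not follow. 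The two repairs you sketch are not carried out and neither is obviously available: submodular uncrossing of the size-two cuts $(A_w,B_w)$ controls $f(A_{w_1}\cap A_{w_2})+f(A_{w_1}\cup A_{w_2})$ but does not push either below three, and the ``redistribution'' repair requires directed paths between specific pairs of vertices of $W$ in the already-modified orientation, which is a reachability claim you would have to establish (and which can fail in general digraphs). Until one of these is turned into a proof, the reduction from ``three paths from $W$'' to ``three paths from one $w$'' — i.e.\ the step that preserves divisibility at every vertex other than $v$ — is missing, and the argument does not yet prove the theorem.
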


Note that Theorems~\ref{th:barat} and~\ref{th:AGK} are proved only in
the case of simple triangulations (i.e. no loops and no multiple
edges). We believe them to be true also for non-simple triangulations
without contractible loops nor contractible double edges.

Theorem~\ref{th:AGK} suggests the existence of 1-{\sc edge} angle
labelings with no sinks, i.e. 1-{\sc edge}, $\mathbb{N}^*$-{\sc
  vertex} angle labelings. One can easily check that in a
triangulation, a 1-{\sc edge} angle labeling is also 1-{\sc
  face}. Thus we can hope that a triangulation on a genus $g\geq 1$
orientable surface admits a 1-{\sc edge}, $\mathbb{N}^*$-{\sc vertex},
1-{\sc face} angle labeling. Note that a 1-{\sc edge}, 1-{\sc face}
angle labeling of a map implies that faces are triangles. So we
propose the following conjecture, whose ``only if'' part follows from the
previous sentence:

\begin{conjecture}
\label{conjecture}
  A map on a genus $g\geq 1$ orientable surface admits a
  1-{\sc edge}, $\mathbb{N}^*$-{\sc vertex}, 1-{\sc face} angle
  labeling if and only if it is a triangulation.
\end{conjecture}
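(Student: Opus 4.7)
The ``only if'' direction is immediate and is already noted just before the statement: a 1-{\sc edge}, 1-{\sc face} angle labeling admits exactly three label changes along the boundary of every face, so every face is a triangle. I therefore focus on the ``if'' direction, i.e.\ constructing such a labeling on any triangulation $G$ of genus $g\geq 1$.

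My strategy is to combine Theorem~\ref{th:AGK} with the orientation characterization of Schnyder woods established in Section~\ref{sec:char}. By Proposition~\ref{prop:bijtore} together with the correspondence of Figure~\ref{fig:edgelabeling}, the target 1-{\sc edge}, $\mathbb{N}^*$-{\sc vertex}, 1-{\sc face} angle labeling corresponds to an orientation and $3$-coloring of $G$ with no bi-oriented edges, in which every vertex has outdegree in $3\mathbb{N}^*$ and no face boundary is a monochromatic directed cycle (the 1-{\sc face} condition, specialized to triangles). First, I would invoke Theorem~\ref{th:AGK}, assuming the extension to triangulations without contractible loops or double edges that the authors conjecture just after it, to produce an orientation $D$ of $G$ with outdegrees in $3\mathbb{N}^*$. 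Second, I would apply the characterization from Section~\ref{sec:char} to decide whether some $3$-coloring turns $D$ into a Schnyder wood, and if not, modify $D$ within its outdegree class by reversing directed cycles (which preserves outdegrees) until it does.

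The main obstacle lies in this second step. By the structural results of Section~\ref{sec:stru}, the orientations with a prescribed outdegree sequence split into $\mathbb{Z}^{2g}$-many homology classes under cycle reversal, each forming a distributive lattice; a priori it is not clear that any such class contains a Schnyder-feasible orientation, and this is precisely where the toroidal proof of Section~\ref{sec:toretri} is delicate. My proposed route is induction on $g$, with Section~\ref{sec:toretri} as base case: given a genus-$g$ triangulation, cut it along a non-separating simple cycle $C$, re-triangulate the two new boundary faces, apply the inductive hypothesis to the resulting genus-$(g-1)$ triangulation, then reglue and correct the coloring along $C$ using the lattice structure of Section~\ref{sec:stru}. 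The delicate technical point is the correction step, which should reduce to a matching/homology problem along $C$. An alternative, avoiding induction, is to work directly in the universal cover $G^\infty$ and try to extend the canonical-lattice method of Section~\ref{sec:toretri}: Lemmas~\ref{lem:nodirectedcycle} and~\ref{lem:nocommon} already give strong structural control on monochromatic walks, and a peeling argument along a geodesic-like sweep in $G^\infty$ might yield the labeling directly—this route would also exploit the fact that, although Figure~\ref{fig:annoying-maps} shows type 0 edges are unavoidable for some $g\geq 2$ maps, triangulations escape that obstruction on purely Eulerian grounds.
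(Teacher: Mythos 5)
The statement you were asked about is a \emph{conjecture}, not a theorem: the paper explicitly states that Conjecture~\ref{conjecture} (and the stronger Conjecture~\ref{conjecture2}) is open for $g\geq 2$, and that only the case $g=1$ is proved, in Section~\ref{sec:toretri} via Theorem~\ref{th:triagnulationtore}. So there is no ``paper's own proof'' to compare against. Your write-up is commendably honest about this---you flag the second step as the ``main obstacle'' and use hedged language like ``might yield the labeling''---but to be clear, what you have is a research program, not a proof, and the place where it breaks down is exactly the gap that the authors themselves could not close.

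Concretely, two of your steps rest on unproven ground. First, Theorem~\ref{th:AGK} is stated only for \emph{simple} triangulations; the extension to triangulations with non-contractible loops or double edges (which you need, since the paper's triangulations allow these) is itself only conjectured in the paper, right after Theorem~\ref{th:AGK}. Second, and more seriously, your cut-and-reglue induction has no mechanism to hit the correct homology type. Theorem~\ref{th:characterizationgamma} says you need an outdegree-valid orientation with $\gamma(B_i)\equiv 0\pmod 3$ on a homology basis; the lattice of Section~\ref{sec:lattice} only moves you \emph{within} a homology class (Theorem~\ref{th:transformations}, third item), while changing the type requires an Eulerian-partitionable but non-null-homologous modification (second item), and nothing you describe produces such a modification. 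The $g=1$ proof succeeds because of the special combinatorics of middle walks in a $3$-orientation of a torus (Lemmas~\ref{lem:kmoins3}--\ref{lem:2middlecycle}); the authors explicitly note these arguments become more delicate even for general toroidal maps, and no analogue is known in higher genus. Your ``peeling along a geodesic-like sweep in $G^\infty$'' alternative is even vaguer and does not address the obstruction. In short: the ``only if'' direction is fine and matches the paper's one-sentence remark; the ``if'' direction remains open, and your sketch does not close it.
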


If true, Conjecture~\ref{conjecture} would strengthen
Theorem~\ref{th:AGK} in two ways. First, it considers more
triangulations (not only simple ones). Second, it requires the
coloring property around vertices.

How about general maps? We propose the
following conjecture, whose ``only if'' part is Theorem~\ref{lem:conjessentially}:

\begin{conjecture}
\label{conjecture2}
A map on a genus $g\geq 1$ orientable surface admits an {\sc edge},
$\mathbb{N}^*$-{\sc vertex}, $\mathbb{N}^*$-{\sc face} angle labeling
if and only if it is essentially 3-connected.
\end{conjecture}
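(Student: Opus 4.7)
The ``only if'' direction is Theorem~\ref{lem:conjessentially}, so the task is the ``if'' direction: every essentially 3-connected map $G$ on a genus $g\ge 1$ orientable surface admits an {\sc edge}, $\mathbb{N}^*$-{\sc vertex}, $\mathbb{N}^*$-{\sc face} angle labeling. The plan is to first produce an edge (bi-)orientation with the prescribed local data coming from the characterization of Section~\ref{sec:char}, and then to adjust it by flipping facial cycles to kill any remaining monochromatic faces; the desired coloring then follows from the correspondence of Figure~\ref{fig:edgelabeling}.

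For existence of the orientation I would work in the universal cover $G^\infty$, which by essential 3-connectedness is an infinite 3-connected planar graph. The strategy is to find an orientation of $G^\infty$ whose outdegrees are divisible by three at every vertex, whose total per fundamental domain matches the Euler count, and which is equivariant under the deck transformations, so that it descends to $G$. Such an orientation can be sought by setting up a Hakimi/Hall-type feasibility problem on a fundamental domain; the 3-connectedness of $G^\infty$ would ensure the required defect inequality, since any violating vertex set would, via Lemma~\ref{lem:finitecc}, either bound a finite component sending too few edges outward (contradicting Menger in $G^\infty$) or yield a $2$-cut.

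Once a suitable orientation exists, some face boundaries may still be monochromatic cycles. Here I would invoke the distributive lattice structure announced in Section~\ref{sec:stru}: the Schnyder orientations within a fixed homology class are connected by facial cycle flips that preserve outdegrees. Starting from the orientation of the previous paragraph, one would move to the minimal element of the appropriate lattice, which cannot carry a clockwise monochromatic facial cycle, and apply the corresponding reversal on the counterclockwise side. The final orientation, combined with the Section~\ref{sec:char} correspondence, then yields the required angle labeling.

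The main obstacle is the first step. In the plane, Felsner's analogous existence argument relies on three outer roots and on the inequality $m \le 3n-6$ to verify Hall's condition on vertex-face incidences; here neither ingredient is available, as there is no canonical outer face and for $g \ge 2$ Euler's formula forces some vertices to have outdegree at least six, making the local feasibility inequality strictly tighter. Combined with the equivariance requirement under the fundamental group action on $G^\infty$, verifying the feasibility inequality is delicate; I expect that it will require either a genuinely new combinatorial argument or a prior reduction to the triangulated case via a strengthened version of Theorem~\ref{th:AGK} that covers non-simple triangulations, from which the general map case could be recovered by controlled contraction of added diagonals.
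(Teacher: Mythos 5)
This statement is Conjecture~\ref{conjecture2}, which the paper does \emph{not} prove: the text explicitly states that it is established for $g=1$ in~\cite{GL13} and is open for $g\ge 2$. There is therefore no ``paper's own proof'' to compare against, and your sketch should not be read as a candidate proof but as a heuristic outline, which you yourself acknowledge in the final paragraph.

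Beyond the gap you flag (existence of a suitable equivariant $\bmod_3$-orientation), there is a second, independent gap you do not address. By Theorem~\ref{th:characterizationgamma}, a $\bmod_3$-orientation of $\pdc{G}$ corresponds to an {\sc edge} angle labeling \emph{only} if, in addition, $\gamma(B_i)\equiv 0 \pmod 3$ for a homology basis $\{B_1,\dots,B_{2g}\}$; this is precisely what fails in the left-hand orientation of Figure~\ref{fig:orientation}. Prescribing outdegrees alone, even equivariantly in the universal cover, does not control these $2g$ homological residues, and for $g\geq 2$ there is no analogue of the ``two non-homologous middle cycles'' argument of Section~\ref{sec:proof} that would force them to vanish. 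Your second step also overreaches: flipping to the minimal element of a lattice from Section~\ref{sec:stru} eliminates clockwise $0$-homologous oriented subgraphs (Proposition~\ref{prop:maximal}), but a type-$0$ face could still arise as a \emph{counterclockwise} facial cycle, and more importantly the flip lattice only connects orientations within a fixed type and fixed outdegree vector, so it cannot repair a bad homology class. A correct proof would have to simultaneously solve the degree-constraint, the homology constraint, and the no-monochromatic-face constraint, and that is exactly what remains open for $g\ge 2$.
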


Conjecture~\ref{conjecture2} implies Conjecture~\ref{conjecture} since
for a triangulation every face would be of type 1, and thus every edge
would be of type 1.  Conjecture~\ref{conjecture2} is proved
in~\cite{GL13} for $g=1$ whereas both conjectures are open for
$g\geq 2$.  Section~\ref{sec:toretri} gives a new proof of
Conjecture~\ref{conjecture} for $g=1$ based on the results in
Section~\ref{sec:char}.

\section{Characterization of Schnyder orientations}
\label{sec:char}

\subsection{A bit of homology}
\label{sec:homology}

In the next sections, we need a bit of surface homology of general
maps, which we will discuss now. For a deeper introduction to homology we refer to~\cite{Gib10}.

For the sake of generality, in this subsection we consider that maps
may have contractible cycles of size 1 or 2.  Consider a map
$G=(V,E)$, on an orientable surface of genus $g$, given with an
arbitrary orientation of its edges. This fixed arbitrary orientation
is implicit in all the paper and is used to handle flows.  A
\emph{flow} $\phi$ on $G$ is a vector in $\mb{Z}^{E}$. For any
$e\in E$, we denote by $\phi_e$ the coordinate $e$ of $\phi$.

A \emph{walk} $W$ of $G$ is a sequence of edges with a direction of
traversal such that the ending point of an edge walk is the starting
point of the next edge.  A walk is \emph{closed} if the start and end
vertices coincide. 
A walk has a \emph{characteristic flow} $\phi(W)$ defined by:

$$\phi(W)_e:=\#\text{times }W\text{ traverses } e \text{ forward} - \#\text{times
}\\
W\text{ traverses } e \text{ backward}$$

This definition naturally extends to sets of walks.  From now on we
consider that a set of walks and its characteristic flow are the same
object and by abuse of notation we can write $W$ instead of
$\phi(W)$. We do the same for \emph{oriented subgraphs}, i.e.,
subgraphs that can be seen as a set of walks of unit length.

A \emph{facial walk} is a closed walk bounding a face.  Let $\mc{F}$
be the set of counterclockwise facial walks and let
$\mb{F}=<\phi(\mc{F})>$ be the subgroup of $\mb{Z}^E$ generated by
$\mc{F}$.  Two flows $\phi, \phi'$ are \emph{homologous} if $\phi
-\phi' \in \mb{F}$.  They are \emph{weakly homologous} if $\phi -\phi'
\in \mb{F}$ or $\phi + \phi' \in \mb{F}$.  We say that a flow $\phi$
is $0$-homologous if it is homologous to the zero flow, i.e. $\phi \in
\mb{F}$.

Let $\mc{W}$ be the set of \emph{closed} walks and let
$\mb{W}=<\phi(\mc{W})>$ be the subgroup of $\mb{Z}^E$ generated by
$\mc{W}$.  The group $H(G)=\mb{W}/\mb{F}$ is the \emph{first homology
  group} of $G$. It is well-known that $H(G)$ only depends on the
genus of the map, and actually it is isomorphic to $\mb{Z}^{2g}$.

A set $\{B_1,\ldots,B_{2g}\}$ of (closed) walks of $G$ is said to be a
\emph{basis for the homology} if the equivalence classes of their
characteristic vectors $([\phi(B_1)],\ldots,[\phi(B_{2g})])$ generate
$H(G)$.  Then for any closed walk $W$ of $G$, we have
$W=\sum_{F\in\mc{F}}\lambda_FF+\sum_{1\leq i\leq 2g}\mu_iB_i$ for some
$\lambda\in\mathbb{Z}^\mc{F},\mu\in\mathbb{Z}^{2g}$. Moreover one of the
$\lambda_F$ can be set to zero (and then all the other coefficients
are unique).  Indeed, for any map, there exists a set of cycles that
forms a basis for the homology and it is computationally easy to
build. A possible way is by considering a spanning tree $T$ of $G$,
and a spanning tree $T^*$ of $G^*$ that contains no edges dual to $T$.
By Euler's formula, there are exactly $2g$ edges in $G$
that are not in $T$ nor dual to edges of $T^*$. Each of these $2g$ edges forms a unique
cycle with $T$. It is not hard to see that this set of cycles forms a basis for the
homology.

The edges of the dual $G^*$ of $G$ are oriented such that the dual
$e^*$ of an edge $e$ of $G$ goes from the face on the right of $e$ to
the face on the left of $e$.  Let $\mc{F}^*$ be the set of
counterclockwise facial walks of $G^*$.  Consider
$\{B^*_1,\ldots,B^*_{2g}\}$ a set of closed walks of $G^*$ that form a
basis for the homology. Let $p$ and $d$ be flows of $G$ and $G^*$,
respectively. We define the following: $$\beta(p,d)=\sum_{e\in G}p_e
d_{e^*}$$ Note that $\beta$ is a bilinear function.

\begin{lemma}\label{lm:homologous}
Given  two flows $\phi,\phi'$ of $G$, the following properties are
equivalent to each other:

\begin{enumerate}
\item The two flows $\phi, \phi'$ are homologous.
\item For any closed walk $W$ of $G^*$ we have
  $\beta(\phi,W)=\beta(\phi',W)$.
\item For any $F\in \mc{F^*}$, we have $\beta(\phi,F)=\beta(\phi',F)$, and,
  for any $1\le i\le 2g$, we have $\beta(\phi,B^*_i)=\beta(\phi',B^*_i)$.
\end{enumerate}
\end{lemma}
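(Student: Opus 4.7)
My plan is to prove the chain $(1) \Rightarrow (2) \Rightarrow (3) \Rightarrow (1)$. The step $(2) \Rightarrow (3)$ is immediate, since every $F \in \mc{F}^*$ and every $B_i^*$ is a closed walk of $G^*$.

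For $(1) \Rightarrow (2)$, I would use the bilinearity of $\beta$ together with the definition of $\mb{F}$ to reduce to showing $\beta(F,W) = 0$ for a single facial walk $F \in \mc{F}$ and an arbitrary closed walk $W$ of $G^*$. The face bounded by $F$ corresponds to a vertex $v^*$ of $G^*$, and the duals of the edges of $F$ are precisely the edges of $G^*$ incident to $v^*$. A quick check of the sign conventions---$e^*$ goes from the face on the right of $e$ to the face on the left, while $F$ turns counterclockwise around $v^*$---identifies $\beta(F,W)$ with the net $W$-flow entering $v^*$, which vanishes because $W$ is a closed walk.

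The main direction is $(3) \Rightarrow (1)$, and I would split it into two steps for $\psi := \phi - \phi'$. First, by applying the same computation with the roles of $G$ and $G^*$ swapped, $\beta(\psi, F^*)$ measures the net $\psi$-flow at the vertex $v$ of $G$ corresponding to the face $F^* \in \mc{F}^*$, so the assumption $\beta(\psi, F^*)=0$ for every $F^* \in \mc{F}^*$ forces $\partial \psi = 0$, i.e., $\psi \in \mb{W}$. Second, once $\psi$ is a cycle, the decomposition recalled in Section~\ref{sec:homology} writes $\psi = \sum_{F \in \mc{F}} \lambda_F F + \sum_{i=1}^{2g} \mu_i B_i$; applying $\beta(\cdot, B_j^*)$ and using $(1) \Rightarrow (2)$ to kill the facial contributions, the hypothesis $\beta(\psi, B_j^*)=0$ reduces to the linear system $\sum_i \mu_i\, \beta(B_i, B_j^*) = 0$ for every $j$. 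To conclude that all $\mu_i$ vanish and hence that $\psi \in \mb{F}$, one needs the matrix $\bigl(\beta(B_i, B_j^*)\bigr)_{i,j}$ to be invertible over $\mb{Z}$.

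The integral non-degeneracy of this intersection matrix is the step I expect to require the real work, and it is the reason the explicit spanning-tree construction of homology bases is recalled in the excerpt. Concretely, after picking a spanning tree $T$ of $G$ and a spanning tree $T^*$ of $G^*$ whose edges are duals of $E \setminus T$, the $2g$ remaining edges $e_1,\ldots,e_{2g}$ of $G$ furnish both families at once: $B_i$ is the unique $T$-cycle through $e_i$, and I would take $B_i^*$ to be the unique $T^*$-cycle through $e_i^*$. A case analysis according to tree vs.\ non-tree membership on each side shows that the only $e$ with $e \in B_i$ and $e^* \in B_j^*$ is $e_i = e_j$, so the matrix is diagonal with entries $\pm 1$, which gives the required unimodularity. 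Since condition (3) is invariant under unimodular change of basis, it is enough to prove the lemma for this convenient choice.
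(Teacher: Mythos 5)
Your proposal is correct, but the route you take for the closing implication differs genuinely from the paper's. You prove $(3)\Rightarrow(1)$ in two stages: first, the facial conditions $\beta(\psi,F^*)=0$ for $F^*\in\mc{F}^*$ force $\psi:=\phi-\phi'$ into the cycle space $\mb{W}$; second, you decompose $\psi=\sum_F\lambda_F F+\sum_i\mu_i B_i$, pair against the $B_j^*$, and invoke unimodularity of the intersection matrix $\bigl(\beta(B_i,B_j^*)\bigr)$ to force all $\mu_i=0$. This requires choosing the compatible primal/dual bases coming from the tree--cotree construction and noting that condition (3) is stable under unimodular change of the $B_j^*$ basis, which you correctly observe. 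The paper instead proves $(2)\Rightarrow(1)$ directly by a potential-function argument: given $\beta(z,W)=0$ for every closed walk $W$ of $G^*$, it labels each face $F$ by $\ell_F=\beta(z,P_F)$ along a dual path $P_F$ from a fixed base face (well-defined by hypothesis), and then a short computation shows $z=\sum_F\ell_F\phi(F)\in\mb{F}$. The paper's argument is shorter, avoids the homology basis entirely, and never needs to know that the intersection pairing is nondegenerate; your argument is more overtly homological and puts the intersection form's unimodularity in the foreground, at the cost of carrying along the explicit tree--cotree bases and the change-of-basis invariance check. Both the $(1)\Rightarrow(2)$ step (via $\beta(F,W)=0$) and the rest of the structure agree.
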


\begin{proof}
  $(1. \Longrightarrow 3.)$ Suppose that $\phi, \phi'$ are
  homologous. Then we have
  $\phi-\phi'=\sum_{F\in\mc{F}}\lambda_FF$ for some
  $\lambda\in\mathbb{Z}^\mc{F}$. It is easy to see that, for any
  closed walk $W$ of $G^*$, a facial walk $F\in\mc{F}$ satisfies
  $\beta(F,W)=0$, so $\beta(\phi,W)=\beta(\phi',W)$ by linearity of
  $\beta$.
 
  $(3. \Longrightarrow 2.)$ Suppose that for any $F\in \mc{F^*}$, we
  have $\beta(\phi,F)=\beta(\phi',F)$, and, for any $1\le i\le 2g$, we
  have $\beta(\phi,B^*_i)=\beta(\phi',B^*_i)$. Let $W$ be any closed
  walk of $G^*$. We have
  $W=\sum_{F\in\mc{F}^*}\lambda_FF+\sum_{1\leq i\leq 2g}\mu_iB^*_i$
  for some $\lambda\in\mathbb{Z}^\mc{F},\mu\in\mathbb{Z}^{2g}$. Then by
  linearity of $\beta$ we have   $\beta(\phi,W)=\beta(\phi',W)$.

  $(2. \Longrightarrow 1.)$ Suppose
  $\beta(\phi,W)=\beta(\phi',W)$ for any closed walk $W$ of $G^*$. Let
  $z=\phi-\phi'$.  Thus $\beta(z,W)=0$ for any closed walk $W$ of
  $G^*$.  We label the faces of $G$ with elements of $\mathbb{Z}$ as
  follows. Choose an arbitrary face $F_0$ and label it $0$. Then,
  consider any face $F$ of $G$ and a path $P_{F}$ of $G^*$ from $F_0$
  to $F$. Label $F$ with $\ell_F=\beta(z,P_{F})$. Note that the label
  of $F$ is independent from the choice of $P_{F}$. Indeed, for any
  two paths $P_1,P_2$ from $F_0$ to $F$, we have $P_1-P_2$ is a closed
  walk, so $\beta(z,P_1-P_2)=0$ and thus $\beta(z,P_1)=\beta(z,P_2)$.
  Let us show that $z=\sum_{F\in\mc{F}} \ell_F \phi(F)$.

\begin{alignat*}{3}
  \sum_{F\in\mc{F}} \ell_F \phi(F) &=\sum_{e\in G}
  \left(\ell_{F_2}-\ell_{F_1}\right) \phi(e) &
  \text{(face $F_2$ is on the left of $e$ and $F_1$ on the right)}\\
  &=\sum_{e\in G}
  \left(\beta(z,P_{F_2})-\beta(z,P_{F_1})\right)
  \phi(e) &\text{(definition of $\ell_F$)}\\
  &=\sum_{e\in G} \beta(z,P_{F_2}-P_{F_1}) \phi(e)  &\text{(linearity of $\beta$)}\\
  &=\sum_{e\in G} \beta(z,e^*) \phi(e) &\text{($P_{F_1}+e^*-P_{F_2}$ is a closed walk)}\\
  & =\sum_{e\in G} \left(\sum_{e'\in G} z_{e'} \phi(e^*)_{e'^*}\right)
  \phi(e) &\text{(definition of $\beta$)}\\
  &=\sum_{e\in G} z_e \phi(e)\\
  &=z
\end{alignat*}

So $z\in  \mb{F}$ and thus $\phi, \phi'$ are homologous.
\end{proof}

\subsection{General characterization}
\label{sec:gencharacterization}
 
Consider a map $G$ on an orientable surface of genus $g$.  The mapping
of Figure~\ref{fig:edgelabeling} shows how an {\sc edge} angle labeling of
$G$ can be mapped to an orientation of the edges with edges oriented
in one direction or in two opposite directions.  These edges can be
defined more naturally in the primal-dual-completion of $G$.

The \emph{primal-dual-completion} $\pdc{G}$ is the map obtained from
simultaneously embedding $G$ and $G^*$ such that vertices of $G^*$ are
embedded inside faces of $G$ and vice-versa. Moreover, each edge
crosses its dual edge in exactly one point in its interior, which also
becomes a vertex of $\pdc{G}$.  Hence, $\pdc{G}$ is a bipartite graph
with one part consisting of \emph{primal-vertices} and
\emph{dual-vertices} and the other part consisting of
\emph{edge-vertices} (of degree four). Each face of $\pdc{G}$ is a
quadrangle incident to one primal-vertex, one dual-vertex and two
edge-vertices. Actually, the faces of $\pdc{G}$ are in correspondance
with the angles of $G$. This means that angle labelings of $G$
correspond to face labelings of $\pdc{G}$.

Given $\alpha:V\to \mb{N}$, an orientation of $G$ is an
\emph{$\alpha$-orientation}~\cite{Fel04} if for every vertex $v\in V$ its
outdegree $d^+(v)$ equals $\alpha(v)$.  We call an orientation of
$\pdc{G}$ a \emph{$\bmod_3$-orientation} if it is an
$\alpha$-orientation for a function $\alpha$ satisfying
: $$\alpha(v)\equiv \begin{cases}
  0 \ (\bmod 3) & \text{if }v\text{ is a primal- or dual-vertex}, \\
  1 \ (\bmod 3) & \text{if }v\text{ is an edge-vertex.}\\
\end{cases}
$$

Note that an {\sc edge} angle labeling of $G$ corresponds to a
$\bmod_3$-orientation of $\pdc{G}$, by the mapping of
Figure~\ref{fig:pdc}, where the three types of edges are
represented. Type 0 corresponds to an edge-vertex of outdegree
four. Type 1 and type 2 both correspond to an edge-vertex of outdegree
$1$; in type 1 (resp. type 2) the outgoing edge goes to a
primal-vertex (resp. dual-vertex). In all cases we have $d^+(v) \equiv 1 \ (\bmod 3)$ if
$v$ is an edge-vertex. By Lemma~\ref{lem:EDGElabeling}, the labeling
is also {\sc vertex} and {\sc face}. Thus, $d^+(v) \equiv 0 \ (\bmod 3)$ if $v$
is a primal- or dual-vertex.

\begin{figure}[!h]
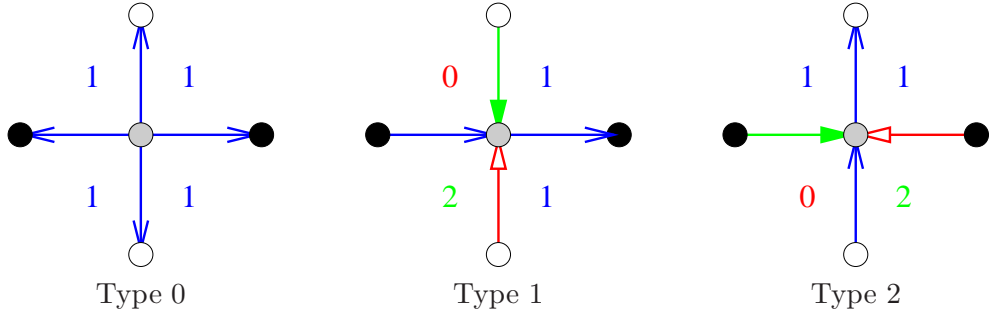

\center
\begin{tabular}{ccc}
\includegraphics[scale=0.5]{pdc-0.eps} \ \ \  &  \ \ \
\includegraphics[scale=0.5]{pdc-1.eps} \ \ \  &  \ \ \
\includegraphics[scale=0.5]{pdc-2.eps} \\
Type 0  \ \ \  &  \ \ \ Type 1  \ \ \  &  \ \ \ Type 2  \\
\end{tabular}
\caption{How to map an {\sc edge} angle labeling to a
  $\bmod_3$-orientation of the primal-dual completion. Primal-vertices
  are black, dual-vertices are white and edge-vertices are gray. This
  serves as a convention for the other figures.}
\label{fig:pdc}
\end{figure}

As mentioned earlier, De Fraysseix and Ossona de
Mendez~\cite{FO01} give a bijection between internal 3-orientations
and Schnyder woods of planar triangulations. Felsner~\cite{Fel04}
generalizes this result for planar Schnyder woods and orientations of
the primal-dual completion having prescribed out-degrees. The
situation is more complicated in higher genus
(see Figure~\ref{fig:orientation}). It is not enough to prescribe
outdegrees in order to characterize orientations corresponding to {\sc
  edge} angle labelings.

We call an orientation of $\pdc{G}$ corresponding to an {\sc edge}
angle labeling of $G$ a \emph{Schnyder orientation}.  In this section
we characterize which orientations of $\pdc{G}$ are Schnyder
orientations.

Consider an orientation of the primal-dual completion $\pdc{G}$.  Let
$\Out=\{(u,v)\in E(\pdc{G})\mid v\text{ is an edge-vertex}\}$, i.e.
the set of edges of $\pdc{G}$ which are going from a primal- or
dual-vertex to an edge-vertex. We call these edges \emph{out-edges}.
For $\phi$ a flow of the dual of the primal-dual completion
$\pdc{G}^*$, we define $\delta(\phi)=\beta(\Out,\phi)$.  More
intuitively, if $W$ is a walk of $\pdc{G}^*$, then:
$$
\begin{array}{ll}
  \delta(W)  = &  \ \ \#\text{out-edges crossing }W\text{
    from left to right}\\
  & -\#\text{out-edges crossing }W\text{ from right to
    left}.  
\end{array}
$$

The bilinearity of $\beta$ implies the linearity of $\delta$.
The following lemma gives a necessary and sufficient condition for an
orientation to be a Schnyder orientation.

\begin{lemma}
 \label{lem:charforall}
 An orientation of $\pdc{G}$ is a Schnyder orientation if and only if
 any closed walk $W$ of $\pdc{G}^*$ satisfies
 $\delta (W) \equiv 0 \ (\bmod 3)$.
\end{lemma}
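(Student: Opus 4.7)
The plan is to view an {\sc edge} angle labeling of $G$ as a $\mb{Z}_3$-valued potential $\ell$ on the faces of $\pdc{G}$, and to show that the condition $\delta(W)\equiv 0 \pmod 3$ is exactly the compatibility condition for such a potential to exist. The central ingredient is the following local rule, which I verify by direct case-analysis on Figure~\ref{fig:pdc}: the orientation of $\pdc{G}$ is a Schnyder orientation if and only if there exists a labeling $\ell$ of the faces of $\pdc{G}$ with values in $\mb{Z}_3$ such that, for every dual edge $e^*$ of $\pdc{G}^*$,
$$
\ell(\mathrm{head}(e^*)) - \ell(\mathrm{tail}(e^*)) \equiv
\begin{cases} 1\pmod 3 & \text{if } e \in \Out,\\ 0\pmod 3 & \text{otherwise.} \end{cases}
$$
In Type 0 of Figure~\ref{fig:pdc} the four labels around the edge-vertex are all equal and none of the incident primal-edges is an out-edge, so both sides vanish; in Types 1 and 2, exactly three of these edges are out-edges, and the cyclic pattern $i-1, i, i, i+1$ of Figure~\ref{fig:edgelabeling} shows a $+1\pmod 3$ jump (from tail to head of $e^*$) across each out-edge and a $0$ jump across the remaining non-out-edge.

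Given this rule, the forward direction is immediate: for any closed walk $W$ of $\pdc{G}^*$, summing the local label-differences along $W$ yields $0\pmod 3$ (the labels close up), and by the rule plus the linearity of $\beta$, this telescoping sum equals exactly $\delta(W)$.

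For the converse, assume $\delta(W)\equiv 0 \pmod 3$ for every closed walk $W$ of $\pdc{G}^*$. Fix an arbitrary face $f_0$ of $\pdc{G}$, set $\ell(f_0):=0$, and for every other face $f$ define $\ell(f):=\delta(P)\bmod 3$ where $P$ is any walk of $\pdc{G}^*$ from $f_0$ to $f$. Any two such walks differ by a closed walk on which $\delta$ vanishes modulo~$3$ by hypothesis, so $\ell$ is well-defined; by construction it satisfies the local rule on every dual edge. It remains to verify that $\ell$ is an {\sc edge} angle labeling, for which I inspect the four labels around each edge-vertex $v$. Applying the local rule to the counterclockwise facial walk of $\pdc{G}^*$ around each vertex of $\pdc{G}$ first forces the orientation to be a $\bmod_3$-orientation ($d^+\equiv 0$ at primal- and dual-vertices, $d^+\equiv 1$ at edge-vertices), so in particular $d^+(v)\in\{1,4\}$ at every edge-vertex $v$. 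If $d^+(v)=4$, none of the four incident edges is an out-edge and all four surrounding labels coincide (Type 0); if $d^+(v)=1$, exactly one incident edge is not an out-edge, so the four cyclic label-differences around $v$ are three $+1$'s and one $0$, producing the pattern $i-1,i,i,i+1$ of Types 1 or 2, distinguished by whether the unique outgoing edge of $v$ reaches a primal- or a dual-vertex in accordance with Figure~\ref{fig:pdc}. This exhibits $\ell$ as an {\sc edge} angle labeling inducing the given orientation, so the orientation is Schnyder.

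The main technical subtlety is to set up all the sign and orientation conventions consistently (direction of each dual edge $e^*$, sense of counterclockwise traversal in $\pdc{G}^*$ around each type of vertex of $\pdc{G}$, and the correspondence between Types 1/2 and the placement of the two equal-colored angles), so that the $+1$'s and $0$'s in the local rule fall out with the correct sign in every case of Figure~\ref{fig:pdc}; once this bookkeeping is done, the rest is elementary linear algebra over $\mb{Z}_3$.
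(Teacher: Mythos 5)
Your proof is correct and follows essentially the same strategy as the paper: the ``local rule'' you isolate is exactly the observation the paper makes (``$\delta$ counts the variation of the label''), the forward direction is the same telescoping argument, and the backward direction constructs the labeling as a potential $\ell(f)=\delta(P)\bmod 3$ and then reads off the edge type from the pattern of jumps around each edge-vertex, just as in the paper. The only cosmetic difference is that you first extract the $\bmod_3$-orientation property at all vertices before specializing to edge-vertices, whereas the paper only needs (and only derives) $d^+\equiv 1 \pmod 3$ at edge-vertices since Lemma~\ref{lem:EDGElabeling} already covers the vertex and face conditions.
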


\begin{proof} $(\Longrightarrow)$ Consider an {\sc edge} angle
  labeling $\ell$ of $G$ and the corresponding Schnyder orientation
  (see Figure~\ref{fig:pdc}).
  Figure~\ref{fig:figdelta} illustrates how $\delta$ counts the
  variation of the label when going from one face of $\pdc{G}$ to
  another face of $\pdc{G}$ . The represented cases
  correspond to a walk $W$ of $\pdc{G}^*$ consisting of just one
  edge. If the edge of $\pdc{G}$ crossed by $W$ is not an out-edge,
  then the two labels in the face are the same and $\delta(W) =0$. If
  the edge crossed by $W$ is an out-edge, then the labels differ by
  one. If $W$ is going counterclockwise around a primal- or
  dual-vertex, then the label increases by $1 \ (\bmod 3)$ and
  $\delta(W)=1$. If $W$ is going clockwise around a primal- or
  dual-vertex then the label decreases by $1 \ (\bmod 3)$ and
  $\delta(W)=-1$. One can check that this is consistent with all the
  edges depicted in Figure~\ref{fig:pdc}.  Thus for any walk $W$ of
  $\pdc{G}^*$ from a face $F$ to a face $F'$, the value of $\delta(W)
  \ (\bmod 3)$ is equal to $\ell(F')-\ell(F) \ (\bmod 3)$.  Thus if $W$ is a
  closed walk then $\delta(W)\equiv 0 \ (\bmod 3)$.

\begin{figure}[!h]
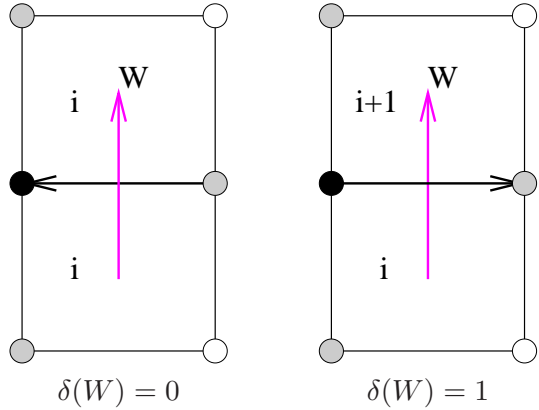

\center
\begin{tabular}{cc}
\includegraphics[scale=0.5]{delta-0.eps} \ \ \  &  \ \ \
\includegraphics[scale=0.5]{delta-1.eps} \\
$\delta(W) =0$ \ \ \  &  \ \ \ $\delta(W)=1$ \\
\end{tabular}
\caption{How $\delta$ counts the variation of the labels.}
\label{fig:figdelta}
\end{figure}

$(\Longleftarrow)$ Consider an orientation of $\pdc{G}$ such that
any closed
walk $W$ of $\pdc{G}^*$ satisfies $\delta (W) \equiv 0 \ (\bmod 3)$.  Pick any
face $F_0$ of $\pdc{G}$ and label it $0$. Consider any face $F$ of
$\pdc{G}$ and a path $P$ of $\pdc{G}^*$ from $F_0$ to $F$. Label $F$
with the value $\delta (P)\bmod 3$. Note that the label of $F$ is
independent from the choice of $P$ as for any two paths $P_1, P_2$
going from $F_0$ to $F$, we have $\delta (P_1) \equiv \delta (P_2) \ (\bmod 3)$
since $\delta (P_1 - P_2) \equiv 0 \ (\bmod 3)$ as $P_1- P_2$ is a closed walk.

Consider an edge-vertex $v$ of $\pdc{G}$ and a walk $W$ of $\pdc{G}^*$
going \cw around $v$. By assumption  $\delta (W) \equiv 0 \ (\bmod
3)$ and $d(v)=4$ so $d^+(v)\equiv 1 \ (\bmod 3)$.
 One can check (see Figure~\ref{fig:pdc}) that around an
edge-vertex $v$ of outdegree four, all the labels are the same and thus
$v$ corresponds to an edge of $G$ of type 0. One can also check that
around an edge-vertex $v$ of outdegree $1$, the labels are in clockwise
order, $i-1$, $i$, $i$, $i+1$ for some $i$ in $\{0,1,2\}$ where the
two faces with the same label are incident to the outgoing edge of
$v$. Thus, $v$ corresponds to an edge of $G$ of type 1 or 2 depending
on the outgoing edge reaching a primal- or a
dual-vertex. So the obtained labeling of the faces of $\pdc{G}$
corresponds to an {\sc edge} angle labeling of $G$ and the considered
orientation is a Schnyder orientation.
\end{proof}

We now study properties of $\delta$ w.r.t homology in order to
simplify the condition of Lemma~\ref{lem:charforall}.  Let
$\hat{\mc{F}^*}$ be the set of counterclockwise facial walks of
$\pdc{G}^*$.

\begin{lemma}
\label{lem:facedelta0}
In a $\bmod_3$-orientation of $\pdc{G}$, any $F\in\hat{\mc{F}^*}$
satisfies $\delta(F)\equiv 0 \ (\bmod 3)$.
\end{lemma}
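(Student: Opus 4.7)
My plan is to use the intuitive description of $\delta$ as a signed count of out-edges crossing a dual walk, combined with the observation that the counterclockwise facial walks of $\pdc{G}^*$ correspond to the vertices of $\pdc{G}$: each $F\in\hat{\mc{F}^*}$ encircles a unique vertex $v$, so I can compute $\delta(F)$ by a local case analysis at $v$ according to the bipartition of $\pdc{G}$ into primal- and dual-vertices on one side and edge-vertices on the other.

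If $v$ is a primal- or dual-vertex, every edge of $\pdc{G}$ incident to $v$ joins $v$ to an edge-vertex, hence belongs to $\Out$ precisely when it is oriented away from $v$. Since $F$ turns counterclockwise, $v$ sits on the left of $F$, so every out-edge at $v$ crosses $F$ from left to right and contributes $+1$, while in-edges at $v$ are not in $\Out$ and contribute $0$. Therefore $\delta(F)=d^+(v)$, which is $\equiv 0\pmod 3$ by the definition of a $\bmod_3$-orientation.

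If $v$ is an edge-vertex, its four neighbors are primal- or dual-vertices, so now an edge at $v$ belongs to $\Out$ precisely when it is oriented into $v$. Each such in-edge crosses $F$ from right to left, contributing $-1$, while the outgoing edges at $v$ contribute $0$. Hence $\delta(F)=-d^-(v)=-(4-d^+(v))$; since the $\bmod_3$-orientation rule forces $d^+(v)\equiv 1\pmod 3$ at an edge-vertex, we obtain $d^-(v)\equiv 4-1\equiv 0\pmod 3$ and once more $\delta(F)\equiv 0\pmod 3$. There is no real obstacle beyond keeping the left/right convention straight; the only subtle point is that the edge-vertex case closes precisely because the degree $4$ of edge-vertices in $\pdc{G}$ satisfies $4\equiv 1\pmod 3$, which is exactly the residue prescribed for the outdegree there, so the $-d^-(v)$ contribution lands on $0$ modulo $3$.
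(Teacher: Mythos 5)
Your proof is correct and takes essentially the same approach as the paper: identify the face $F\in\hat{\mc{F}^*}$ with the unique vertex $v$ of $\pdc{G}$ it encircles and do a local case analysis on whether $v$ is a primal/dual-vertex or an edge-vertex, reducing $\delta(F)$ to the outdegree (or indegree) at $v$ modulo $3$. The paper phrases the edge-vertex case as ``outdegree one or four, so zero or three out-edges cross from right to left,'' whereas you write $\delta(F)=-d^-(v)=-(4-d^+(v))$, but this is the same bookkeeping.
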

\begin{proof}
  If $F$ corresponds to an edge-vertex $v$ of $\pdc{G}$, then $v$ has
  degree exactly four and outdegree one or four by definition of
  $\bmod_3$-orientations. So there are exactly zero or three out-edges
  crossing $F$ from right to left, and $\delta(F)\equiv 0 \ (\bmod 3)$.

  If $F$ corresponds to a primal- or dual-vertex $v$, then $v$ has
  outdegree $0 \ (\bmod 3)$ by definition of $\bmod_3$-orientations. So
  there are exactly $0 \ (\bmod 3)$ out-edges crossing $F$ from left to
  right, and $\delta(F)\equiv 0 \ (\bmod 3)$.
\end{proof}

\begin{lemma}
\label{lem:basedelta}
In a $\bmod_3$-orientation of $\pdc{G}$, if $\{B_1,\ldots,B_{2g}\}$ is a
set of cycles
of $\pdc{G}^*$ that forms a basis for the
homology, then for any closed walk $W$ of $\pdc{G}^*$ homologous to
$\mu_1 B_1 + \cdots + \mu_{2g} B_{2g}$, we have
$\delta(W)\equiv \mu_1 \delta(B_1) + \cdots + \mu_{2g} \delta(B_{2g}) \ (\bmod
3)$.
\end{lemma}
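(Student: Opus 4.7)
The plan is to reduce the statement directly to the preceding lemma using linearity of $\delta$ and the definition of the first homology group. First I would note that since $W$ is homologous to $\mu_1 B_1 + \cdots + \mu_{2g} B_{2g}$, by definition of $\mb{F}$ there exist integer coefficients $\lambda_F$ indexed by $F\in\hat{\mc{F}^*}$ such that
\[
W \;=\; \sum_{F\in\hat{\mc{F}^*}} \lambda_F\, F \;+\; \sum_{i=1}^{2g} \mu_i\, B_i
\]
as flows of $\pdc{G}^*$. This is the standard decomposition of a closed walk in terms of facial relations plus a homology basis recalled in Section~\ref{sec:homology}.

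Next I would invoke linearity of $\delta$. Recall $\delta(\phi)=\beta(\Out,\phi)$ and $\beta$ is bilinear, so $\delta$ is linear as a function of its flow argument. Applying it to the decomposition above yields
\[
\delta(W) \;=\; \sum_{F\in\hat{\mc{F}^*}} \lambda_F\, \delta(F) \;+\; \sum_{i=1}^{2g} \mu_i\, \delta(B_i).
\]
By Lemma~\ref{lem:facedelta0}, each term $\delta(F)$ in the first sum is divisible by $3$, since we are working with a $\bmod_3$-orientation of $\pdc{G}$. Reducing modulo~$3$ therefore gives
\[
\delta(W) \;\equiv\; \mu_1\delta(B_1)+\cdots+\mu_{2g}\delta(B_{2g}) \pmod 3,
\]
which is exactly the conclusion.

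There is no real obstacle here: the argument is mechanical once the two ingredients (existence of the facial-plus-basis decomposition, and $\delta\equiv 0 \pmod 3$ on facial walks) are in hand. The only point that deserves a brief justification in the write-up is that the decomposition of $W$ in terms of $\hat{\mc{F}^*}$ and the basis has \emph{integer} coefficients, so that taking the relation modulo~$3$ makes sense; this is ensured by the fact that $\mb{W}/\mb{F}\cong \mb{Z}^{2g}$ and that $\{B_1,\dots,B_{2g}\}$ is a homology basis, as discussed in Section~\ref{sec:homology}.
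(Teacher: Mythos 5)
Your proof is correct and follows essentially the same approach as the paper: decompose $W$ in terms of facial walks plus the homology basis, then apply linearity of $\delta$ together with Lemma~\ref{lem:facedelta0}. The paper's version is simply more terse, stating the decomposition and citing linearity and the facial-walk lemma in one breath.
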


\begin{proof} We have
  $W=\sum_{F\in\hat{\mc{F}^*}}\lambda_FF+\sum_{1\leq i\leq
    2g}\mu_iB_i$
  for some $\lambda\in\mathbb{Z}^\mc{F}$.  Then by linearity of $\delta$
  and Lemma~\ref{lem:facedelta0}, the claim follows.
\end{proof}

Lemma~\ref{lem:basedelta} can be used to simplify the condition of
Lemma~\ref{lem:charforall} and show that if $\{B_1,\ldots,B_{2g}\}$ is a
set of cycles
of $\pdc{G}^*$ that forms a basis for the homology,
then an orientation of $\pdc{G}$ is a Schnyder orientation if and only
if it is a $\bmod_3$-orientation such that $\delta(B_{i})\equiv 0 \ (\bmod 3)$,
for all $1\le i\le 2g$. Now, we define a new function $\gamma$
that is used to formulate a similar characterization theorem (see Theorem~\ref{th:characterizationgamma}).

Consider a (not necessarily directed) cycle
$C$ of $G$ together with a direction of traversal. We associate to
$C$ its corresponding cycle in $\pdc{G}$ denoted by $\pdc{C}$.  We
define $\gamma(C)$ by:
$$\gamma (C) = \#\ \text{edges of $\pdc{G}$ leaving $\pdc{C}$ on its right} -
\#\  \text{edges of $\pdc{G}$ leaving $\pdc{C}$ on its left}$$

Since it considers cycles of $\pdc{G}$ instead of walks of
$\pdc{G}^*$, it is easier to deal with parameter $\gamma$ rather than
parameter $\delta$.  However $\gamma$ does not enjoy the same property
w.r.t. homology as $\delta$.  For homology we have to consider walks
as flows, but two walks going several time through a given vertex may
have the same characteristic flow but different $\gamma$.  This
explains why $\delta$ is defined first. Now we adapt the results for
$\gamma$.

The value of $\gamma$ is related to $\delta$ by the next lemmas. Let
$C$ be a cycle of $G$ with a direction of traversal. Let $W_L(C)$ be
the closed walk of $\pdc{G}^*$ just on the left of $C$ and going in
the same direction as $C$ (i.e. $W_L(C)$ is composed of the dual edges
of the edges of $\pdc{G}$ incident to the left of $\pdc{C}$).  Note
that since the faces of $\pdc{G}^*$ have exactly one incident vertex
that is a primal-vertex, walk $W_L(C)$ is in fact a cycle of
$\pdc{G}^*$.  Similarly, let $W_R(C)$ be the cycle of $\pdc{G}^*$ just
on the right of $C$.

\begin{lemma}
\label{lem:gammaequaldelta}
Consider an orientation of $\pdc{G}$ and 
a cycle $C$ of $G$, then $\gamma(C) = \delta (W_L(C)) + \delta (W_R(C))$.
\end{lemma}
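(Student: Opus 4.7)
The plan is to prove the identity by a local analysis, summing contributions edge by edge. First, I would observe that every edge of $\pdc{G}$ incident to $\pdc{C}$ but not lying on $\pdc{C}$ has its dual in exactly one of $W_L(C)$ or $W_R(C)$, according to whether it lies on the left or right side of $\pdc{C}$; edges on $\pdc{C}$ itself and edges disjoint from $\pdc{C}$ contribute nothing to either side of the identity. Thus the identity reduces to checking contributions from the edges incident to $\pdc{C}$ on one of its sides.

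I would then split the analysis by the type of the endpoint $u\in\pdc{C}$ of such an edge $e$, using that $\pdc{G}$ is bipartite and $\pdc{C}$ alternates primal- and edge-vertices. If $u$ is a primal-vertex, then the other end of $e$ is an edge-vertex, so $e$ is an out-edge iff it is oriented away from $u$; a direct geometric check (using that $W_L(C)$ travels parallel to $C$, so its right side lies toward $\pdc{C}$) shows that the contributions to $\gamma(C)$ and to $\delta(W_L(C))$ or $\delta(W_R(C))$ match edge by edge.

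The main obstacle is the edge-vertex case: if $u$ is an edge-vertex, then the other end of $e$ is a dual-vertex and $e$ is an out-edge iff oriented \emph{toward} $u$, so the individual contributions no longer match per edge. The resolution is to pair the unique left edge $e_L$ with the unique right edge $e_R$ at $u$ (which exist because $u$ has degree four in $\pdc{G}$, with exactly two of its edges on $\pdc{C}$). Writing $[e\text{ from }u]\in\{0,1\}$ for the indicator that $e$ is oriented away from $u$, the pair contributes $-[e_L\text{ from }u]+[e_R\text{ from }u]$ to $\gamma(C)$ and $[e_L\text{ to }u]-[e_R\text{ to }u]$ to $\delta(W_L(C))+\delta(W_R(C))$, and these two expressions are equal since $[e\text{ to }u]=1-[e\text{ from }u]$. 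Summing all the contributions over the relevant single edges (at primal-vertices) and pairs (at edge-vertices) of $\pdc{C}$ then yields the identity.
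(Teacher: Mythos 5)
Your proof is correct and follows essentially the same route as the paper's: a local analysis of edges incident to $\pdc{C}$, matching contributions edge-by-edge at primal-vertices and pair-by-pair (left/right) at edge-vertices. The only difference is cosmetic — you condense the edge-vertex case into one indicator-variable identity $[e\text{ to }u]=1-[e\text{ from }u]$ where the paper enumerates the four orientation patterns of $(e_L,e_R)$ explicitly.
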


\begin{proof}
  We consider the different cases that can occur. An edge that is
  entering a primal-vertex of $\pdc{C}$, is not counting in either
  $\gamma(C),\delta (W_L(C)), \delta (W_R(C))$.  An edge that is
  leaving a primal-vertex of $\pdc{C}$ from its right side (resp. left
  side) is counting $+1$ (resp. $-1$) for $\gamma(C)$ and
  $\delta (W_R(C))$ (resp.  $\delta (W_L(C))$).

  For edges incident to edge-vertices of $\pdc{C}$ both sides have to
  be considered at the same time. Let $v$ be an edge-vertex of
  $\pdc{C}$. Vertex $v$ is of degree four so it has exactly two edges
  incident to $\pdc{C}$ and not on $C$. One of these edges, $e_L$, is on
  the left side of $\pdc{C}$ and dual to an edge of $W_L(C)$. The
  other edge, $e_R$, is on the right side of $\pdc{C}$ and dual to an
  edge of $W_R(C)$. If $e_L$ and $e_R$ are both incoming edges for
  $v$, then $e_R$ (resp. $e_L$) is counting $-1$ (resp. $+1$) for
  $\delta (W_R(C))$ (resp.  $\delta (W_L(C))$) and not counting for
  $\gamma(C)$. If $e_L$ and $e_R$ are both outgoing edges for $v$,
  then $e_R$ and $e_L$ are not counting for both $\delta (W_R(C))$,
  $\delta (W_L(C))$ and sums to zero for $\gamma(C)$. If $e_L$ is
  incoming and $e_R$ is outgoing for $v$, then $e_R$ (resp. $e_L$) is
  counting $0$ (resp. $+1$) for $\delta (W_R(C))$ (resp.
  $\delta (W_L(C))$), and counting $+1$ (resp. $0$) for
  $\gamma(C)$. The last case, $e_L$ is outgoing
   and $e_R$ is incoming, is symmetric and one can see that in the
   four cases we have that $e_L$ and $e_R$ count the same for $\gamma(C)$
  and $\delta (W_L(C)) + \delta (W_R(C))$.  We conclude
  $\gamma(C)=\delta (W_L(C)) + \delta (W_R(C))$.
\end{proof}

\begin{lemma}
\label{lem:deltagammaequ0mod3}
In a $\bmod_3$-orientation of $G$, a cycle $C$ of $G$ satisfies
$$\delta (W_L(C))\equiv 0 \ (\bmod 3) \ \ \text{and}\ \ \delta (W_R(C))\equiv 0 \ (\bmod 3) \iff
\gamma(C) \equiv 0 \ (\bmod 3)
$$ 
\end{lemma}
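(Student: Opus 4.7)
The direction $(\Longrightarrow)$ is immediate from Lemma~\ref{lem:gammaequaldelta}: since $\gamma(C) = \delta(W_L(C)) + \delta(W_R(C))$, if both summands vanish modulo $3$ so does the sum.

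For the converse, my strategy is to first prove $\delta(W_L(C)) \equiv \delta(W_R(C)) \pmod 3$ and then combine this with the hypothesis $\gamma(C) \equiv 0 \pmod 3$ via Lemma~\ref{lem:gammaequaldelta}. Indeed, together they yield $2\,\delta(W_L(C)) \equiv 0 \pmod 3$, and since $2$ is invertible modulo $3$, this forces both $\delta(W_L(C))$ and $\delta(W_R(C))$ to vanish modulo $3$.

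To obtain $\delta(W_L(C)) \equiv \delta(W_R(C)) \pmod 3$, the idea is that $W_L(C)$ and $W_R(C)$ are homologous as flows of $\pdc{G}^*$: intuitively they are two parallel closed walks running on opposite sides of $\pdc{C}$, so their difference should be the boundary of a thin tubular neighborhood of $\pdc{C}$. Combinatorially, I would prove the identity
\[
\phi(W_L(C)) - \phi(W_R(C)) = \pm\sum_{v \in V(\pdc{C})} \phi(F_v^*),
\]
where $F_v^*$ denotes the counterclockwise facial walk of $\pdc{G}^*$ around $v$. Once this identity is established, Lemma~\ref{lem:facedelta0} (stating $\delta(F_v^*) \equiv 0 \pmod 3$) together with the linearity of $\delta$ immediately gives $\delta(W_L(C)) - \delta(W_R(C)) \equiv 0 \pmod 3$, as required.

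The main obstacle is verifying the above identity, which is a local computation at each vertex $v$ of $\pdc{C}$. The facial walk $F_v^*$ traverses the dual edges of all edges of $\pdc{G}$ incident to $v$: the dual edges of those incident to $v$ and lying on the left (respectively right) of $\pdc{C}$ form precisely the portion of $W_L(C)$ (respectively the reverse of $W_R(C)$) at $v$, while the dual edges of the two edges of $\pdc{C}$ meeting at $v$ appear in the facial walks of both endpoints of each $\pdc{C}$-edge with opposite orientations and therefore cancel upon summation. The work here is essentially careful sign and orientation bookkeeping, but no deeper topology is needed because all three objects involved ($W_L(C)$, $W_R(C)$, and the $F_v^*$'s) are built directly from the dual edges of edges incident to $V(\pdc{C})$.
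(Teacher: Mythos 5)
Your proof is correct, and the key step is reached by a genuinely different route. Both you and the paper reduce the backward implication to showing $\delta(W_L(C)) \equiv \delta(W_R(C)) \ (\bmod 3)$, which combined with $\gamma(C) = \delta(W_L(C)) + \delta(W_R(C)) \equiv 0 \ (\bmod 3)$ and the invertibility of $2$ modulo $3$ gives the conclusion. The paper obtains this congruence by a direct double count: it writes $\delta(W_L(C)) = y_L - x_L$, $\delta(W_R(C)) = x_R - y_R$ in terms of counts of off-cycle edges at $\pdc{C}$ (outgoing from primal-vertices, incoming to edge-vertices), then counts the $2k$ outgoing ends of the $\pdc{C}$-edges in two ways using the $\bmod_3$ outdegree constraints on the $2k$ vertices of $\pdc{C}$, to get $(y_L+y_R)-(x_L+x_R)=\delta(W_L(C))-\delta(W_R(C))\equiv 0 \ (\bmod 3)$. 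You instead observe that $\phi(W_L(C))-\phi(W_R(C))$ is $0$-homologous in $\pdc{G}^*$, being up to sign $\sum_{v\in V(\pdc{C})}\phi(F_v^*)$, the sum of the facial walks of $\pdc{G}^*$ around vertices of $\pdc{C}$; the cancellation of the $\pdc{C}$-edges in that sum is clean because $\pdc{G}$ is bipartite, so $\pdc{C}$ has no chords. You then cite Lemma~\ref{lem:facedelta0} and linearity of $\delta$. This is more modular: it makes the tubular-neighborhood picture explicit and reuses an established lemma (equivalently, Lemma~\ref{lem:basedelta}), whereas the paper re-derives the needed $\bmod 3$ fact from scratch; the underlying local bookkeeping is the same. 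One small inaccuracy: once a traversal direction of $C$ is fixed the sign in your identity is determined, and your verbal sketch has the two sides swapped (in $\sum_v \phi(F_v^*)$ the left-side duals appear with orientation opposite to $W_L(C)$ while the right-side duals agree with $W_R(C)$), but since you only use the identity up to sign and modulo $3$, this is immaterial.
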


\begin{proof}
($\Longrightarrow$) Clear by Lemma~\ref{lem:gammaequaldelta}.

($\Longleftarrow$) Suppose that $\gamma(C)\equiv 0 \ (\bmod 3)$.  Let $x_L$
(resp. $y_L$) be the number of edges of $\pdc{G}$ that are dual to edges
of $W_L(C)$, that are outgoing for a primal-vertex of $\pdc{C}$
(resp. incoming for an edge-vertex of $\pdc{C}$). Similarly, let $x_R$
(resp. $y_R$) be the number of edges of $\pdc{G}$ that are dual to edges
of $W_R(C)$, that are outgoing for a primal-vertex of $\pdc{C}$
(resp. incoming for an edge-vertex of $\pdc{C}$).  So
$\delta (W_L(C))=y_L-x_L$ and $\delta (W_R(C))=x_R-y_R$. So by
Lemma~\ref{lem:gammaequaldelta},
$ \gamma(C)= \delta (W_L(C))+\delta (W_R(C))=(y_L+x_R)-(x_L+y_R) \equiv 0
\ (\bmod 3)$.

Let $k$ be the number of vertices of $C$. So $\pdc{C}$ has $k$
primal-vertices, $k$ edge-vertices and $2k$ edges.  Edge-vertices have
outdegree $1 \ (\bmod 3)$ so their total number of outgoing edges on
$\pdc{C}$ is $k+(y_L+y_R) \ (\bmod 3)$.  Primal-vertices have outdegree
$0 \ (\bmod 3)$ so their total number of outgoing edges on $\pdc{C}$ is
$-(x_L+x_R) \ (\bmod 3)$. So in total $2k\equiv k+(y_L+y_R)-(x_L+x_R) \ (\bmod 3)$.
So $(y_L+y_R)-(x_L+x_R)\equiv 0 \ (\bmod 3)$. By combining this with plus
(resp. minus) $(y_L+x_R)-(x_L+y_R) \equiv 0 \ (\bmod 3)$, one obtains that
$2\delta (W_L(C))=2(y_L-x_L)\equiv 0 \ (\bmod 3)$ (resp. 
$2\delta (W_R(C))=2(x_R-y_R)\equiv 0 \ (\bmod 3)$). Since $\delta (W_L(C))$ and
$\delta (W_R(C))$ are integer we obtain $\delta (W_L(C))\equiv 0 \ (\bmod 3)$
and $\delta (W_R(C))\equiv 0 \ (\bmod 3)$.
\end{proof}

Finally we have the following characterization theorem concerning
Schnyder orientations:

\begin{theorem}
\label{th:characterizationgamma}
Consider a map $G$ on an orientable surface of genus $g$. Let
$\{B_1,\ldots,B_{2g}\}$ be a set of cycles of $G$ that forms a basis
for the homology.  An orientation of $\pdc{G}$ is a Schnyder orientation
if and only if it is a $\bmod_3$-orientation such that
$\gamma(B_{i})\equiv 0 \ (\bmod 3)$, for all $1\le i\le 2g$.
\end{theorem}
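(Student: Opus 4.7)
The plan is to derive Theorem~\ref{th:characterizationgamma} from Lemma~\ref{lem:charforall} (which already characterizes Schnyder orientations via $\delta$ on \emph{all} closed walks of $\pdc{G}^*$) by using the given homology basis $\{B_1,\ldots,B_{2g}\}$ of $G$ together with Lemma~\ref{lem:basedelta}.

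For the forward direction, I would first note that a Schnyder orientation is a $\bmod_3$-orientation, as established in the discussion preceding Lemma~\ref{lem:charforall}. Since $W_L(B_i)$ and $W_R(B_i)$ are closed walks of $\pdc{G}^*$, Lemma~\ref{lem:charforall} gives $\delta(W_L(B_i))\equiv 0 \ (\bmod 3)$ and $\delta(W_R(B_i))\equiv 0 \ (\bmod 3)$, and then Lemma~\ref{lem:gammaequaldelta} immediately yields $\gamma(B_i)=\delta(W_L(B_i))+\delta(W_R(B_i))\equiv 0 \ (\bmod 3)$.

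For the backward direction, assume a $\bmod_3$-orientation with $\gamma(B_i)\equiv 0 \ (\bmod 3)$ for every $i$. Lemma~\ref{lem:deltagammaequ0mod3} then provides $\delta(W_L(B_i))\equiv 0 \ (\bmod 3)$ for each $i$. The central claim will be that $\{W_L(B_1),\ldots,W_L(B_{2g})\}$ forms a basis for the homology of $\pdc{G}^*$. Granting this, Lemma~\ref{lem:basedelta} applied inside $\pdc{G}^*$ tells us that any closed walk $W$ of $\pdc{G}^*$ is homologous to $\sum_i \mu_i W_L(B_i)$, so by linearity of $\delta$ we obtain $\delta(W)\equiv \sum_i \mu_i\, \delta(W_L(B_i))\equiv 0\ (\bmod 3)$; Lemma~\ref{lem:charforall} then closes the argument.

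The main obstacle is thus justifying that $\{W_L(B_i)\}_i$ is a homology basis of $\pdc{G}^*$. The intuitive reason is topological: $W_L(B_i)$ runs parallel to $\pdc{B_i}$ on the surface and is therefore freely homotopic to $B_i$, hence represents the same element of the first homology of the underlying genus-$g$ surface. Since the graph-theoretic homology groups $H(G)$ and $H(\pdc{G}^*)$ both realize this topological first homology (each being $\mb{Z}^{2g}$), a homology basis of $G$ is transported by the operation $B\mapsto W_L(B)$ to one of $\pdc{G}^*$. In a more combinatorial flavor, one can fix a homology basis of $\pdc{G}^*$ via the spanning-tree recipe of Section~\ref{sec:homology} and verify that each of its $2g$ distinguished cycles is homologous in $\pdc{G}^*$ (i.e., differs by facial walks) to a $\mb{Z}$-combination of the $W_L(B_i)$. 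Either viewpoint completes the argument.
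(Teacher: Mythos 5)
Your proposal is correct and follows essentially the same route as the paper's proof: forward via Lemma~\ref{lem:charforall} applied to the closed walks $W_L(B_i)$, $W_R(B_i)$ combined through Lemma~\ref{lem:gammaequaldelta}/\ref{lem:deltagammaequ0mod3}, and backward via Lemma~\ref{lem:deltagammaequ0mod3}, the observation that $\{W_L(B_i)\}_i$ is a homology basis of $\pdc{G}^*$, Lemma~\ref{lem:basedelta}, and Lemma~\ref{lem:charforall}. The only added value in your write-up is that you sketch a justification for why $\{W_L(B_i)\}_i$ is a homology basis, a step the paper asserts without comment.
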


\begin{proof}
  $(\Longrightarrow)$ Consider an {\sc edge} angle
  labeling $\ell$ of $G$ and the corresponding Schnyder orientation
  (see Figure~\ref{fig:pdc}). Type~0 edges correspond to edge-vertices
  of outdegree four, while type~1 and~2 edges correspond to
  edge-vertices of outdegree $1$. Thus $d^+(v)\equiv 1 \ (\bmod 3)$ if $v$ is an
  edge-vertex.
By Lemma~\ref{lem:EDGElabeling}, the labeling
  is {\sc vertex} and {\sc face}. Thus $d^+(v) \equiv 0 \ (\bmod 3)$ if $v$ is a
  primal- or dual-vertex. So the orientation is a
  $\bmod_3$-orientation. By Lemma~\ref{lem:charforall}, we have
  $\delta (W) \equiv 0 \ (\bmod 3)$ for any closed walk $W$ of $\pdc{G}^*$. So
  we have that $\delta(W_L(B_1)), \ldots, \delta(W_L(B_{2g}))$,
  $\delta(W_R(B_1)), \ldots, \delta(W_R(B_{2g}))$ are all congruent to
  $0 \ (\bmod 3)$.  Thus, by Lemma~\ref{lem:deltagammaequ0mod3}, we have
  $\gamma(B_{i}) \equiv 0 \ (\bmod 3)$, for all $1\le i\le 2g$.

  $(\Longleftarrow)$ Consider a $\bmod_3$-orientation of $G$ such that
  $\gamma(B_{i}) \equiv 0 \ (\bmod 3)$, for all $1\le i\le 2g$.  By
  Lemma~\ref{lem:deltagammaequ0mod3}, we have 
  $\delta(W_L(B_{i})) \equiv 0 \ (\bmod 3)$ for all $1\le i\le 2g$. Moreover
  $\{W_L(B_1),\ldots,W_L(B_{2g})\}$ forms a basis for the homology.  So
  by Lemma~\ref{lem:basedelta}, $\delta (W) \equiv 0 \ (\bmod 3)$ for any
  closed walk $W$ of $\pdc{G}^*$. So the orientation is a Schnyder
  orientation by Lemma~\ref{lem:charforall}.
\end{proof}

The condition of Theorem~\ref{th:characterizationgamma} is easy to
check: choose $2g$ cycles that form a basis for the homology and check
whether $\gamma$ is congruent to $0\bmod 3$ for each of them.

When restricted to triangulations and to edges of type $1$ only, the
defintion of $\gamma$ can be simplified. Consider a triangulation $G$
on an orientable surface of genus $g$ and an orientation of the edges
of $G$.
Figure~\ref{fig:trimap} shows how
 to transform the orientation of $G$ into an orientation of
 $\pdc{G}$. Note that all the edge-vertices have outdegree exactly
 $1$. Furthermore, all the dual-vertices only have outgoing edges and
 since we are considering triangulations they have outdegree exactly
 three.

 \begin{figure}[!h]
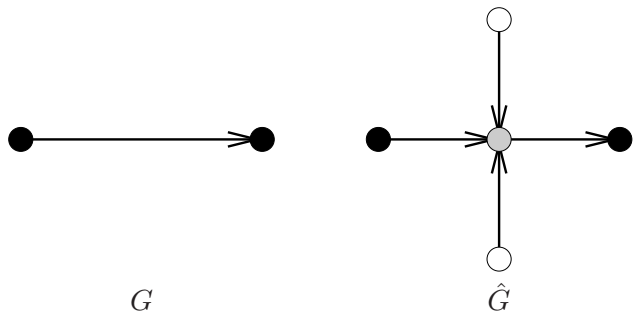

 \center
 \begin{tabular}{cc}
 \includegraphics[scale=0.5]{trimap-0.eps} \ \ \  &  \ \ \
 \includegraphics[scale=0.5]{trimap-1.eps} \\
 $G$ \ \ \  &  \ \ \ $\pdc{G}$ \\
 \end{tabular}
 \caption{How to transform an orientation of a triangulation $G$ into
   an orientation of $\pdc{G}$.}
 \label{fig:trimap}
 \end{figure}

Then the definition of $\gamma$ can be
simplified by the following:
$$\gamma (C) = \#\ \text{edges of ${G}$ leaving ${C}$ on its right} -
\#\  \text{edges of ${G}$ leaving ${C}$ on its left}$$

Note that comparing to the general definition of $\gamma$, only the symbols
$\ \pdc{ }\ $ have been removed.

The orientation of the toroidal triangulation on the left of
Figure~\ref{fig:orientation} is an example of a 3-orientation of a
toroidal triangulation where some non contractible cycles have value
$\gamma$ not congruent to $0 \bmod 3$.  The value of $\gamma$ for the
three loops is $2, 0$ and $-2$. This explains why this orientation does
not correspond to a Schnyder wood. On the contrary, on the right of
the figure, the three loops have $\gamma$ equal to $0$ and we have a
Schnyder wood.

\section{Structure of Schnyder orientations}
\label{sec:stru}
\subsection{Transformations between Schnyder orientations}
\label{sec:transformations}

We investigate the structure of the set of Schnyder orientations of a
given graph. For that purpose we need some definitions that are given
on a general map $G$ and then applied to $\pdc{G}$.

Consider a map $G$ on an orientable surface of genus $g$.  Given two
orientations $D$ and $D'$ of $G$, let $D\setminus D'$ denote the
subgraph of $D$ induced by the edges that are not oriented as in $D'$.

An oriented subgraph $T$ of $G$ is \emph{partitionable} if its edge
set can be partitioned into three sets $T_0$, $T_1$, $T_2$ such that
all the $T_i$ are pairwise homologous, i.e. $T_i-T_j\in\mb{F}$ for
$i,j\in\{0,1,2\}$.  An oriented subgraph $T$ of $G$ is called a
\emph{topological Tutte-orientation} if $\beta(T,W)\equiv 0 \ (\bmod 3)$ for
every closed walk $W$ in $G^*$ (more intuitively, the number of edges
crossing $W$ from left to right minus the number of those crossing $W$
from right to left is divisible by three).

The name ``topological Tutte-orientation'' comes from the fact that an
oriented graph $T$ is called a \emph{Tutte-orientation} if the difference of outdegree and indegree
is divisible by three, i.e. $d^+(v)-d^-(v)\equiv 0 \ (\bmod 3)$, for every vertex $v$.  So a topological
Tutte-orientation is a Tutte orientation, since the latter requires
the condition of the topological Tutte orientation only for the walks
$W$ of $G^*$ going around a vertex $v$ of $G$.

The notions of partitionable and topological
Tutte-orientation are equivalent:

\begin{lemma}\label{lm:topologicalTutte}
  An oriented subgraph of $G$ is partitionable if and only if it is a
  topological Tutte-orientation.
\end{lemma}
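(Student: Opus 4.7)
The forward direction is a direct computation. Assuming $T=T_0\sqcup T_1\sqcup T_2$ with the $T_i$ pairwise homologous, Lemma~\ref{lm:homologous} gives $\beta(T_i,W)=\beta(T_j,W)$ for every closed walk $W$ of $G^*$ and every $i,j$. Hence by linearity of $\beta$,
\[
\beta(T,W)=\beta(T_0,W)+\beta(T_1,W)+\beta(T_2,W)=3\beta(T_0,W)\equiv 0\ (\bmod 3),
\]
so $T$ is a topological Tutte-orientation.

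The backward direction is the substantive one. Given the mod-$3$ hypothesis, I plan to construct the partition by a face-labeling argument analogous to the $(\Longleftarrow)$ direction of Lemma~\ref{lem:charforall}. Fix an arbitrary face $F_0$ of $G$, set $\ell(F_0):=0$, and for any other face $F$ pick a path $P$ of $G^*$ from $F_0$ to $F$ and set $\ell(F):=\beta(T,P) \bmod 3$. Two such paths differ by a closed walk of $G^*$, on which $\beta(T,\cdot)$ vanishes mod $3$ by hypothesis; so $\ell:\mc{F}\to\mb{Z}_3$ is well-defined. A one-step computation using the dual edge $e^*$ (which goes from the right face of $e$ to its left face) then yields $\ell(F_L)-\ell(F_R)\equiv T_e\ (\bmod 3)$ for every edge $e$, where $F_L,F_R$ are its left and right faces under the fixed reference orientation.

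Next, I would color each edge $e\in T$ by $c(e):=\ell(F)\in\mb{Z}_3$, where $F$ is the face lying to the left of $e$ as oriented in $T$ (so by the identity above, $F$ carries label one more mod $3$ than the face to its right). Defining $T_i:=\{e\in T: c(e)=i\}$ and inheriting $T$'s orientations produces a partition $T=T_0\sqcup T_1\sqcup T_2$. To conclude, I propose to establish the explicit identity
\[
T_i - T_{i+1} \;=\; \sum_{F\in\mc{F}}[\ell(F)=i]\cdot F
\]
in $\mb{Z}^E$, where each facial walk $F$ is identified with its characteristic flow and $[\cdot]$ is the Iverson bracket. Since the right-hand side lies in $\mb{F}$ by construction, this identity immediately yields $T_i-T_{i+1}\in\mb{F}$ for each $i$, hence pairwise homology of the $T_i$.

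The main obstacle is verifying this identity. The verification is an edge-by-edge case analysis: for a given edge $e$, the $e$-coordinate of the right-hand side equals $[\ell(F_L)=i]-[\ell(F_R)=i]$ (only the facial walks of $F_L$ and $F_R$ pass through $e$, in opposite directions), and this must be matched with $(T_i-T_{i+1})_e$ in the three cases $T_e=0$, $T_e=+1$, and $T_e=-1$. The subtlety is that the color $c(e)$ depends on the orientation of $e$ in $T$ whereas $F_L,F_R$ are determined by the fixed reference orientation, but the congruence $\ell(F_L)-\ell(F_R)\equiv T_e\ (\bmod 3)$ reconciles the bookkeeping uniformly in all three cases.
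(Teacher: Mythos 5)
Your proof is correct and follows essentially the same route as the paper's. The only cosmetic differences are that the paper labels the faces of $T$ (viewed as an embedded graph) rather than the faces of $G$ -- equivalent, since your label $\ell$ is constant on the $G$-faces within a single $T$-face -- and that it indexes the parts so that an edge of $T_i$ sees labels $i-1$ and $i+1$ on its two sides (a shift of your indexing); the key identity $\phi(T_i)-\phi(T_{i+1})=\sum_{\ell(F)=i}\phi(F)$ is exactly the paper's $\phi(T_{i+1})-\phi(T_{i-1})=\sum_{F\in\mc{F}_i}\phi(F)$ after reindexing, and your edge-by-edge check of it is sound.
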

\begin{proof}
  $(\Longrightarrow)$ If $T$ is partitionable, then by definition it
  is the disjoint union of three homologous edge sets $T_0$, $T_1$,
  and $T_2$. Hence by Lemma~\ref{lm:homologous}, $\beta(T_0,W) =
  \beta(T_1,W) = \beta(T_2,W)$ for any closed walk $W$ of $G^*$.  By
  linearity of $\beta$ this implies that $\beta(T,W)\equiv 0 \ (\mod 3)$ for
  any closed walk $W$ of $G^*$. So $T$ is a topological
  Tutte-orientation.

  $(\Longleftarrow)$ Let $T$ be a topological Tutte-orientation of
  $G$, i.e. $\beta(T,W)\equiv 0 \ (\mod 3)$ for any closed walk $W$ of $G^*$.
  In the following, \emph{$T$-faces} are the faces of $T$ considered as
  an embedded graph. Note that $T$-faces are not necessarily disks.
  Let us introduce a $\{0,1,2\}$-labeling of the $T$-faces. Label an
  arbitrary $T$-face $F_0$ by $0$. For any $T$-face $F$, find a path
  $P$ of $G^*$ from $F_0$ to $F$. Label $F$ with $\beta(T,P) \ (\bmod
  3)$. Note that the label of $F$ is independent from the choice of $P$
  by our assumption on closed walks.  For $0\leq i \leq 2$, let $T_i$
  be the set of edges of $T$ with two incident $T$-faces labeled
  $i-1$ and $i+1$.  Note that an edge of $T_i$ has label $i-1$ on its
  left and label $i+1$ on its right.  The sets $T_i$ form a partition
  of the edges of $T$.  Let $\mc{F}_i$ be the counterclockwise facial
  walks of $G$ that are in a $T$-face labeled $i$.  We have
  $\phi(T_{i+1})-\phi(T_{i-1})=\sum_{F\in \mc{F}_i}\phi(F)$, so the
  $T_i$ are homologous.
\end{proof}

Let us refine the notion of partitionable.  Denote by $\mathcal{E}$
the set of \emph{oriented Eulerian subgraphs} of ${G}$ (i.e. the
oriented subgraphs of ${G}$ where each vertex has the same in- and
out-degree).  Consider a partitionable oriented subgraph $T$ of $G$,
with edge set partition $T_0$, $T_1$, $T_2$ having the same homology.
We say that $T$ is \emph{Eulerian-partitionable} if $T_i\in\mc{E}$ for
all $0\leq i \leq 2$. Note that if $T$ is Eulerian-partitionable then
it is Eulerian.  Note that an oriented subgraph $T$ of $G$ that is
$0$-homologous is also Eulerian and thus Eulerian-partitionable (with
the partition $T,\emptyset,\emptyset$). 

We now investigate the structure of Schnyder orientations.  For that
purpose, consider a map $G$ on an orientable surface of genus $g$ and
apply the above definitions and results to orientations of $\pdc{G}$.

Let $D,D'$ be two orientations of $\pdc{G}$ such that $D$ is a
Schnyder orientation and $T=D\setminus D'$. Let
$\Out=\{(u,v)\in E(D)\mid v\text{ is an edge-vertex}\}$. Similarly,
let $\Out'=\{(u,v)\in E(D')\mid v\text{ is an edge-vertex}\}$. Note
that an edge of $T$ is either in $\Out$ or in $\Out'$, so
$\phi(T)=\phi(\Out)- \phi(\Out')$. By Lemma~\ref{lem:charforall}, for
any closed walk $W$ of $\pdc{G}^*$, $\beta(\Out,W)\equiv 0 \ (\bmod 3)$. The
three following lemmas give necessary and sufficient conditions on $T$
for $D'$ being a Schnyder orientation.

\begin{lemma}\label{thm:EDGEtransform}
  $D'$ is a Schnyder orientation if and only if $T$ is
  partitionable.
\end{lemma}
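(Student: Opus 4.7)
The plan is to reduce the statement to Lemma~\ref{lem:charforall} (which characterizes Schnyder orientations by congruences of $\delta$) combined with Lemma~\ref{lm:topologicalTutte} (which identifies partitionability with being a topological Tutte-orientation). The bridge between these two characterizations will be the identity $\phi(T)=\phi(\Out)-\phi(\Out')$ already announced in the paragraph preceding the lemma, together with the bilinearity of $\beta$.

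First I would justify that identity by a short case analysis on an edge $e$ of $\pdc{G}$, using that $\pdc{G}$ is bipartite so that $e$ is incident to exactly one edge-vertex. If $D$ and $D'$ agree on $e$, then either $e$ lies in both $\Out$ and $\Out'$ with the same sign, or in neither, and $e\notin T$, giving both sides equal to $0$. If $D$ and $D'$ disagree on $e$, then exactly one of $\Out, \Out'$ contains $e$; a check of signs with respect to the reference orientation shows that $\phi(\Out)_e-\phi(\Out')_e$ equals $\phi(T)_e$, where $T$ carries the orientation inherited from $D$.

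Once this is established, for any closed walk $W$ of $\pdc{G}^*$ the bilinearity of $\beta$ gives
\[
\beta(T,W)\;=\;\beta(\Out,W)-\beta(\Out',W)\;=\;\delta(W)-\delta'(W),
\]
where $\delta$ and $\delta'$ are the functions of Section~\ref{sec:gencharacterization} attached to $D$ and $D'$ respectively. Since $D$ is a Schnyder orientation, Lemma~\ref{lem:charforall} yields $\delta(W)\equiv 0\pmod 3$ for every closed walk $W$ of $\pdc{G}^*$. Therefore $\delta'(W)\equiv 0\pmod 3$ for every such $W$ if and only if $\beta(T,W)\equiv 0\pmod 3$ for every such $W$.

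Applying Lemma~\ref{lem:charforall} again, the left-hand condition is precisely that $D'$ is a Schnyder orientation, and the right-hand condition is precisely that $T$ is a topological Tutte-orientation of $\pdc{G}$. By Lemma~\ref{lm:topologicalTutte} the latter is equivalent to $T$ being partitionable, which closes the equivalence. No serious obstacle is expected here: the only mildly delicate point is keeping track of signs in the identity $\phi(T)=\phi(\Out)-\phi(\Out')$, which is a routine but necessary bookkeeping step.
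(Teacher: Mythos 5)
Your proof is correct and takes essentially the same route as the paper: apply Lemma~\ref{lem:charforall} to both $D$ and $D'$, use the identity $\phi(T)=\phi(\Out)-\phi(\Out')$ together with bilinearity of $\beta$ to pass the $\bmod\ 3$ condition from $\Out'$ to $T$, and conclude with Lemma~\ref{lm:topologicalTutte}. The only difference is that you spell out the sign bookkeeping behind $\phi(T)=\phi(\Out)-\phi(\Out')$, which the paper records in the sentence just before the lemma statement.
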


\begin{proof}
  Let $D'$ is a Schnyder orientation.  By Lemma~\ref{lem:charforall},
  this is equivalent to the fact that for any closed walk $W$ of
  $\pdc{G}^*$, we have $\beta (\Out',W) \equiv 0 \ (\bmod 3)$. Since
  $\beta(\Out,W)\equiv 0 \ (\bmod 3)$, this is equivalent to the fact that for
  any closed walk $W$ of $\pdc{G}^*$, we have
  $\beta (T,W) \equiv 0 \ (\bmod 3)$. Finally, by
  Lemma~\ref{lm:topologicalTutte} this is equivalent to $T$ being
  partitionable.
\end{proof}

\begin{lemma}\label{th:EDGEeulerian}
  $D'$ is a Schnyder orientation having the same outdegrees as $D$ if
  and only if $T$ is Eulerian-partitionable.
\end{lemma}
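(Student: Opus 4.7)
The plan is to combine Lemma~\ref{thm:EDGEtransform} with a short divergence computation. The first step is to observe that $D$ and $D'$ have the same outdegrees at every vertex if and only if $T = D \setminus D'$ is Eulerian. Indeed, orienting the edges of $T$ as in $D$, each outgoing edge of $T$ at $v$ gets reversed in $D'$ and thus decreases the outdegree by one, while each incoming edge of $T$ at $v$ increases it by one, so $d^+_{D'}(v) = d^+_D(v) - d^+_T(v) + d^-_T(v)$; equality with $d^+_D(v)$ at every $v$ is precisely the condition $d^+_T(v) = d^-_T(v)$.

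For the forward implication, I would assume that $D'$ is a Schnyder orientation with the same outdegrees as $D$. By Lemma~\ref{thm:EDGEtransform}, $T$ admits a partition $T = T_0 \sqcup T_1 \sqcup T_2$ with the $T_i$ pairwise homologous. Since $T_i - T_j \in \mb{F}$ is an integer combination of counterclockwise facial walks, and a closed walk has zero net flow at every vertex, we obtain
$$d^+_{T_i}(v) - d^-_{T_i}(v) = d^+_{T_j}(v) - d^-_{T_j}(v)$$
for every vertex $v$ of $\pdc{G}$ and every $i,j \in \{0,1,2\}$. Summing the three equal quantities and using that $T$ is Eulerian by the first step yields $3\left(d^+_{T_0}(v) - d^-_{T_0}(v)\right) = 0$, hence each $T_i$ is itself Eulerian. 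Thus $T$ is Eulerian-partitionable.

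The reverse direction is essentially immediate: if $T$ is Eulerian-partitionable then it is in particular partitionable, so Lemma~\ref{thm:EDGEtransform} gives that $D'$ is a Schnyder orientation; moreover $T$ is the disjoint union of three Eulerian subgraphs, hence Eulerian, so the first step guarantees that $D$ and $D'$ share the same outdegrees. The only delicate point is the upgrade in the forward direction from ``partitionable'' to ``Eulerian-partitionable'': it is precisely the combination of the global homology condition (each facial walk has zero vertex-divergence, forcing the three pieces to have equal divergence at every vertex) with the pointwise Eulerian condition on $T$ (forcing their sum to vanish) that collapses each individual divergence to zero.
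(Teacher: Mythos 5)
Your proof is correct and follows essentially the same path as the paper's: the reverse direction is the same, and the forward direction rests on the same two observations, namely that $T$ Eulerian is equivalent to equal outdegrees and that the homology constraint forces the three pieces $T_0,T_1,T_2$ to have equal divergence at each vertex, which together with $\mathrm{div}(T)=0$ gives each $T_i$ Eulerian. The only cosmetic difference is that the paper extracts the divergence equality $\beta(T_0,W_v)=\beta(T_1,W_v)=\beta(T_2,W_v)$ by pairing against the small dual walk $W_v$ around $v$ via Lemma~\ref{lm:homologous}, whereas you read it off directly from $T_i-T_j\in\mb{F}$ and the zero-divergence of facial walks, which is a slightly more elementary phrasing of the same fact.
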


\begin{proof}
  $(\Longrightarrow)$ Suppose $D'$ is a Schnyder orientation having
  the same outdegrees as $D$.  Lemma~\ref{thm:EDGEtransform} implies
  that $T$ is partitionable into $T_0$, $T_1$, $T_2$ having the same
  homology. By Lemma~\ref{lm:homologous}, for each closed walk $W$ of
  $\pdc{G}^*$, we have $\beta(T_0,W)=\beta(T_1,W)=\beta(T_2,W)$.
  Since $D,D'$ have the same outdegrees, we have that $T$ is
  Eulerian. Consider a vertex $v$ of $\pdc{G}$ and a walk $W_v$ of
  $\pdc{G}^*$ going counterclockwise around $v$. For any oriented
  subgraph $H$ of $\pdc{G}^*$, we have
  $d_H^+(v)-d_H^-(v)=\beta(H,W_v)$, where $d_H^+(v)$ and $d_H^-(v)$ denote the outdegree and indegree of $v$ restricted to $H$, respectively. Since $T$ is Eulerian,
  we  have $\beta(T,W_v)=0$. Since
  $\beta(T_0,W_v)=\beta(T_1,W_v)=\beta(T_2,W_v)$ and $\sum
  \beta(T_i,W_v)=\beta(T,W_v)=0$, we obtain that
  $\beta(T_0,W_v)=\beta(T_1,W_v)=\beta(T_2,W_v)=0$. So each $T_i$ is
  Eulerian.

  $(\Longleftarrow)$ Suppose $T$ is Eulerian-partitionable.  Then
  Lemma~\ref{thm:EDGEtransform} implies that $D'$ is a Schnyder
  orientation. Since $T$ is Eulerian, the two orientations $D,D'$ have
  the same outdegrees.
\end{proof}

Consider $\{B_1,\ldots,B_{2g}\}$ a set of cycles of ${G}$
that forms a basis for the homology. For $\Gamma \in \mathbb{Z}^{2g}$,
an orientation of $\pdc{G}$ is of \emph{type} $\Gamma$ if
$\gamma(B_{i})=\Gamma_i$ for all $1\le i\le 2g$.

\begin{lemma}\label{th:EDGE0}
  $D'$ is a Schnyder orientation having the same outdegrees and the
  same type as $D$ (for the considered basis) if and only if $T$ is
  $0$-homologous (i.e. $D,D'$ are homologous).
\end{lemma}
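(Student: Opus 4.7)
The plan is to chain together the earlier lemmas, then localize the remaining content to a single pairing computation. Both directions hinge on Lemma \ref{lm:homologous} applied to $\pdc{G}$, using the basis $\{W_L(B_1),\ldots,W_L(B_{2g})\}$ of the homology of $\pdc{G}^*$ (a basis by the argument already used in the proof of Theorem \ref{th:characterizationgamma}). Recall also the identity $\phi(T)=\phi(\mathrm{Out})-\phi(\mathrm{Out}')$ already stated in the paper, so that $\beta(T,W)=\delta_D(W)-\delta_{D'}(W)$ for every closed walk $W$ of $\pdc{G}^*$.

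For the $(\Longrightarrow)$ direction, I assume $D'$ is Schnyder with the same outdegrees and the same type as $D$. Lemma \ref{th:EDGEeulerian} gives that $T$ is Eulerian-partitionable, hence Eulerian. By Lemma \ref{lm:homologous}, to show $T$ is $0$-homologous it suffices to check $\beta(T,F)=0$ on every facial walk $F$ of $\pdc{G}^*$ and $\beta(T,W_L(B_i))=0$ for $1\le i\le 2g$. For a facial walk $F_v$ at a vertex $v$ of $\pdc{G}$, $\beta(T,F_v)=d_T^+(v)-d_T^-(v)=0$ since $T$ is Eulerian. For the basis elements, Lemma \ref{lem:gammaequaldelta} applied to both orientations gives
\[
\gamma_D(B_i)-\gamma_{D'}(B_i)=\beta(T,W_L(B_i))+\beta(T,W_R(B_i)),
\]
which vanishes by the same-type hypothesis. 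Moreover $W_L(B_i)$ and $W_R(B_i)$ are homologous in $\pdc{G}^*$: their difference is the signed sum of the facial walks of $\pdc{G}^*$ around the vertices of $\pdc{B_i}$, which form a cylindrical strip between the two walks. Combined with $\beta(T,F)=0$ on each such facial walk, this forces $\beta(T,W_L(B_i))=\beta(T,W_R(B_i))$, and hence each equals zero.

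For the $(\Longleftarrow)$ direction, I assume $T$ is $0$-homologous. By Lemma \ref{lm:homologous}, $\beta(T,W)=0$ for every closed walk $W$ of $\pdc{G}^*$. Specialising to facial walks gives $T$ Eulerian, so $D$ and $D'$ have the same outdegrees. Specialising to $W_L(B_i)$ and $W_R(B_i)$ and invoking Lemma \ref{lem:gammaequaldelta} yields $\gamma_D(B_i)=\gamma_{D'}(B_i)$ for all $i$, so $D$ and $D'$ have the same type. Finally, $\delta_{D'}(W)=\delta_D(W)-\beta(T,W)=\delta_D(W)\equiv 0\pmod 3$ by Lemma \ref{lem:charforall} applied to $D$, so Lemma \ref{lem:charforall} applied the other way around shows $D'$ is a Schnyder orientation.

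The only delicate step is the homological claim $W_L(B_i)\sim W_R(B_i)$ in $\pdc{G}^*$. This is geometrically evident because the two walks flank the cycle $\pdc{B_i}$ and the region between them is tiled by faces of $\pdc{G}^*$ corresponding to the (primal- and edge-)vertices of $\pdc{B_i}$; making this combinatorial identification of $W_L(B_i)-W_R(B_i)$ as a signed sum of these facial walks is the one place where a short geometric check is required. Everything else is a straightforward substitution in the already-established pairing identities.
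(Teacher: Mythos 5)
Your proof is correct and takes essentially the same route as the paper: both directions chain Lemma~\ref{th:EDGEeulerian} (or, in your backward direction, Lemma~\ref{lem:charforall} directly) with Lemma~\ref{lm:homologous} and Lemma~\ref{lem:gammaequaldelta}, and the key step of controlling $\beta(T,W_L(B_i))$ and $\beta(T,W_R(B_i))$ separately is identical. Your phrasing of $W_L(B_i)-W_R(B_i)$ as a sum of facial walks of $\pdc{G}^*$ over the cylindrical strip is the same geometric observation the paper packages as ``$T$ is Eulerian, so it goes in and out of the region $R_i$ the same number of times.''
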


\begin{proof}
  $(\Longrightarrow)$ Suppose $D'$ is a Schnyder orientation having
  the same outdegrees and the same type as $D$. Then,
  Lemma~\ref{th:EDGEeulerian} implies that $T$ is
  Eulerian-partitionable and thus Eulerian. So for any
  $F\in \hat{\mc{F^*}}$, we have $\beta(T,F)=0$. Moreover, for
  $1\le i\le 2g$, consider the region $R_i$ between $W_L(B_i)$ and
  $W_R(B_i)$ containing $B_i$. Since $T$ is Eulerian, it is going in and
  out of $R_i$ the same number of times. So
  $\beta(T,W_L(B_i)-W_R(B_i))=0$.  Since $D,D'$ have the same type, we
  have $\gamma_D(B_i)=\gamma_{D'}(B_i)$. So by
  Lemma~\ref{lem:gammaequaldelta},
  $\delta_D(W_L(B_i))+\delta_D(W_R(B_i))=\delta_{D'}(W_L(B_i))+\delta_{D'}(W_R(B_i))$. Thus
  $\beta(T,W_L(B_i)+W_R(B_i))=\beta(\Out-\Out',W_L(B_i)+W_R(B_i))=\delta_D(W_L(B_i))+\delta_D(W_R(B_i))-\delta_{D'}(W_L(B_i))-\delta_{D'}(W_R(B_i))=0$. By
  combining this with the previous equality, we obtain
  $\beta(T,W_L(B_i))=\beta(T,W_R(B_i))=0$ for all $1\le i\le 2g$. Thus
  by Lemma~\ref{lm:homologous}, we have that $T$ is 0-homologous.

  $(\Longleftarrow)$ Suppose that $T$ is $0$-homologous. Then $T$ is in
  particular Eulerian-partitionable (with the partition
  $T,\emptyset,\emptyset$). So Lemma~\ref{th:EDGEeulerian} implies
  that $D'$ is a Schnyder orientation with the same outdegrees as $D$.
  Since $T$ is $0$-homologous, by Lemma~\ref{lm:homologous}, for all
  $1\le i\le 2g$, we have
  $\beta(T,W_L(B_i))=\beta(T,W_R(B_i))=0$. Thus
  $\delta_{D}(W_L(B_i))=\beta(\Out,W_L(B_i)) =
  \beta(\Out',W_L(B_i))=\delta_{D'}(W_L(B_i))$
  and
  $\delta_{D}(W_R(B_i))=\beta(\Out,W_R(B_i)) =
  \beta(\Out',W_R(B_i))=\delta_{D'}(W_R(B_i))$.
  So by Lemma~\ref{lem:gammaequaldelta},
  $\gamma_{D}(B_i)=\delta_{D}(W_L(B_i))+\delta_{D}(W_R(B_i))=\delta_{D'}(W_L(B_i))+\delta_{D'}(W_R(B_i))=\gamma_{D'}(B_i)$.
  So $D,D'$ have the same type.
\end{proof}

Lemma~\ref{th:EDGE0} implies that when you consider Schnyder
orientations having the same outdegrees the property that they have
the same type does not depend on the choice of the basis since being
homologous does not depend on the basis. So we have the following:

  \begin{lemma}
    \label{lem:typebasis}
    If two Schnyder orientations have the same outdegrees and the same
    type (for the considered basis), then they have the same type for
    any basis.
  \end{lemma}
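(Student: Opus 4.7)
The plan is to reduce the statement to Lemma~\ref{th:EDGE0}, which characterizes the situation ``same outdegrees and same type'' in an intrinsic, basis-free way, namely as the condition that $T = D \setminus D'$ is $0$-homologous (equivalently, that $D$ and $D'$ are homologous). Since being $0$-homologous is a property of the flow $\phi(T) \in \mb{Z}^{E(\pdc{G})}$ alone, not of any chosen basis, the conclusion should follow by invoking Lemma~\ref{th:EDGE0} twice: once in each direction, with two different bases.

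More precisely, let $D$ and $D'$ be two Schnyder orientations of $\pdc{G}$ with the same outdegrees, and assume that they have the same type with respect to the basis $\{B_1,\ldots,B_{2g}\}$ of the homology. Applying the $(\Longrightarrow)$ direction of Lemma~\ref{th:EDGE0} to this basis, we obtain that $T = D \setminus D'$ is $0$-homologous. Now let $\{B'_1,\ldots,B'_{2g}\}$ be any other basis of the homology of $G$. Since $T$ is $0$-homologous and $D$ is a Schnyder orientation, the $(\Longleftarrow)$ direction of Lemma~\ref{th:EDGE0}, applied with respect to the basis $\{B'_1,\ldots,B'_{2g}\}$, yields that $D'$ is a Schnyder orientation with the same outdegrees and the same type as $D$ with respect to $\{B'_1,\ldots,B'_{2g}\}$. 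Hence the types agree for every basis.

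There is really no obstacle here: the entire content of the lemma is to observe that the statement of Lemma~\ref{th:EDGE0} silently encodes a basis-independent condition on $T$. The only small check is that the definition of ``$0$-homologous'' given in Section~\ref{sec:homology} (membership in $\mb{F}$) does not mention any basis, which makes the two applications of Lemma~\ref{th:EDGE0} legal for two different choices of $\{B_i\}$ and $\{B'_i\}$. I expect the written proof to be at most three or four lines.
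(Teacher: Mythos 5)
Your proposal is correct and is essentially the paper's own argument: the paper also observes that Lemma~\ref{th:EDGE0} characterizes ``same outdegrees and same type'' intrinsically as $T$ being $0$-homologous, a basis-free condition, and deduces the lemma directly from that. Your two-application phrasing (forward for one basis, backward for another) just makes this one-sentence observation explicit.
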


  Lemma~\ref{thm:EDGEtransform},~\ref{th:EDGEeulerian}
  and~\ref{th:EDGE0} are summarized in the following theorem (where by
  Lemma~\ref{lem:typebasis} we do not have to assume a particular
  choice of a basis for the third item):

\begin{theorem}\label{th:transformations}
  Let $G$ be a map on an orientable surface and $D,D'$ orientations of
  $\pdc{G}$ such that $D$ is a Schnyder orientation and $T=D\setminus
  D'$. We have the following:
  \begin{itemize}
  \item $D'$ is a Schnyder orientation if and only if $T$ is
    partitionable.
\item $D'$ is a Schnyder orientation having the same outdegrees as $D$
  if and only if $T$ is Eulerian-partitionable.
\item $D'$ is a Schnyder orientation having the same outdegrees and
  the same type as $D$ if and only if $T$ is $0$-homologous
  (i.e. $D,D'$ are homologous).
  \end{itemize}
\end{theorem}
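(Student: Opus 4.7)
The plan is essentially to observe that this theorem is a direct consolidation of the three preceding lemmas, so almost no new work is required. The first bullet is exactly the statement of Lemma~\ref{thm:EDGEtransform}, and the second bullet is exactly the statement of Lemma~\ref{th:EDGEeulerian}; both can be cited verbatim.

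For the third bullet, the only subtlety is that the statement speaks of ``same type'' without fixing a basis for the homology, whereas Lemma~\ref{th:EDGE0} was stated relative to a chosen basis $\{B_1,\ldots,B_{2g}\}$. So I would first fix an arbitrary such basis and apply Lemma~\ref{th:EDGE0} to get the equivalence between $D'$ being a Schnyder orientation with the same outdegrees and the same type (for this basis) as $D$, and $T$ being $0$-homologous. Then I would note, citing Lemma~\ref{lem:typebasis}, that for two Schnyder orientations with the same outdegrees, agreeing in type with respect to one basis is equivalent to agreeing in type with respect to any basis. Hence the phrase ``same type'' is unambiguous and the equivalence with $0$-homologicity holds regardless of the basis.

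The main (and only real) obstacle is just to ensure that the wording of the third bullet is correctly interpreted in the basis-free sense; once Lemma~\ref{lem:typebasis} is invoked, the theorem follows with no further calculation. Consequently the write-up should be just a few lines stating these three citations in order.

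\begin{proof}
The first two items are restatements of Lemma~\ref{thm:EDGEtransform} and Lemma~\ref{th:EDGEeulerian}, respectively.

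For the third item, fix an arbitrary set $\{B_1,\ldots,B_{2g}\}$ of cycles of $G$ forming a basis for the homology. By Lemma~\ref{th:EDGE0}, $D'$ is a Schnyder orientation having the same outdegrees and the same type as $D$ with respect to this basis if and only if $T$ is $0$-homologous. By Lemma~\ref{lem:typebasis}, two Schnyder orientations with the same outdegrees agree in type with respect to one basis if and only if they agree in type with respect to every basis. Hence the condition ``same type'' is independent of the choice of basis, and the stated equivalence holds.
\end{proof}
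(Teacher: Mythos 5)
Your proposal is correct and matches the paper's own approach exactly: the paper explicitly notes that this theorem is a summary of Lemmas~\ref{thm:EDGEtransform}, \ref{th:EDGEeulerian}, and \ref{th:EDGE0}, with Lemma~\ref{lem:typebasis} invoked to make the ``same type'' condition in the third item basis-independent. Nothing is missing.
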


We show in the next section that the set of Schnyder orientations that are
homologous (see third item of Theorem~\ref{th:transformations})
carries a structure of distributive lattice.

\subsection{The distributive lattice of homologous orientations}
\label{sec:lattice}

For the sake of generality, in this subsection we consider that maps
may have contractible cycles of size 1 or 2.  Consider a map $G$ on an
orientable surface and a given orientation $D_0$ of $G$. Let
$O(G,D_0)$ be the set of all the orientations of $G$ that are
homologous to $D_0$.  In this section we prove that $O(G,D_0)$ forms a
distributive lattice. We show some additional interesting properties
that are used in a recent paper by Despré, the first author, and the
third author~\cite{DGL15}.  This generalizes results for the plane
obtained by Ossona de Mendez~\cite{Oss94} and Felsner~\cite{Fel04}.
The distributive lattice structure also can also be derived from a
result of Propp~\cite{Pro93} interpreted on the dual map, see the
discussion below Theorem~\ref{th:lattice}.

In order to define an order on $O(G,D_0)$, fix an arbitrary face $f_0$
of $G$ and let $F_0$ be its counterclockwise facial walk.  Let
$\mc{F}'=\mc{F}\setminus\{F_0\}$ (where $\mc{F}$ is the set of
counterclockwise facial walks of $G$ as defined earlier). Note that
$\phi(F_0)=-\sum_{F\in\mc{F}'}\phi(F)$. Since the characteristic flows
of $\mc{F}'$ are linearly independent, any oriented subgraph of $G$
has at most one representation as a combination of characteristic
flows of $\mc{F}'$. Moreover the $0$-homologous oriented subgraphs of
$G$ are precisely the oriented subgraph that have such a
representation.  We say that a $0$-homologous oriented subgraph $T$ of
$G$ is \emph{counterclockwise} (resp. \emph{clockwise}) if its
characteristic flow can be written as a combination with positive
(resp. negative) coefficients of characteristic flows of $\mc{F}'$,
i.e. $\phi(T)=\sum_{F\in\mc{F}'}\lambda_F\phi(F)$, with
$\lambda\in\mathbb{N}^{|\mc{F}'|}$
(resp. $-\lambda\in\mathbb{N}^{|\mc{F}'|}$).  Given two orientations
$D,D'$, of $G$ we set $D\leq_{f_0} D'$ if and only if $D\setminus D'$
is counterclockwise.  Then we have the following theorem.

\begin{theorem}[\cite{Pro93}]
\label{th:lattice}
Let $G$ be a map on an orientable surface given with a particular
orientation $D_0$ and a particular face $f_0$.  Let $O(G,D_0)$ the set
of all the orientations of $G$ that are homologous to $D_0$.  We have
$(O(G,D_0),\leq_{f_0})$ is a distributive lattice.
\end{theorem}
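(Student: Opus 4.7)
The plan is to exhibit $(O(G, D_0), \leq_{f_0})$ as an order-embedded sublattice of the distributive lattice $(\mathbb{Z}^{\mc{F}'}, \leq)$ with componentwise order, from which distributivity will descend automatically. For each $D \in O(G, D_0)$, the flow $\phi(D \setminus D_0)$ is $0$-homologous by definition, so by the linear independence of the characteristic flows of $\mc{F}'$ it admits a unique representation $\phi(D \setminus D_0) = \sum_{F \in \mc{F}'} h(D)_F \phi(F)$; this defines a map $h \colon O(G, D_0) \to \mathbb{Z}^{\mc{F}'}$. A routine edge-by-edge check gives the cocycle identity $\phi(D \setminus D'') = \phi(D \setminus D') + \phi(D' \setminus D'')$ for any three orientations, and specialising $D'' = D_0$ shows that $h(D) - h(D')$ is the coefficient vector of $\phi(D \setminus D')$. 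Consequently $D' \leq_{f_0} D$ is equivalent to $h(D') \leq h(D)$ componentwise, which proves simultaneously that $\leq_{f_0}$ is a genuine partial order and that $h$ is an order embedding (injectivity follows because $\phi(D \setminus D_0)$ recovers both the support and the orientation of $D \setminus D_0$, hence $D$ itself).

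The crux is to show the image of $h$ is closed under the coordinatewise $\min$ and $\max$. For each edge $e$, let $F_L(e), F_R(e)$ denote the two incident faces in a fixed left/right convention (with the convention $h(D)_{F_0} = 0$); then the coordinate of $\sum_F h(D)_F \phi(F)$ at $e$ equals $h(D)_{F_L(e)} - h(D)_{F_R(e)}$ up to a fixed sign per edge. Being $\phi(D \setminus D_0)_e$ for some genuine orientation $D$ forces this difference to lie in $\{0, \epsilon_e\}$ for a sign $\epsilon_e \in \{\pm 1\}$ that depends only on the reference direction of $D_0$ on $e$ (value $0$ meaning $D$ agrees with $D_0$ on $e$, value $\epsilon_e$ meaning they disagree). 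Hence $h(O(G, D_0))$ is cut out from $\mathbb{Z}^{\mc{F}'}$ by one such per-edge two-valued constraint. Given $\lambda, \lambda'$ in the image and $\mu$ their componentwise minimum, fix an edge $e$ with $\epsilon_e = +1$; then $\lambda_{F_L(e)} \geq \lambda_{F_R(e)}$ and $\lambda'_{F_L(e)} \geq \lambda'_{F_R(e)}$, so $\mu_{F_L(e)} \geq \mu_{F_R(e)}$, and bounding each minimum from above by whichever side attains it yields $\mu_{F_L(e)} - \mu_{F_R(e)} \leq 1$, so this difference is in $\{0, 1\}$ as required. The case $\epsilon_e = -1$ and the argument for the coordinatewise maximum are entirely symmetric.

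It follows that $(O(G, D_0), \leq_{f_0})$ is a lattice whose meets and joins pull back from the componentwise meets and joins of $(\mathbb{Z}^{\mc{F}'}, \leq)$, and is distributive because the latter (a product of copies of $\mathbb{Z}$) is. The main obstacle I expect in executing this plan is the precise local identification of the image of $h$, which requires careful bookkeeping of the reference orientation, the fixed sign per edge, and the signs $\epsilon_e$; once that characterization is in hand, closure under $\min$ and $\max$ reduces to the short uniform case analysis above and distributivity is immediate.
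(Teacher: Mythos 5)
Your proof is correct, and it takes a genuinely different route from the one in the paper. The paper (Proposition~\ref{lem:fulfillshasse}) verifies Felsner's Hasse-diagram criterion (Theorem~\ref{th:hasse}), which requires first building the machinery of rigid edges, the reduced graph $\widetilde{G}$, and the decomposition Lemmas~\ref{cl:partition} and~\ref{cl:facenonrigid}, in particular to establish connectivity of the flip graph. You instead embed $O(G,D_0)$ into $\mathbb{Z}^{\mc{F}'}$ via the potential vector $h(D)$ of coefficients of $\phi(D\setminus D_0)$ and show that the image is exactly the set of integer vectors satisfying, for each edge $e$, a two-valued constraint $\lambda_{F_L(e)}-\lambda_{F_R(e)}\in\{0,\epsilon_e\}$; the image is then visibly closed under coordinatewise $\min$ and $\max$, hence a (finite) distributive sublattice of a product of chains. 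This is essentially the ``height-function'' argument in the spirit of Propp, which the authors explicitly acknowledge but elect not to reproduce because their Hasse-diagram construction supplies the explicit flip structure (rigid edges, the generators $\widetilde{\mc{F}}'$, etc.) that they go on to exploit in Propositions~\ref{lem:necessary}--\ref{prop:maximal} and in the application to~\cite{DGL15}. Your route is shorter and more self-contained but yields less explicit structural information about cover relations.

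One small sign slip, which does not affect the conclusion: with the paper's convention, $D\leq_{f_0}D'$ means $D\setminus D'$ is counterclockwise, i.e.\ the coefficient vector of $\phi(D\setminus D')$ is nonnegative, i.e.\ $h(D)\geq h(D')$. So $h$ is order-\emph{reversing}, not order-preserving as you wrote. Since the order-dual of a distributive lattice is again distributive, this is only a cosmetic issue, but the direction of the embedding should be stated correctly. It is also worth noting explicitly (as you implicitly use) that an edge whose two sides bound the same face is automatically rigid and contributes a vacuous constraint, and that nonemptiness of the image is witnessed by $h(D_0)=0$.
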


We attribute Theorem~\ref{th:lattice} to Propp even if it is not
presented in this form in~\cite{Pro93}. Here we do not introduce Propp's formalism, but provide a new proof of Theorem~\ref{th:lattice} (as a consequence of the forthcoming Proposition~\ref{th:lattice}). This allows us to introduce notions used later in the study of this lattice. It is notable that the study of this lattice found applications in~\cite{DGL15}, where the authors found a bijection between toroidal triangulations and unicellular toroidal maps.

To prove Theorem~\ref{th:lattice}, we need to define the elementary
flips that generates the lattice.
We start by reducing the graph ${G}$. We call an edge of ${G}$
\emph{rigid with respect to $O(G,D_0)$} if it has the same orientation
in all elements of $O(G,D_0)$. Rigid edges do not play a role for the
structure of $O(G,D_0)$. We delete them from ${G}$ and call the
obtained embedded graph $\widetilde{G}$.  This graph is
embedded but it is not necessarily a map, as some faces may not be
homeomorphic to open disks. Note that if all the edges are rigid, i.e. $|O(G,D_0)|=1$, then $\widetilde{G}$
has no edges.

\begin{lemma}
\label{lem:non-rigid}
Given an edge $e$ of $G$, the following are equivalent:
\begin{enumerate}
\item $e$ is non-rigid
\item  $e$ is contained in a
$0$-homologous oriented subgraph of $D_0$
\item  $e$ is contained in a
$0$-homologous oriented subgraph of any element of $O(G,D_0)$
\end{enumerate}
\end{lemma}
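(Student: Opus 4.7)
The plan is to prove $(3) \Rightarrow (2) \Rightarrow (1) \Rightarrow (3)$, where the first two implications are quick and the third relies on the torsion-freeness of $H(G)$.

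Since $D_0 \in O(G,D_0)$, the implication $(3) \Rightarrow (2)$ is immediate. For $(2) \Rightarrow (1)$, suppose $T$ is a $0$-homologous oriented subgraph of $D_0$ containing $e$. Reversing every edge of $T$ in $D_0$ produces an orientation $D'$ of $G$; a direct edge-by-edge check gives $\phi(D_0)-\phi(D')=2\phi(T)$. Since $\phi(T)\in\mb{F}$, we get $\phi(D_0)-\phi(D')\in\mb{F}$, so $D'\in O(G,D_0)$. As $e\in T$, the orientations of $e$ in $D_0$ and $D'$ differ, showing $e$ is non-rigid.

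For $(1)\Rightarrow(3)$, fix $D,D'\in O(G,D_0)$ orienting $e$ oppositely, and let $D''\in O(G,D_0)$ be arbitrary. Denote by $D^*$ whichever of $D,D'$ disagrees with $D''$ on $e$, and set $T:=D''\setminus D^*$, so that $e\in T$. It remains to show that $T$, viewed as an oriented subgraph of $D''$, satisfies $\phi(T)\in\mb{F}$. The same edge-by-edge computation as above gives $\phi(D'')-\phi(D^*)=2\phi(T)$, and the left-hand side lies in $\mb{F}$ because $D''$ and $D^*$ are both homologous to $D_0$; hence $2\phi(T)\in\mb{F}$.

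To upgrade this to $\phi(T)\in\mb{F}$, I would first argue that $T$ is Eulerian: every element of $\mb{F}$ has zero vertex-divergence (since facial walks do), so $\phi(D'')-\phi(D^*)\in\mb{F}$ forces $D''$ and $D^*$ to have equal outdegrees at every vertex, which in turn implies that at each vertex the reoriented edges contribute equally as outgoing and incoming to $T$. Hence $\phi(T)\in\mb{W}$, and its class in $H(G)=\mb{W}/\mb{F}$ satisfies $2[\phi(T)]=0$. Since $H(G)\cong\mb{Z}^{2g}$ is torsion-free, $[\phi(T)]=0$, i.e.\ $\phi(T)\in\mb{F}$, as desired. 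The main obstacle is precisely this final step: the passage from $2\phi(T)\in\mb{F}$ to $\phi(T)\in\mb{F}$ cannot be obtained by naively dividing coefficients within $\mb{Z}^E$, and crucially relies on the torsion-freeness of $H(G)$.
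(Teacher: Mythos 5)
Your proof is correct and, structurally, it is the paper's proof: the same cycle of implications organized around the same key object $T=D\setminus D'$, with $(3)\Rightarrow(2)$ trivial and $(2)\Rightarrow(1)$ done by reversing $T$. What you add is a careful treatment of a divisibility point that the paper passes over in a single clause. Since $\phi(D)-\phi(D')=2\,\phi(D\setminus D')$ for any two orientations, the paper's assertion in $(1)\Rightarrow(3)$ that homology of $D$ and $D'$ (i.e.\ $\phi(D)-\phi(D')\in\mb{F}$) makes $T=D\setminus D'$ a $0$-homologous oriented subgraph directly yields only $2\phi(T)\in\mb{F}$; the upgrade to $\phi(T)\in\mb{F}$ is left implicit there. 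Your resolution is sound and uses only facts the paper already records: every element of $\mb{F}$ has zero divergence, hence so does $2\phi(T)$ and therefore $\phi(T)$, so $T$ is Eulerian and $\phi(T)\in\mb{W}$; then torsion-freeness of $H(G)=\mb{W}/\mb{F}\cong\mb{Z}^{2g}$ forces $[\phi(T)]=0$, i.e.\ $\phi(T)\in\mb{F}$. (An alternative closing move: $\mb{Z}^E/\mb{F}$ is itself torsion-free, being an extension of $\mb{Z}^E/\mb{W}\cong\mb{Z}^{n-1}$ by $\mb{W}/\mb{F}\cong\mb{Z}^{2g}$, so $2\phi(T)\in\mb{F}$ gives $\phi(T)\in\mb{F}$ in one step; but this requires identifying $\mb{Z}^E/\mb{W}$, which the paper never does, so your route is the more economical one in context.) In short: same decomposition and same key objects, but your write-up supplies the saturation argument that the paper's terse wording leaves to the reader.
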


\begin{proof}
  $(1 \Longrightarrow 3)$ Let $D\in O(G,D_0)$. If $e$ is non-rigid,
  then it has a different orientation in two elements $D',D''$ of
  $O(G,D_0)$.  Then we can assume by symmetry that $e$ has a different
  orientation in $D$ and $D'$ (otherwise in $D$ and $D''$ by
  symmetry). Since $D,D'$ are homologous to $D_0$, they are also
  homologous to each other. So $T=D\setminus D'$ is a $0$-homologous
  oriented subgraph of $D$ that contains $e$.

 $(3 \Longrightarrow 2)$ Trivial since $D_0\in O(G,D_0)$

  $(2 \Longrightarrow 1)$ If an edge $e$ is contained in a $0$-homologous
  oriented subgraph $T$ of $D_0$. Then let $D$ be the element of
  $O(G,D_0)$ such that $T=D_0\setminus D$. Clearly $e$ is oriented
  differently in $D$ and $D_0$, thus it is non-rigid.
\end{proof}

By Lemma~\ref{lem:non-rigid}, one can build $\widetilde{G}$ by keeping
only the edges that are contained in a $0$-homologous oriented
subgraph of $D_0$.  Note that this implies that all the edges of
$\widetilde{G}$ are incident to two distinct faces of $\widetilde{G}$.
Denote by $\widetilde{\mathcal{F}}$ the set of oriented subgraphs of
$\widetilde{G}$ corresponding to the boundaries of faces of
$\widetilde{G}$ considered counterclockwise.  Note that any
$\widetilde{F}\in \widetilde{\mathcal{F}}$ is $0$-homologous and so
its characteristic flows has a unique way to be written as a
combination of characteristic flows of $\mc{F}'$. Moreover this
combination can be written
$\phi(\widetilde{F})=\sum_{F\in X_{\widetilde{F}}}\phi(F)$, for
$X_{\widetilde{F}}\subseteq\mc{F}'$. Let $\widetilde{f}_0$ be the face
of $\widetilde{G}$ containing $f_0$ and $\widetilde{F}_0$ be the
element of $\widetilde{\mathcal{F}}$ corresponding to the boundary of
$\widetilde{f}_0$.  Let
$\widetilde{\mathcal{F}}'=\widetilde{\mathcal{F}}\setminus
\{\widetilde{F}_0\}$.
The elements of $\widetilde{\mathcal{F}}'$ are precisely the
elementary flips which suffice to generate the entire distributive
lattice $(O(G,D_0),\leq_{f_0})$.

We prove two technical lemmas concerning $\widetilde{\mathcal{F}}'$:

\begin{lemma}
\label{cl:partition}
Let $D\in O(G,D_0)$ and $T$ be a non-empty $0$-homologous oriented
subgraph of $D$.  Then there exist edge-disjoint oriented subgraphs
$T_1,\ldots,T_k$ of $D$ such that $\phi(T)=\sum_{1\leq i \leq k}
\phi(T_i)$, and, for $1\leq i \leq k$, there exists
$\widetilde{X_i}\subseteq \widetilde{{\mathcal{F}}}'$ and
$\epsilon_i\in\{-1,1\}$ such that
$\phi(T_i)=\epsilon_i\sum_{\widetilde{F}\in \widetilde{X_i}}
\phi(\widetilde{F})$.
\end{lemma}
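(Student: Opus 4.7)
The plan is to express $\phi(T)$ as an integer combination of the characteristic flows $\phi(\widetilde F)$ for $\widetilde F \in \widetilde{\mathcal{F}}'$, and then to peel off this combination layer by layer via a sublevel-set construction.

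First I would use that $T$ is $0$-homologous to write $\phi(T) = \sum_{F \in \mc{F}'} \lambda_F\, \phi(F)$ uniquely with $\lambda_F \in \mb{Z}$. The key observation is that these coefficients are constant on each set $X_{\widetilde F}$ for $\widetilde F \in \widetilde{\mathcal{F}}'$, and vanish on every $G$-face distinct from $F_0$ that lies in the $\widetilde G$-face $\widetilde f_0$. To see this, note that by Lemma~\ref{lem:non-rigid} every edge of $T$ is non-rigid, so every rigid edge $e$ of $G$ satisfies $\phi(T)_e = 0$; comparing the $e$-coordinate on both sides of the expansion then forces $\lambda_{F_L} = \lambda_{F_R}$ when both faces incident to $e$ lie in $\mc{F}'$, and forces $\lambda_F = 0$ when the other incident face is $F_0$. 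Since two $G$-faces lie in the same $\widetilde G$-face precisely when they are connected by a path of rigid edges, these identifications propagate to yield a well-defined coefficient $\lambda_{\widetilde F}$ on each $\widetilde F \in \widetilde{\mathcal{F}}'$ together with the vanishing inside $\widetilde f_0$. Collecting terms gives $\phi(T) = \sum_{\widetilde F \in \widetilde{\mathcal{F}}'} \lambda_{\widetilde F}\, \phi(\widetilde F)$.

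Next I would decompose this integer-weighted sum by sublevel sets. For each integer $j \ge 1$ let $\widetilde{X_j^+} = \{\widetilde F \in \widetilde{\mathcal{F}}' : \lambda_{\widetilde F} \ge j\}$ with sign $\epsilon_j^+ = +1$, and $\widetilde{X_j^-} = \{\widetilde F \in \widetilde{\mathcal{F}}' : \lambda_{\widetilde F} \le -j\}$ with sign $\epsilon_j^- = -1$, and set $T_j^\pm = \epsilon_j^\pm \sum_{\widetilde F \in \widetilde{X_j^\pm}} \phi(\widetilde F)$. A straightforward summation gives $\sum_{j \ge 1} \phi(T_j^+) + \sum_{j \ge 1} \phi(T_j^-) = \sum_{\widetilde F \in \widetilde{\mathcal{F}}'} \lambda_{\widetilde F}\, \phi(\widetilde F) = \phi(T)$, and I would keep the nonempty $T_j^\pm$ as the required $T_1, \ldots, T_k$.

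The verification that the $T_i$'s are edge-disjoint oriented subgraphs of $D$ is a short case analysis. For an edge $e$ of $\widetilde G$ with $\widetilde F_L$ and $\widetilde F_R$ on its two sides, $\phi(T)_e = \lambda_{\widetilde F_L} - \lambda_{\widetilde F_R}$; the contribution of each $T_j^\pm$ on $e$ is either $0$ or has the same sign as $\phi(T)_e$, and in the latter case it equals $\phi(D)_e$ since $T \subseteq D$, so each $T_j^\pm$ is an oriented subgraph of $D$. Counting how many layers contribute on a given edge, exactly one $T_j^\pm$ is nonzero on each edge of $T$ and none is nonzero on an edge outside $T$, which gives edge-disjointness. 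I expect the main obstacle to be the propagation step in the first paragraph: verifying cleanly that the coefficients $\lambda_F$ are constant on each $X_{\widetilde F}$ and vanish inside $\widetilde f_0$ away from $F_0$ requires a careful handling of the rigid-edge structure and the role played by $f_0$.
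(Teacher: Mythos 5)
Your proof is correct and takes essentially the same approach as the paper: in both cases, the decomposition is by level sets of the coefficient vector $\lambda$. The only difference is the order of steps. The paper first forms the level-set layers $T_i$ over $\mc{F}'$, observes each $T_i$ is $0$-homologous hence reversible in $D$, deduces $T_i$ contains no rigid edge, and only then concludes that $X_i$ is a union of classes $X_{\widetilde F}$ so that $\phi(T_i)$ can be written over $\widetilde{\mc{F}}'$. You instead establish directly (via Lemma~\ref{lem:non-rigid} applied to $T$ itself) that $\lambda$ is constant across rigid edges and vanishes inside $\widetilde{f}_0$ away from $F_0$, rewrite $\phi(T)$ over $\widetilde{\mc{F}}'$, and then take level sets there. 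Your ordering is arguably a bit cleaner, since it makes explicit the fact (left implicit in the paper's last two sentences) that each $X_i$ is a union of $X_{\widetilde F}$'s, but the underlying decomposition and verification of edge-disjointness are the same.
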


\begin{proof}
  Since $T$ is $0$-homologous, we have
  $\phi(T)=\sum_{F\in\mathcal{F'}}\lambda_F\phi(F)$, for $\lambda\in
  \mathbb{Z}^{|\mathcal{F'}|}$. Let $\lambda_{f_0}=0$.  Thus we have
  $\phi(T)=\sum_{F\in\mathcal{F}}\lambda_F\phi(F)$.  Let
  $\lambda_{\min}=\min_{F\in\mathcal{F}}\lambda_F$ and
  $\lambda_{\max}=\max_{F\in\mathcal{F}}\lambda_F$. We may
  have $\lambda_{\min} =0$ or $\lambda_{\max} =0$ but not both since $T$
  is non-empty.  For $1\leq i\leq \lambda_{\max}$, let
  $X_{i}=\{F\in\mathcal{F'}\,|\,\lambda_F\geq i\}$ and $\epsilon_i=1$.
  Let $X_0=\emptyset$ and $\epsilon_0=1$. For $\lambda_{\min}\leq
  i\leq -1$, let $X_{i}=\{F\in\mathcal{F'}\,|\,\lambda_F\leq i\}$ and
  $\epsilon_i=-1$.  For $\lambda_{\min}\leq i \leq \lambda_{\max}$,
  let $T_i$ be the oriented subgraph such that
  $\phi(T_i)=\epsilon_i\sum_{F\in X_i}\phi(F)$.  Then we have
  $\phi(T)=\sum_{\lambda_{\min}\leq i \leq \lambda_{\max}} \phi(T_i)$.

  Since $T$ is an oriented subgraph, we have
  $\phi(T)\in\{-1,0,1\}^{|E(G)|}$. Thus for any edge of ${G}$, incident to
  faces $F_1$ and $F_2$, we have
  $(\lambda_{F_1}-\lambda_{F_2})\in\{-1,0,1\}$. So, for $1\leq i\leq
  \lambda_{\max}$, the oriented graph $T_i$ is the border between
  the faces with $\lambda$ value equal to $i$ and $i-1$.
  Symmetrically, for $\lambda_{\min}\leq i\leq -1$, the oriented graph
  $T_i$ is the border between the faces with $\lambda$ value equal
  to $i$ and $i+1$.  So all the $T_i$ are edge disjoint and are
  oriented subgraphs of $D$.

  Let $\widetilde{X_i}=\{\widetilde{F}\in\widetilde{\mathcal{F}}'\,|
  \,\phi(\widetilde{F})=\sum_{F\in X'} \phi(F) \textrm{ for some }
  X'\subseteq X_i\}$.  Since $T_i$ is $0$-homologous, the edges of
  $T_i$ can be reversed in $D$ to obtain another element of
  $O(G,D_0)$. Thus there is no rigid edge in $T_i$.  Thus
  $\phi(T_i)=\epsilon_i\sum_{F\in
    X_i}\phi(F)=\epsilon_i\sum_{\widetilde{F}\in\widetilde{X_i}}\phi(\widetilde{F})$.
\end{proof}

\begin{lemma}
\label{cl:facenonrigid}
Let $D\in O(G,D_0)$ and $T$ be a non-empty $0$-homologous oriented
subgraph of $D$ such that there exists $\widetilde{X}\subseteq
\widetilde{{\mathcal{F}}}'$ and $\epsilon\in\{-1,1\}$ satisfying
$\phi(T)=\epsilon\sum_{\widetilde{F}\in \widetilde{X}}
\phi(\widetilde{F})$. Then there exists $\widetilde{F}\in\widetilde{X}$ such
that $\epsilon\,\phi(\widetilde{F})$ corresponds to an oriented
subgraph of $D$.
\end{lemma}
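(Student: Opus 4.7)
The plan is to proceed by contradiction, exploiting crucially the non-rigidity of every edge of $\widetilde G$. By the symmetry obtained by replacing $D$ with the orientation obtained from $D$ by reversing $T$, we may and will assume $\epsilon=1$, so that $\phi(T)=\sum_{\widetilde F\in\widetilde X}\phi(\widetilde F)$. Set $R=\bigcup_{\widetilde F\in\widetilde X}\widetilde F$. My first observation is that any edge of $T$ lies on the boundary of exactly one face of $\widetilde X$: two adjacent faces of $\widetilde X$ sharing an edge would cause their contributions in the sum to cancel. Since $T$ is an oriented subgraph of $D$, each such edge is necessarily oriented in $D$ in agreement with the counterclockwise boundary of its unique incident face of $\widetilde X$. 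Hence the only edges that could prevent a face $\widetilde F\in\widetilde X$ from witnessing the conclusion lie in the set $E_{\text{int}}$ of \emph{internal} edges of $R$ (edges shared by two faces of $\widetilde X$); each $e\in E_{\text{int}}$ is oriented in $D$ in agreement with the counterclockwise boundary of exactly one of its two incident faces, which I will call \emph{good for} $e$, and against the other, \emph{bad for} $e$.

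Suppose for contradiction that no $\widetilde F\in\widetilde X$ has $\phi(\widetilde F)$ as an oriented subgraph of $D$. Then every face of $\widetilde X$ has at least one bad internal edge. Form the auxiliary directed graph $H$ on vertex set $\widetilde X$ by adding, for each $e\in E_{\text{int}}$, an arc from the face good for $e$ to the face bad for $e$. By the contradiction hypothesis every vertex of $H$ has positive in-degree, so $H$ contains a directed cycle; among all such cycles choose one of minimum length, $\widetilde F_1\to\widetilde F_2\to\cdots\to\widetilde F_k\to\widetilde F_1$, with corresponding internal edges $e_1,\ldots,e_k$, each $e_i$ being good for $\widetilde F_i$ and bad for $\widetilde F_{i+1}$ (indices mod $k$).

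The main obstacle is then to turn this cyclic configuration into a genuine contradiction, and this is where non-rigidity is essential. Each $e_i$ is an edge of $\widetilde G$ and therefore, by Lemma~\ref{lem:non-rigid}, lies in some $0$-homologous oriented subgraph of $D$; applying Lemma~\ref{cl:partition} to this subgraph furnishes a piece of the form $\sum_{\widetilde F\in\widetilde Y_i}\phi(\widetilde F)$ (possibly after a global sign change) that is itself an oriented subgraph of $D$ and contains $e_i$. The plan is to combine such pieces with the initial sum $\sum_{\widetilde F\in\widetilde X}\phi(\widetilde F)$ so that the contributions of the cycle faces $\widetilde F_1,\ldots,\widetilde F_k$ cancel, producing a new oriented subgraph of $D$ of the form $\sum_{\widetilde F\in\widetilde X^\star}\phi(\widetilde F)$ with $\widetilde X^\star\subsetneq\widetilde X$. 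An induction on $|\widetilde X|$, whose base case $|\widetilde X|=1$ is immediate since $T=\phi(\widetilde F)$ is then already an oriented subgraph of $D$, then produces a face of $\widetilde X^\star\subseteq\widetilde X$ with the required property, contradicting the standing assumption. The delicate point will be to verify that after this combination the coefficients on each face remain in $\{0,1\}$, and this is exactly where the minimality of the chosen cycle in $H$ together with the alternation of good and bad around it will be used in an essential way.
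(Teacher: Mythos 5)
Your setup is sound: the reduction to $\epsilon=1$, the induction on $|\widetilde{X}|$ with base case $|\widetilde{X}|=1$, the observation that the edges of $T$ are exactly those incident to a single face of $\widetilde{X}$ and are automatically well-oriented in $D$, and the identification of internal edges of $\bigcup\widetilde{X}$ as the only possible obstructions. The reliance on Lemma~\ref{lem:non-rigid} and Lemma~\ref{cl:partition} is also the right instinct. But the proof is not completed: you explicitly defer ``the delicate point'' of verifying that your proposed combination has $\{0,1\}$ coefficients, and this is precisely where the argument is nontrivial. Moreover, the auxiliary digraph $H$ and its minimum-length directed cycle are a detour that does not clearly lead anywhere: combining one piece from non-rigidity per edge $e_1,\ldots,e_k$ of the cycle with the original sum has no reason to produce coefficients in $\{0,1\}$, nor to produce an index set contained in $\widetilde{X}$ (the pieces coming from non-rigidity will in general involve faces outside $\widetilde{X}$), which you would need in order to contradict the standing assumption. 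As written, the proof does not close.

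The missing idea, which makes the argument work without any cycle, is to use a single bad edge and an intersection of index sets. Pick one face $\widetilde{F_1}\in\widetilde{X}$ and one edge $e$ of $\widetilde{F_1}$ with $\phi(\widetilde{F_1})_e\neq\phi(T)_e$; then the face $\widetilde{F_2}$ on the other side of $e$ also lies in $\widetilde{X}$ (this is exactly your ``internal edge'' observation), and after possibly swapping $\widetilde{F_1}$ and $\widetilde{F_2}$ one may assume $\phi(D)_e=\phi(\widetilde{F_1})_e$. Non-rigidity of $e$ gives $D'\in O(G,D_0)$ disagreeing with $D$ on $e$; applying Lemma~\ref{cl:partition} to $D\setminus D'$ yields an oriented subgraph $T_1$ of $D$ with $\phi(T_1)=\sum_{\widetilde{F}\in\widetilde{X}_1}\phi(\widetilde{F})$ containing $e$, whence $\widetilde{F_1}\in\widetilde{X}_1$ and $\widetilde{F_2}\notin\widetilde{X}_1$. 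Now set $\widetilde{Y}=\widetilde{X}\cap\widetilde{X}_1$; it contains $\widetilde{F_1}$ but not $\widetilde{F_2}$, so $|\widetilde{Y}|<|\widetilde{X}|$, and every edge of $T_{\widetilde{Y}}=\sum_{\widetilde{F}\in\widetilde{Y}}\phi(\widetilde{F})$ is an edge of $T$ or of $T_1$ (its two incident faces are one in $\widetilde{Y}$ and one outside $\widetilde{X}$ or outside $\widetilde{X}_1$), so $T_{\widetilde{Y}}$ is an oriented subgraph of $D$. Applying the induction hypothesis to $T_{\widetilde{Y}}$ gives a face of $\widetilde{Y}\subseteq\widetilde{X}$ with the desired property, contradicting the assumption. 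This route avoids entirely the ``delicate point'' you flagged, because the coefficient and inclusion checks are immediate.
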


\begin{proof}
  The proof is done by induction on $|\widetilde{X}|$.  Assume that
  $\epsilon =1$ (the case $\epsilon =-1$ is proved similarly).

  If $|\widetilde{X}|=1$, then the conclusion is clear since
  $\phi(T)=\sum_{\widetilde{F}\in\widetilde{X}}\phi(\widetilde{F})$.
  We now assume that $|\widetilde{X}|> 1$. Towards a contradiction, suppose
  that for any $\widetilde{F}\in\widetilde{X}$ we do not have the
  conclusion, i.e $\phi(\widetilde{F})_e\neq\phi(T)_e$ for some $e \in
  \widetilde{F}$. Let $\widetilde{F_1}\in\widetilde{X}$ and $e \in
  \widetilde{F_1}$ such that $\phi(\widetilde{F_1})_e\neq\phi(T)_e$.
  Since $\widetilde{F_1}$ is counterclockwise, we have
  $\widetilde{F_1}$ on the left of $e$.  Let
  $\widetilde{F_2}\in\widetilde{{\mathcal{F}}} $ that is on the right
  of $e$. Note that $\phi(\widetilde{F_1})_e=-\phi(\widetilde{F_2})_e$
  and for any other face $\widetilde{F}\in \widetilde{\mathcal{F}}$,
  we have $\phi(\widetilde{F})_e=0$.  Since
  $\phi(T)=\sum_{\widetilde{F}\in\widetilde{X}}\phi(\widetilde{F})$,
  we have $\widetilde{F_2}\in\widetilde{X}$ and $\phi(T)_e=0$. By possibly 
  swapping the role of $\widetilde{F_1}$ and $\widetilde{F_2}$, 
  we can assume that $\phi(D)_e=\phi(\widetilde{F_1})_e$, i.e.,
  $e$ is oriented the same way in $\widetilde{F_1}$ and $D$. 
  Since $e$ is not rigid, there exists an orientation $D'$ in
  $O(G,D_0)$ such that $\phi(D)_e=-\phi(D')_e$.

  Let $T'$ be the non-empty $0$-homologous oriented subgraph of $D$
  such that $T'=D\setminus D'$. Lemma~\ref{cl:partition} implies that
  there exists edge-disjoint oriented subgraphs $T_1,\ldots,T_k$ of
  $D$ such that $\phi(T)=\sum_{1\leq i \leq k} \phi(T_i)$, and, for
  $1\leq i \leq k$, there exists $\widetilde{X_i}\subseteq
  \widetilde{{\mathcal{F}}}'$ and $\epsilon_i\in\{-1,1\}$ such that
  $\phi(T_i)=\epsilon_i\sum_{\widetilde{F}\in \widetilde{X_i}}
  \phi(\widetilde{F})$.  Since $T'$ is the disjoint union of
  $T_1,\ldots,T_k$, there exists $1\leq i\leq k$, such that $e$ is an
  edge of $T_i$. Assume by symmetry that $e$ is an edge of
  $T_1$. Since $\phi(T_1)_e=\phi(D)_e=\phi(\widetilde{F_1})_e$, we
  have $\epsilon_1=1$, $\widetilde{F_1}\in \widetilde{X_1}$ and
  $\widetilde{F_2}\notin \widetilde{X_1}$.

  Let $\widetilde{Y}=\widetilde{X}\cap \widetilde{X_1}$. Thus
  $\widetilde{F_1}\in \widetilde{Y}$ and $\widetilde{F_2}\notin
  \widetilde{Y}$. So $|\widetilde{Y}|<|\widetilde{X}|$. Let
  $T_{\widetilde{Y}}$ be the oriented subgraph of $G$ such that
  $T_{\widetilde{Y}}=\sum_{\widetilde{F}\in \widetilde{Y}}
  \phi(\widetilde{F})$.  Note that the edges of $T$ (resp. $T_1$) are
  those incident to exactly one face of $\widetilde{X}$
  (resp. $\widetilde{X_1}$). Similarly every edge of
  $T_{\widetilde{Y}}$ is incident to exactly one face of
  $\widetilde{Y}=\widetilde{X}\cap \widetilde{X_1}$, i.e. it has one
  incident face in $\widetilde{Y}=\widetilde{X}\cap \widetilde{X_1}$
  and the other incident face not in $\widetilde{X}$ or not in
  $\widetilde{X_1}$.  In the first case this edge is in $T$, otherwise
  it is in $T_1$. So every edge of $T_{\widetilde{Y}}$ is an edge of
  $T\cup T_1$. Hence $T_{\widetilde{Y}}$ is an oriented subgraph of
  $D$. So we can apply the induction hypothesis on
  $T_{\widetilde{Y}}$. This implies that there exists
  $\widetilde{F}\in\widetilde{Y}$ such that $\widetilde{F}$ is an
  oriented subgraph of $D$. Since
  $\widetilde{Y}\subseteq\widetilde{X}$, this is a contradiction to
  our assumption.
\end{proof}

We need the following
characterization of distributive lattice from~\cite{Fel09}:

\begin{theorem}[\cite{Fel09}]
\label{th:hasse}
  An oriented graph $\mathcal{H}=(V,E)$ is the Hasse diagram of a
  distributive lattice if and only if it is connected, acyclic, and
  admits an edge-labeling $c$ of the edges such that:
\begin{itemize}
\item if $(u,v), (u,w)\in E$ then 
    \begin{itemize}
      \item[(U1)] $c(u,v)\neq c(u,w)$ and
      \item[(U2)] there is $z\in V$ such that $(v,z), (w,z)\in E$,
        $c(u,v)=c(w,z)$, and $c(u,w)=c(v,z)$.
    \end{itemize}
\item if $(v,z), (w,z)\in E$ then
    \begin{itemize}
      \item[(L1)] $c(v,z)\neq c(w,z)$ and
      \item[(L2)] there is $u\in V$ such that $(u,v), (u,w)\in E$,
        $c(u,v)=c(w,z)$, and $c(u,w)=c(v,z)$.
    \end{itemize}
\end{itemize}
\end{theorem}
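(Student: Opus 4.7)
The plan is to prove both directions by leveraging Birkhoff's representation theorem for finite distributive lattices.

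For the ($\Longrightarrow$) direction, assume $\mathcal{H}$ is the Hasse diagram of a distributive lattice $L$. Connectedness and acyclicity are immediate. To construct the edge-labeling, I would invoke Birkhoff's theorem: $L$ is isomorphic to the lattice $J(P)$ of order ideals of the subposet $P$ of join-irreducibles of $L$. Each covering relation $u \lessdot v$ in $L$ corresponds to adding a unique join-irreducible $j \in P$ to the ideal for $u$; I would set $c(u,v) := j$. Property (U1) is then immediate, because two upward covers from the same vertex must add distinct join-irreducibles. For (U2), take $z := v \vee w$; distributivity of $L$ implies that $z$ covers both $v$ and $w$, and the ideals along the two directed $u$-to-$z$ paths differ precisely by the two added join-irreducibles, swapped. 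Properties (L1) and (L2) follow dually by setting $u := v \wedge w$ and using distributivity again.

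For the ($\Longleftarrow$) direction, I would first view $\mathcal{H}$ as a poset via its transitive closure. Acyclicity together with (U2) and (L2) forces existence of joins and meets for pairs sharing an upper or lower cover; iterating and using finiteness turns the poset into a lattice, with a global $\hat{0}$ and $\hat{1}$ obtained from acyclicity and connectedness. The core step is to prove distributivity by establishing that the multiset of edge-labels along any directed path between two fixed vertices depends only on its endpoints. Here, (U2) and (L2) act as local \emph{diamond exchanges}: two directed paths differing by a single diamond clearly carry the same multiset of labels, so it suffices to show that any two directed paths between the same endpoints can be connected by a sequence of such exchanges. I would then define $\kappa(v)$ as the multiset of labels along any directed path from $\hat{0}$ to $v$, and check, using (U1) and (U2), that $v \leq w$ if and only if $\kappa(v) \subseteq \kappa(w)$ as multisets, and that joins and meets in $\mathcal{H}$ correspond to multiset union and intersection. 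This embeds $\mathcal{H}$ into a product of chains, proving distributivity.

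The main obstacle is the well-definedness of $\kappa$, i.e., rigorously showing that any two directed paths between two comparable vertices carry identical multisets of labels. This requires an induction, for instance on the number of faces enclosed between the two paths in an embedding of $\mathcal{H}$, where the inductive step is a single diamond flip justified by (U2) at a topmost cell or (L2) at a bottommost one. Care must be taken to ensure that each intermediate path remains within $\mathcal{H}$; acyclicity together with the local exchange conditions supplied by (U2)/(L2) is exactly what makes this reduction go through.
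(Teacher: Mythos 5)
The paper itself does not prove this theorem: it is cited directly from Felsner and Knauer~\cite{Fel09}, so there is no internal proof to compare against.

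Your forward direction via Birkhoff is correct: in $J(P)$ each cover adds a unique join-irreducible, (U1) is then immediate, and (U2)/(L2) follow by taking $z=v\vee w$ and $u=v\wedge w$, which are again covers in a distributive lattice. The backward direction, however, contains a genuine gap. You reduce everything to showing that any two directed paths between the same endpoints carry the same multiset of labels, and you propose to prove this by induction ``on the number of faces enclosed between the two paths in an embedding of $\mathcal{H}$,'' performing diamond flips at a topmost or bottommost cell. This argument does not go through: Hasse diagrams of distributive lattices are in general not planar (already the Boolean lattice $2^3$ is $K_{3,3}$-like once you add the transitivity context; $2^n$ for larger $n$ is certainly non-planar), so there is no embedding in which ``enclosed faces'' or ``topmost cell'' are meaningful, and the induction has no well-defined measure. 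The topological language is a red herring.

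The well-definedness of the label multiset should instead be proved by a purely order-theoretic induction that exploits finiteness and acyclicity directly. Concretely: induct on the length of the longest directed path starting at $u$. Given two maximal directed paths from $u$ with distinct first edges $(u,v)$ and $(u,w)$, (U2) produces $z$ with $(v,z),(w,z)\in E$ and swapped labels; extend $z$ to any maximal path $Q$. The induction hypothesis applied at $v$ shows that $(u,v)\cdot(v,z)\cdot Q$ and the first original path end at the same vertex with the same multiset, and likewise at $w$; comparing through $z$ and using the label swap from (U2) finishes the step. This is just Newman's lemma in disguise (termination from acyclicity and finiteness, local confluence from (U2)), and it gives a unique top element $\hat1$, a well-defined $\kappa^{\mathrm{up}}(v)$ (multiset from $v$ to $\hat1$), and hence a well-defined $\kappa(v)=\kappa^{\mathrm{up}}(\hat0)\setminus\kappa^{\mathrm{up}}(v)$ after the dual argument with (L2) gives $\hat0$. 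Your subsequent claim that $\kappa$ embeds $\mathcal{H}$ into a product of chains (so joins/meets become unions/intersections, whence distributivity) is the right endgame, but you should also verify that the image of $\kappa$ is closed under union and intersection of multisets; this again uses (U2) and (L2), and is not automatic from $\kappa$ being an order embedding.
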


We define the directed graph $\mathcal{H}$ with vertex set
$O(G,D_0)$.  There is an oriented edge from $D_1$ to $D_2$ in
$\mathcal{H}$ (with $D_1\leq_{f_0}D_2$) if and only if
$D_1\setminus D_2\in \widetilde{\mathcal{F}}'$.  We define the label
of that edge as $c(D_1,D_2)=D_1\setminus D_2$. We show that
$\mathcal{H}$ fulfills all the conditions of Theorem~\ref{th:hasse},
and thus obtain the following:

\begin{proposition}
  \label{lem:fulfillshasse}
  $\mathcal{H}$ is the Hasse diagram of a distributive lattice.
\end{proposition}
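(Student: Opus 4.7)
The plan is to verify all four conditions of Theorem~\ref{th:hasse}: that $\mathcal{H}$ is connected, acyclic, and that the edge-labeling $c$ satisfies (U1), (U2), (L1), and (L2). For acyclicity, I will first check that $\leq_{f_0}$ is a partial order on $O(G,D_0)$. Antisymmetry and transitivity reduce to the linear independence of $\{\phi(F):F\in\mc{F}'\}$ together with the additivity $\phi(D\setminus D'')=\phi(D\setminus D')+\phi(D'\setminus D'')$ of the characteristic flows of differences of orientations. Since each edge of $\mathcal{H}$ corresponds to adding to the representation some $\widetilde{F}\in\widetilde{\mathcal{F}}'$, whose flow $\phi(\widetilde{F})$ is a non-trivial non-negative combination of the elements of $\mc{F}'$, each edge of $\mathcal{H}$ strictly raises the $\leq_{f_0}$-level of its endpoint, so $\mathcal{H}$ is acyclic.

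For connectedness I will argue by induction on $|D\setminus D_0|$ that every $D\in O(G,D_0)$ is connected to $D_0$ in $\mathcal{H}$. When $D\neq D_0$, the set $T=D_0\setminus D$ is a non-empty $0$-homologous oriented subgraph of $D_0$, and Lemma~\ref{cl:partition} provides a decomposition $\phi(T)=\sum_i\phi(T_i)$ with at least one non-empty piece $T_j=\epsilon_j\sum_{\widetilde{F}\in\widetilde{X_j}}\phi(\widetilde{F})$. Lemma~\ref{cl:facenonrigid} applied to $T_j$ then yields some $\widetilde{F}\in\widetilde{X_j}$ with $\epsilon_j\,\phi(\widetilde{F})$ an oriented subgraph of $D_0$, so $\widetilde{F}$ can be flipped in $D_0$, producing an edge of $\mathcal{H}$ (upward or downward according to the sign of $\epsilon_j$) to a new orientation $D_0'$. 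Because the edges of $\widetilde{F}$ belonged to $T$ but after the flip agree with $D$, we have $|D_0'\setminus D|<|D_0\setminus D|$, and the induction proceeds.

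Finally, the edge-labeling conditions rest on a geometric observation that I expect to be the main point to make precise: if two faces of $\widetilde{G}$ are both counterclockwise (respectively both clockwise) boundaries in the same orientation $D$, then they must be edge-disjoint, for a shared edge would otherwise have to be oriented in two opposite directions at once in $D$. Condition (U1) is then immediate, since the flipped face is uniquely determined by the source and destination, so distinct out-edges at $D$ carry distinct labels. For (U2), the two counterclockwise faces $\widetilde{F}_1=c(D,D_1)$ and $\widetilde{F}_2=c(D,D_2)$ at $D$ are edge-disjoint, so the two flips commute and produce a common successor $D'$ together with edges $(D_1,D')$ and $(D_2,D')$ whose labels $\widetilde{F}_2$ and $\widetilde{F}_1$ are swapped as required. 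Conditions (L1) and (L2) follow from the dual argument applied to the two clockwise faces at the common upper vertex $D$.
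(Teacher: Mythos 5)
Your treatment of acyclicity, of (U1) and (L1), and of (U2) and (L2) matches the paper's and is correct; in particular the edge-disjointness observation for two counterclockwise (resp.\ clockwise) flippable faces at a common vertex of $\mathcal{H}$ is exactly the argument the paper uses.

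The connectedness step, however, has a genuine gap. You induct on $|D_0\setminus D|$, the number of edges on which $D_0$ and $D$ differ, and to make this decrease you assert that all edges of the flipped face $\widetilde{F}$ belong to $T=D_0\setminus D$. This is false in general: $\widetilde{F}\in\widetilde{X_j}$ only means $\phi(T_j)=\epsilon_j\sum_{\widetilde{F}'\in\widetilde{X_j}}\phi(\widetilde{F}')$, so an edge of $\widetilde{F}$ that is shared with another face $\widetilde{F}''\in\widetilde{X_j}$ cancels in this sum and need not lie in $T_j$, nor in any other $T_i$, nor in $T$ at all (two adjacent faces of $\widetilde{G}$ can carry the same $\lambda$ value). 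Flipping $\widetilde{F}$ then reverses those shared edges and \emph{creates} new disagreements with $D$. Concretely, if two faces $\widetilde{F}_A,\widetilde{F}_B\in\widetilde{X_1}$ have $\lambda=1$, share a long boundary, and $\lambda=0$ outside them, then $T=T_1$ is only the outer boundary of $\widetilde{F}_A\cup\widetilde{F}_B$; the face that Lemma~\ref{cl:facenonrigid} hands you (say $\widetilde{F}_A$) may have few edges on that outer boundary and many on the shared boundary, so $|D_0'\setminus D|$ strictly increases after the flip. Your induction therefore does not terminate. The paper avoids this by inducting instead on $s(T)=\sum_{F\in\mc{F}'}|\lambda_F|$: flipping $\widetilde{F_1}\in\widetilde{X_1}$ with sign $\epsilon_1$ decreases $|\lambda_F|$ by exactly one for every constituent $F\in X_{\widetilde{F_1}}$, so $s$ drops by $|X_{\widetilde{F_1}}|>0$ regardless of how many edges the flip creates or destroys. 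Replacing your edge-counting parameter by this weighted face-counting parameter repairs the proof.
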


\begin{proof}
  The characteristic flows of elements of $\widetilde{\mathcal{F}}'$
  form an independent set, hence the digraph $\mathcal{H}$ is acyclic.
  By definition all outgoing and all incoming edges of a vertex of
  $\mathcal{H}$ have different labels, i.e. the labeling $c$ satisfies
  (U1) and (L1).  If $(D_u,D_v)$ and $(D_u,D_w)$ belong to
  $\mathcal{H}$, then $T_v=D_u\setminus D_v$ and
  $T_w=D_u\setminus D_w$ are both elements of
  $\widetilde{\mathcal{F}}'$, so they must be edge disjoint.  Thus,
  the orientation $D_z$ obtained from reversing the edges of $T_w$ in
  $D_v$ or equivalently $T_v$ in $D_w$ is in $O(G,D_0)$. This gives
  (U2). The same reasoning gives (L2).  It remains to show that
  $\mathcal{H}$ is connected.

Given a $0$-homologous oriented subgraph $T$ of $G$,
such that $T=\sum_{F\in\mc{F}'}\lambda_F\phi(F)$, we define
$s(T)=\sum_{F\in\mathcal{F'}}|\lambda_F|$.

Let $D,D'$ be two orientations in $O(G,D_0)$, and $T=D\setminus
D'$. We prove by induction on $s(T)$ that $D,D'$ are connected in
$\mathcal{H}$.  This is clear if $s(T)=0$ as then $D=D'$. So we now
assume that $s(T)\neq 0$ and so that $D,D'$ are distinct.
Lemma~\ref{cl:partition} implies that there exists edge-disjoint
oriented subgraphs $T_1,\ldots,T_k$ of $D$ such that
$\phi(T)=\sum_{1\leq i \leq k} \phi(T_i)$, and, for $1\leq i \leq k$,
there exists $\widetilde{X_i}\subseteq \widetilde{{\mathcal{F}}}'$ and
$\epsilon_i\in\{-1,1\}$ such that
$\phi(T_i)=\epsilon_i\sum_{\widetilde{F}\in \widetilde{X_i}}
\phi(\widetilde{F})$.  Lemma~\ref{cl:facenonrigid} applied to $T_1$
implies that there exists $\widetilde{F_1}\in\widetilde{X_1}$ such
that $\epsilon_1\,\phi(\widetilde{F_1})$ corresponds to an oriented subgraph
of $D$.  Let $T'$ be the oriented subgraph such that
$\phi(T)=\epsilon_1\phi(\widetilde{F_1})+\phi(T')$. Thus:
\begin{eqnarray*}
\phi(T') &=&\phi(T)-\epsilon_1\phi(\widetilde{F_1})\\
&=&\sum_{1\leq i\leq
  k}\phi(T_i)-\epsilon_1\phi(\widetilde{F_1})\\
&=&\sum_{\widetilde{F}\in(\widetilde{X_1}\setminus\{\widetilde{F_1}\})}
\epsilon_1 \phi(\widetilde{F})+\sum_{2\leq i\leq k}\sum_{\widetilde{F}\in
  \widetilde{X_i}} \epsilon_i\phi(\widetilde{F})\\
&=&\sum_{\widetilde{F}\in(\widetilde{X_1}\setminus\{\widetilde{F_1}\})}
\sum_{F\in{X_{\widetilde{F}}}} \epsilon_1\phi({F})+\sum_{2\leq i\leq
  k}\sum_{\widetilde{F}\in
  \widetilde{X_i}}\sum_{F\in{X_{\widetilde{F}}}} \epsilon_i\phi({F})
\end{eqnarray*}
So $T'$ is $0$-homologous.  Let $D''$ be such that $\epsilon_1
\widetilde{F_1}=D\setminus D''$.  So we have $D'' \in O(G,D_0)$ and
there is an edge between $D$ and $D''$ in $\mathcal{H}$. Moreover
$T'=D''\setminus D'$ and $s(T')=
s(T)-|{X_{\widetilde{F_1}}}|<s(T)$. So the induction hypothesis on
$D'',D'$ implies that they are connected in $\mathcal{H}$. So $D,D'$
are also connected in $\mathcal{H}$.
\end{proof}

Note that Proposition~\ref{lem:fulfillshasse} gives a  proof of
Theorem~\ref{th:lattice} independent from
Propp~\cite{Pro93}. 

We continue to further investigate the set
$O(G,D_0)$.

\begin{proposition}
\label{lem:necessary}
For every element $ \widetilde{F}\in \widetilde{\mathcal{F}}$, there
exists $D$ in $O(G,D_0)$ such that $\widetilde{F}$ is an oriented subgraph
of $D$. 
\end{proposition}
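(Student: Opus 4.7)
I plan to prove this by a maximality argument. For each $D\in O(G,D_0)$ set
$$\sigma(D):=\#\{e\in E(\widetilde{F}):\phi(D)_e=\phi(\widetilde{F})_e\},$$
so the goal is to achieve $\sigma(D)=|E(\widetilde{F})|$ for some $D$. Treat first $\widetilde{F}\in\widetilde{\mathcal{F}}'$: pick $D\in O(G,D_0)$ maximizing $\sigma$, and suppose toward a contradiction that some edge $e\in \widetilde{F}$ satisfies $\phi(D)_e=-\phi(\widetilde{F})_e$. Since $e\in E(\widetilde{G})$ is non-rigid, Lemma~\ref{lem:non-rigid} supplies $D^*\in O(G,D_0)$ with $\phi(D^*)_e=\phi(\widetilde{F})_e$, so $T:=D\setminus D^*$ is a non-empty $0$-homologous oriented subgraph of $D$ containing $e$ with $\phi(T)_e=-\phi(\widetilde{F})_e$.

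By Lemma~\ref{cl:partition} decompose $T=T_1\sqcup\cdots\sqcup T_k$ with $\phi(T_i)=\epsilon_i\sum_{\widetilde{F}'\in\widetilde{X_i}}\phi(\widetilde{F}')$ and take $e\in T_1$. Only the two faces of $\widetilde{G}$ incident to $e$, namely $\widetilde{F}$ and the unique other face $\widehat F$, can contribute to $\phi(T_1)_e$ (with opposite signs), so $\phi(T_1)_e=-\phi(\widetilde{F})_e$ forces exactly one of the alternatives $(a)\ \widetilde{F}\in\widetilde{X_1},\ \epsilon_1=-1$, or $(b)\ \widehat F\in\widetilde{X_1},\ \epsilon_1=+1$. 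Lemma~\ref{cl:facenonrigid} then supplies some $\widetilde{F}^\circ\in\widetilde{X_1}$ with $\epsilon_1\widetilde{F}^\circ$ an oriented subgraph of $D$. Since any two $\widetilde{G}$-face boundaries orient their shared edges oppositely, for every $e'\in E(\widetilde{F})\cap E(\widetilde{F}^\circ)$ we have $\phi(D)_{e'}=-\epsilon_1\phi(\widetilde{F})_{e'}$, which after flipping $\widetilde{F}^\circ$ becomes $\epsilon_1\phi(\widetilde{F})_{e'}$. In case $(a)$ with $\widetilde{F}^\circ=\widetilde{F}$, the flipped orientation $D'$ has $\widetilde{F}$ as an oriented subgraph; in case $(b)$ with $\widetilde{F}^\circ=\widehat F$, every shared edge (including $e$) flips from wrong to correct, so $\sigma(D')>\sigma(D)$, contradicting maximality.

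The delicate step is that Lemma~\ref{cl:facenonrigid} yields only \emph{some} $\widetilde{F}^\circ\in\widetilde{X_1}$, not necessarily the face in $\{\widetilde{F},\widehat F\}$ needed above. I would resolve this by additionally minimizing $|D\setminus D^*|$ among valid choices of $D^*$: if a candidate face $\widetilde{F}^\circ\in\widetilde{X_i}$ is not incident to $e$, then, using the level-set description of $\widetilde{X_i}$ from the proof of Lemma~\ref{cl:partition}, flipping $\widetilde{F}^\circ$ in $D^*$ produces $D^{**}\in O(G,D_0)$ still having $e$ correct but with $|D\setminus D^{**}|<|D\setminus D^*|$, contradicting minimality. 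This confinement forces the Lemma-candidate in $\widetilde{X_1}$ to lie in $\{\widetilde{F},\widehat F\}$, and the dichotomy above closes the argument. Finally, the case $\widetilde{F}=\widetilde{F_0}$ reduces to the previous one: if $\widetilde{G}$ has a face distinct from $\widetilde{f_0}$, choosing a new base face $f_1\in\mathcal{F}$ lying in such a face keeps $O(G,D_0)$ and $\widetilde{\mathcal{F}}$ unchanged but moves $\widetilde{F_0}$ into the new $\widetilde{\mathcal{F}}'$; otherwise $\widetilde{G}$ has a single face, $\widetilde{F_0}$ is the empty walk, and the statement is trivial. The principal obstacle is the confinement argument in this paragraph: one must verify carefully that the edges of a non-incident candidate $\widetilde{F}^\circ$ really lie in $T$, so that flipping it in $D^*$ genuinely shrinks $|D\setminus D^*|$.
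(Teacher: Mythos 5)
Your setup --- maximizing $\sigma$, invoking Lemma~\ref{lem:non-rigid} to get $D^*$, and decomposing $T=D\setminus D^*$ via Lemma~\ref{cl:partition} --- is exactly the paper's. The divergence is your appeal to Lemma~\ref{cl:facenonrigid}, and the ``confinement'' argument you propose to close the resulting gap does not work. First, ``flipping $\widetilde{F}^\circ$ in $D^*$'' need not stay in $O(G,D_0)$: although $\epsilon_1\phi(\widetilde{F}^\circ)$ is a consistently oriented subgraph of $D$, in $D^*$ the edges of $\widetilde{F}^\circ$ lying in $T$ have been reversed while the others have not, so the edge set of $\widetilde{F}^\circ$ carries a coherent orientation in $D^*$ (and hence reversing it is a $0$-homologous change) only when $E(\widetilde{F}^\circ)\cap T$ is either empty or all of $E(\widetilde{F}^\circ)$; in general neither holds. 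Second, even when the move is legal, $|D\setminus D^{**}|=|T|-|E(\widetilde{F}^\circ)\cap T|+|E(\widetilde{F}^\circ)\setminus T|$, which strictly shrinks only in the subcase $E(\widetilde{F}^\circ)\subseteq T$, and nothing forces the face handed to you by Lemma~\ref{cl:facenonrigid} to be in that subcase. You flagged this worry yourself, and it is indeed fatal as written.

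The repair is not to reach for Lemma~\ref{cl:facenonrigid} at all. Your dichotomy already contains the relevant fact: because $\widetilde{F}$ is a single face of $\widetilde{G}$, it sits at a single level $\lambda^*$ in the level decomposition underlying Lemma~\ref{cl:partition}. Every edge $e'$ of $T_1\cap\widetilde{F}$ separates $\widetilde{F}$ from a face on the opposite side of the level-$1$ threshold, and since $\lambda^*$ is the same for all of them, either $\phi(T_1)_{e'}=+\phi(\widetilde{F})_{e'}$ for all such $e'$, or $\phi(T_1)_{e'}=-\phi(\widetilde{F})_{e'}$ for all such $e'$. Since $e$ itself is wrong, the latter holds: \emph{every} edge of $T_1\cap\widetilde{F}$ is oriented opposite to $\widetilde{F}$ in $D$. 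Now reverse all of $T_1$, not a single face: $T_1$ is a $0$-homologous oriented subgraph of $D$, so the orientation $D''$ obtained by reversing it lies in $O(G,D_0)$, and every edge of $T_1\cap\widetilde{F}$ (including $e$) becomes correct while no other edge of $\widetilde{F}$ is touched. Hence $\sigma(D'')>\sigma(D)$, contradicting maximality. This is the paper's argument; it also applies verbatim when $\widetilde{F}=\widetilde{F}_0$, so your separate base-face case is unnecessary.
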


\begin{proof}
  Let $ \widetilde{F}\in \widetilde{\mathcal{F}}$. Let $D$ be an
  element of $O(G,D_0)$ that maximizes the number of edges of
  $\widetilde{F}$ that have the same orientation in $\widetilde{F}$
  and $D$, i.e. $D$ maximizes the number of edges oriented
  counterclockwise on the boundary of the face of $\widetilde{G}$
  corresponding to $\widetilde{F}$. Towards a contradiction, suppose that
  there is an edge $e$ of $\widetilde{F}$ that does not have the same
  orientation in $\widetilde{F}$ and $D$. The edge $e$ is in
  $\widetilde{G}$ so it is non-rigid.  Let $D'\in O(G,D_0)$ such that
  $e$ is oriented differently in $D$ and $D'$. Let $T=D\setminus D'$.
  By Lemma~\ref{cl:partition}, there exist edge-disjoint oriented
  subgraphs $T_1,\ldots,T_k$ of $D$ such that
  $\phi(T)=\sum_{1\leq i \leq k} \phi(T_i)$, and, for
  $1\leq i \leq k$, there exists
  $\widetilde{X}_i\subseteq \widetilde{{\mathcal{F}}}'$ and
  $\epsilon_i\in\{-1,1\}$ such that
  $\phi(T_i)=\epsilon_i\sum_{\widetilde{F}'\in \widetilde{X}_i}
  \phi(\widetilde{F}')$.
  W.l.o.g., we can assume that $e$ is an edge of $T_1$.  Let $D''$ be
  the element of $O(G,D_0)$ such that $T_1=D\setminus D''$.  The
  oriented subgraph $T_1$ intersects $\widetilde{F}$ only on edges of
  $D$ oriented clockwise on the border of $\widetilde{F}$. So $D''$
  contains strictly more edges oriented counterclockwise on the border
  of the face $\widetilde{F}$ than $D$, a contradiction.  So all the
  edges of $\widetilde{F}$ have the same orientation in $D$.  So
  $\widetilde{F}$ is a $0$-homologous oriented subgraph of $D$.
\end{proof}

By Proposition~\ref{lem:necessary}, for every element
$ \widetilde{F}\in \widetilde{\mathcal{F}}'$ there exists $D$ in
$O(G,D_0)$ such that $\widetilde{F}$ is an oriented subgraph of $D$. Thus
there exists $D'$ such that $\widetilde{F}=D\setminus D'$ and $D,D'$
are linked in $\mathcal{H}$.  Thus, $\widetilde{\mathcal{F}}'$ is a minimal 
set that generates the lattice.

A distributive lattice has a unique maximal (resp. minimal) element.
Let $D_{\max}$ (resp. $D_{\min}$) be the maximal (resp. minimal)
element of $(O(G,D_0),\leq_{f_0})$.

\begin{proposition}
\label{lem:maxtilde}
  $\widetilde{F}_0$ (resp. $-\widetilde{F}_0$) is an oriented subgraph
  of $D_{\max}$ (resp. $D_{\min}$). 
\end{proposition}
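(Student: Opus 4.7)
The plan is to prove both statements by contradiction, using Proposition~\ref{lem:necessary} (together with a clockwise analog of it) and a coefficient analysis in the basis $\mathcal{F}'$. For $\widetilde{F}_0 \subseteq D_{\max}$, I would first invoke Proposition~\ref{lem:necessary} to obtain $D^* \in O(G,D_0)$ with $\widetilde{F}_0$ an oriented subgraph of $D^*$. Since $D^* \leq_{f_0} D_{\max}$, the oriented subgraph $T := D^* \setminus D_{\max}$ is counterclockwise, so $\phi(T) = \sum_{F \in \mathcal{F}'} \lambda_F \phi(F)$ with all $\lambda_F \geq 0$, which I extend by setting $\lambda_{F_0} := 0$.

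The key observation is then that $\lambda_F = 0$ for every face $F$ of $G$ whose interior lies inside $\widetilde{f}_0$. Indeed, such faces are precisely those connected to $f_0$ via sequences of rigid edges of $G$ (since $\widetilde{G}$ is obtained by deleting exactly the rigid edges); and for any rigid edge $e'$ between two such faces $F_1, F_2$, the identity $\phi(T)_{e'} = 0$ (rigid edges do not lie in $T$) forces $\lambda_{F_1} = \lambda_{F_2}$, so propagation from $\lambda_{F_0} = 0$ yields the claim. Now assume for contradiction that some edge $e$ of $\widetilde{F}_0$ is oppositely oriented in $D_{\max}$ and in $\widetilde{F}_0$: then $e \in T$ with $\phi(T)_e = \phi(\widetilde{F}_0)_e$, while $\phi(T)_e = \lambda_{F_l(e)} - \lambda_{F_r(e)}$, where $F_l(e), F_r(e)$ are the two $G$-faces on either side of $e$ (with respect to the fixed orientation) and exactly one of them (the one on the $\widetilde{f}_0$-side of $e$) lies inside $\widetilde{f}_0$, hence has vanishing $\lambda$, the other being non-negative. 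A short case split on whether the fixed orientation agrees with $\widetilde{F}_0$ on $e$ shows that $\phi(T)_e$ necessarily has the opposite sign from $\phi(\widetilde{F}_0)_e$, which is the desired contradiction.

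The statement $-\widetilde{F}_0 \subseteq D_{\min}$ follows by the symmetric argument. The only additional input is a clockwise analog of Proposition~\ref{lem:necessary}, asserting the existence of $D^{**} \in O(G,D_0)$ with $-\widetilde{F}_0$ an oriented subgraph; this follows from the same proof as Proposition~\ref{lem:necessary} by replacing ``maximizes the number of edges of $\widetilde{F}_0$ oriented counterclockwise around $\widetilde{f}_0$'' with ``maximizes the number oriented clockwise''. Given such $D^{**}$, $D_{\min} \leq_{f_0} D^{**}$ implies that $D_{\min} \setminus D^{**}$ is counterclockwise, and the identical coefficient-and-sign analysis applied to any edge $e \in \widetilde{F}_0$ assumed to be oriented as in $\widetilde{F}_0$ in $D_{\min}$ yields the same contradiction. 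The main obstacle in the whole proof is isolating the key vanishing of the $\lambda_F$ inside $\widetilde{f}_0$; after that, the rest is routine sign chasing.
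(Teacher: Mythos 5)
Your proof is correct and follows essentially the same route as the paper: invoke Proposition~\ref{lem:necessary}, compare the resulting orientation to $D_{\max}$ via $\leq_{f_0}$, and argue that $T$ cannot meet $\widetilde{F}_0$. The paper condenses the heart of your argument into the (unelaborated) assertion that $\phi(T)$ is a nonnegative combination of $\widetilde{\mathcal{F}}'$-flows, which it justifies implicitly via Lemma~\ref{cl:partition}; your rigid-edge-propagation argument ($\lambda_F=0$ on the $G$-faces inside $\widetilde{f}_0$, followed by the sign analysis on $\phi(T)_e=\lambda_{F_l}-\lambda_{F_r}$) makes this step explicit and is correct. You are also right that the $D_{\min}$ case needs a clockwise analog of Proposition~\ref{lem:necessary}; the paper dismisses this with ``analogous,'' but its proof indeed carries over verbatim under the $\ccw\!\to\!\cw$ swap, as you note.
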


\begin{proof}
  By Proposition~\ref{lem:necessary}, there exists $D$ in $O(G,D_0)$ such
  that $\widetilde{F_0}$ is an oriented subgraph of $D$. Let
  $T=D\setminus D_{\max}$. Since $D\leq_{f_0} D_{\max}$, the
  characteristic flow of $T$ can be written as a combination with
  positive coefficients of characteristic flows of
  $\widetilde{\mathcal{F}}'$, i.e.
  $\phi(T)=\sum_{\widetilde{F}\in
    \widetilde{\mathcal{F}}'}\lambda_F\phi(\widetilde{F})$
  with $\lambda\in\mathbb{N}^{|\mc{F}'|}$. So $T$ is disjoint from
  $\widetilde{F}_0$.  Thus $\widetilde{F}_0$ is an oriented subgraph
  of $D_{\max}$. The proof is analogous for $D_{\min}$.
  \end{proof}

\begin{proposition}
\label{prop:maximal}
$D_{\max}$ (resp. $D_{\min}$) contains no counterclockwise
(resp. clockwise) non-empty $0$-homologous oriented subgraph.
\end{proposition}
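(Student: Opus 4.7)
The plan is to prove this by contradiction, using the two technical lemmas already established for the lattice proof. I will argue the case of $D_{\max}$; the case of $D_{\min}$ is symmetric, replacing counterclockwise by clockwise and working with $\epsilon=-1$ instead of $\epsilon=+1$ throughout (equivalently, replacing $T$ by its reverse).

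Suppose, for a contradiction, that $T$ is a non-empty counterclockwise $0$-homologous oriented subgraph of $D_{\max}$. By definition of counterclockwise, $\phi(T)=\sum_{F\in\mathcal{F}'}\lambda_F\,\phi(F)$ with $\lambda\in\mathbb{N}^{|\mc{F}'|}$. I now apply Lemma~\ref{cl:partition} to $T$ (with $D=D_{\max}$). Re-reading its proof, the $\epsilon_i=-1$ pieces appear only when $\lambda_{\min}<0$; since here $\lambda_{\min}\ge 0$, the decomposition involves only pieces with $\epsilon_i=+1$. Hence there exist edge-disjoint oriented subgraphs $T_1,\ldots,T_k$ of $D_{\max}$ with $\phi(T)=\sum_{1\le i\le k}\phi(T_i)$ and, for each $i$, a subset $\widetilde{X}_i\subseteq\widetilde{\mathcal{F}}'$ such that $\phi(T_i)=\sum_{\widetilde{F}\in\widetilde{X}_i}\phi(\widetilde{F})$. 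Since $T$ is non-empty, at least one $T_i$ is non-empty; relabel so that $T_1$ is non-empty.

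Now I apply Lemma~\ref{cl:facenonrigid} to $T_1$ with $\epsilon=+1$ (note that $T_1$ is $0$-homologous, being a sum of characteristic flows of $\widetilde{\mathcal{F}}'$, which are themselves $0$-homologous). This yields some $\widetilde{F}\in\widetilde{X}_1$ such that $\phi(\widetilde{F})$ corresponds to an oriented subgraph of $D_{\max}$. Let $D'$ be the orientation obtained from $D_{\max}$ by reversing every edge of $\widetilde{F}$. Since $\widetilde{F}$ is $0$-homologous, $D'\in O(G,D_0)$. By construction, $D_{\max}\setminus D'=\widetilde{F}$, and since $\widetilde{F}\in\widetilde{\mathcal{F}}'$ satisfies $\phi(\widetilde{F})=\sum_{F\in X_{\widetilde{F}}}\phi(F)$ with $X_{\widetilde{F}}\subseteq\mathcal{F}'$ and non-negative (in fact $\{0,1\}$) coefficients, the subgraph $D_{\max}\setminus D'$ is counterclockwise and non-empty. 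Therefore $D_{\max}\leq_{f_0} D'$ with $D'\neq D_{\max}$, contradicting the maximality of $D_{\max}$.

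The only delicate point in this plan is the observation that, when $T$ is counterclockwise, the decomposition produced by Lemma~\ref{cl:partition} uses only $\epsilon_i=+1$ pieces; everything else is a direct invocation of the already-proved lemmas and the definitions of $\le_{f_0}$ and $\widetilde{\mathcal{F}}'$. The symmetric argument for $D_{\min}$ replaces ``counterclockwise'' by ``clockwise'' in the assumption, forces $\lambda_{\max}\le 0$, selects an $\epsilon_i=-1$ piece, and produces via Lemma~\ref{cl:facenonrigid} an $\widetilde{F}$ such that $-\widetilde{F}$ is a subgraph of $D_{\min}$; reversing those edges gives a strictly smaller element.
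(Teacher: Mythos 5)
Your proof is correct, but it takes a considerably longer road than the paper does, and the extra machinery you invoke is not needed. The paper's own proof is a one-liner: if $T$ is a non-empty counterclockwise $0$-homologous oriented subgraph of $D_{\max}$, simply reverse \emph{all} the edges of $T$ at once. The resulting orientation $D$ lies in $O(G,D_0)$ because $T$ is $0$-homologous, and $D_{\max}\setminus D=T$ is counterclockwise by assumption, so $D_{\max}\leq_{f_0}D$ and $D\neq D_{\max}$ already contradicts maximality. Nothing more is required: the order $\leq_{f_0}$ is defined directly on pairs of orientations by the condition that $D\setminus D'$ be counterclockwise, so there is no need to reduce to a single elementary flip.

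What you do instead is decompose $T$ via Lemma~\ref{cl:partition}, observe (correctly) that the counterclockwise assumption forces $\lambda_{\min}=0$ so that only $\epsilon_i=+1$ layers appear, and then extract a single flippable face $\widetilde{F}\in\widetilde{\mathcal{F}}'$ via Lemma~\ref{cl:facenonrigid}. This is sound, and in effect you re-derive a small piece of the connectivity argument from Proposition~\ref{lem:fulfillshasse} (that above any non-maximal element there is a covering element obtained by a single face flip). The only thing this buys you is the stronger conclusion that the contradiction can be exhibited by an elementary flip rather than by a bulk reversal — but that strength is not asked for here, and it costs you both lemmas plus a careful side-observation about the signs in the layer decomposition, where the paper needs none of it. You should notice that once you know $T$ is counterclockwise and $0$-homologous, you already have, for free, an element $D$ with $D_{\max}\setminus D=T$; going through $\widetilde{\mathcal{F}}'$ is a detour. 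Your handling of the symmetric $D_{\min}$ case is fine, and the sign bookkeeping there is also correct.
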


\begin{proof}
  Towards a contradiction, suppose that $D_{\max}$ contains a
  counterclockwise non-empty $0$-homologous oriented subgraph
  $T$. Then there exists $D\in O(G,D_0)$ distinct from $D_{\max}$ such
  that $T=D_{\max}\setminus D$. We have $D_{\max}\leq_{f_0} D$ by
  definition of $\leq_{f_0}$, a contradiction to the maximality of
  $D_{\max}$.
\end{proof}

In the definition of counterclockwise (resp. clockwise)
non-empty $0$-homologous oriented subgraph, used in
Proposition~\ref{prop:maximal}, the sum is taken over elements of
$\mc{F}'$ and thus does not use $F_0$. In particular, $D_{\max}$
(resp. $D_{\min}$) may contain regions whose boundary is oriented
counterclockwise (resp. clockwise) according to the region but then
such a region contains $F_0$.

We conclude this section by applying Theorem~\ref{th:lattice} to
Schnyder orientations:

\begin{theorem}
\label{cor:lattice}
  Let $G$ be a map on an orientable surface given with a particular
  Schnyder orientation $D_0$ of $\pdc{G}$ and a particular face $f_0$ of
  $\pdc{G}$.  Let $S(\pdc{G},D_0)$ be the set of all the Schnyder
  orientations of $\pdc{G}$ that have the same outdegrees and same
  type as $D_0$.  We have that $(S(\pdc{G},D_0),\leq_{f_0})$ is a
  distributive lattice.
\end{theorem}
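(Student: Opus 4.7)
The plan is to deduce this theorem directly from Theorem~\ref{th:lattice} applied to the map $\pdc{G}$, via the equivalence established in the third item of Theorem~\ref{th:transformations}.

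First, I would set up the application: take the ambient map to be $\pdc{G}$ (which is itself a map on the same orientable surface, since its faces are the quadrangles corresponding to the angles of $G$), take the distinguished orientation to be $D_0$, and take the distinguished face to be $f_0$. Theorem~\ref{th:lattice} then yields that $(O(\pdc{G},D_0),\leq_{f_0})$ is a distributive lattice, where $O(\pdc{G},D_0)$ is the set of all orientations of $\pdc{G}$ homologous to $D_0$.

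The key identification is then $S(\pdc{G},D_0)=O(\pdc{G},D_0)$. The third item of Theorem~\ref{th:transformations} says that, given that $D_0$ is a Schnyder orientation, an orientation $D'$ of $\pdc{G}$ is a Schnyder orientation having the same outdegrees and the same type as $D_0$ if and only if $D_0\setminus D'$ is $0$-homologous, i.e., $D'$ is homologous to $D_0$. This is exactly the membership condition for $O(\pdc{G},D_0)$, so the two sets coincide.

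Combining these two observations, $(S(\pdc{G},D_0),\leq_{f_0})=(O(\pdc{G},D_0),\leq_{f_0})$ is a distributive lattice, which is the desired conclusion. There is no real obstacle here: once Theorem~\ref{th:transformations} is in hand, the statement is a direct specialization of Theorem~\ref{th:lattice}, and the only thing to verify is that Theorem~\ref{th:lattice} applies to $\pdc{G}$ (which is immediate since $\pdc{G}$ is a map on an orientable surface and the statement of Theorem~\ref{th:lattice} is given for arbitrary such maps, even allowing contractible cycles of size $1$ or $2$).
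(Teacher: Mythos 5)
Your proposal is correct and follows exactly the paper's own proof: identify $S(\pdc{G},D_0)$ with $O(\pdc{G},D_0)$ via the third item of Theorem~\ref{th:transformations}, then invoke Theorem~\ref{th:lattice} applied to the map $\pdc{G}$.
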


\begin{proof}
  By the third item of Theorem~\ref{th:transformations}, we have
  $S(\pdc{G},D_0)=O(\pdc{G},D_0)$. Then the conclusion holds by
  Theorem~\ref{th:lattice}.
\end{proof}

Theorem~\ref{cor:lattice} is illustrated in Section~\ref{sec:example}
on an example. Note that the minimal element of the lattice and its
properties (Proposition~\ref{lem:fulfillshasse} to~\ref{prop:maximal})
are used in~\cite{DGL15} to obtain a new bijection concerning toroidal
triangulations.

\section{Toroidal triangulations}
\label{sec:toretri}

\subsection{New proof of the existence of Schnyder woods}
\label{sec:proof}

In this section we look specifically at the case of toroidal
triangulations. We study the structure of 3-orientations of toroidal
triangulations and show how one can use it to prove the existence of
Schnyder woods in toroidal triangulations. This corresponds to the case
$g=1$ of Conjecture~\ref{conjecture}. Given a toroidal triangulation
$G$, a \emph{3-orientation} of $G$ is an orientation of the edges of
$G$ such that every vertex has outdegree exactly three. By
Theorem~\ref{th:barat}, a simple toroidal triangulation admits a
3-orientation. This can be shown to be true also for 
non-simple triangulations, for example using edge-contraction.

Consider a toroidal triangulation $G$ and a 3-orientation of $G$.  Let
$G^\infty$ be the universal cover of $G$.

\begin{lemma}
\label{lem:kmoins3}
A cycle $C$ of $G^\infty$ of length $k$ has exactly $k-3$ edges
leaving $C$ and directed towards the interior of $C$.
\end{lemma}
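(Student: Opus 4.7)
The plan is to combine Euler's formula applied to the disk bounded by $C$ with the outdegree condition of the $3$-orientation. Since $G^\infty$ is the universal cover of a toroidal map it is simply connected and planar, so the simple cycle $C$ bounds a closed topological disk $D$; I will analyse the finite triangulated subgraph contained in~$D$.

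Let $V_C$ be the vertex set of $C$ (of size $k$), $V_I$ the set of vertices strictly inside $D$ (of size $n_I$), let $m$ be the number of edges of $G^\infty$ lying in $D$ (edges of $C$ together with edges drawn strictly inside), and let $f$ be the number of triangular faces inside $D$. Applying Euler's formula to the planar graph on $D$ with the exterior of $D$ treated as one extra face gives $(k+n_I)-m+(f+1)=2$, and counting edge-face incidences — each edge of $C$ is incident to one interior face and each non-$C$ edge of $D$ to two — gives $3f=2(m-k)+k=2m-k$. Eliminating $f$ yields $m=2k+3n_I-3$. Next I use the $3$-orientation: every interior vertex has all three of its outgoing edges inside $D$, so $3n_I=e_{II}+b'$, where $e_{II}$ counts the edges with both endpoints in $V_I$ and $b'$ counts the edges oriented from $V_I$ to $V_C$. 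Partition the $m$ edges of $D$ according to the position of their endpoints: $k$ edges of $C$, $c_1$ interior chords of $C$, the edges between $V_C$ and $V_I$ split by orientation into $b$ (tail in $V_C$) and $b'$ (tail in $V_I$), and the $e_{II}$ edges with both endpoints in $V_I$. This gives
\[
m \;=\; k+c_1+b+b'+e_{II} \;=\; k+c_1+b+3n_I.
\]
Comparing with $m=2k+3n_I-3$ produces $b+c_1=k-3$.

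Finally, the edges of $G^\infty$ with tail on $C$ that point into the interior of $D$ are exactly the interior chords of $C$ (each of which has both endpoints on $C$ but lies inside $D$, contributing $c_1$) together with the edges from $V_C$ to $V_I$ whose tail lies in $V_C$ (contributing $b$). Hence their number is $k-3$, as required. The argument is essentially routine double counting; the only subtle point is to notice that possible interior chords of $C$ must be included among the ``edges leaving $C$ towards the interior''.
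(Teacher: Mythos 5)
Your proof is correct and follows essentially the same route as the paper: apply Euler's formula to the finite planar graph contained in the disk bounded by $C$, use that every inner face is a triangle, and use that every interior vertex has outdegree exactly three to count edges. The paper reaches $x=k-3$ slightly more compactly by directly writing $m=3(n-k)+k+x$ (counting all edges by their tails, without splitting out the interior chords from edges into $V_I$), but your finer partition of edges into $\{C$-edges, interior chords, $V_C\!\to\!V_I$, $V_I\!\to\!V_C$, $V_I\!\to\!V_I\}$ yields the same identity $b+c_1=k-3$, and you correctly note at the end that both contributions count as ``leaving $C$ towards the interior''.
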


\begin{proof}
  Let $x$ be the number of edges leaving $C$ and directed towards the
  interior of $C$.  Consider the cycle $C$ and its interior as a planar
  graph $C^{o}$. Euler's formula gives $n-m+f=2$ where $n,m,f$ are
  respectively the number of vertices, edges and faces of $C^{o}$.
  Every inner vertex has exactly outdegree three, so
  $m=3(n-k)+k+x$. Every inner face is a triangle so $2m=3(f-1)+k$.
  The last two equalities can be used to replace $f$ and $m$ in
  Euler's formula, and  obtain $x=k-3$.
\end{proof}

 For an edge $e$ of $G$, we define
the \emph{middle walk from $e$} as the sequence of edges $(e_i)_{i\geq
  0}$ obtained by the following method.  Let $e_0=e$. If the edge
$e_i$ is entering a vertex $v$, then the edge $e_{i+1}$ is chosen in
the three edges leaving $v$ as the edge in the ``middle'' coming from
$e_i$ (i.e. $v$ should have exactly one edge leaving on the left of
the path consisting of the two edges $e_i,e_{i+1}$ and thus exactly
one edge leaving on the right).

A directed cycle $M$ of $G$ is said to be a \emph{middle cycle} if
every vertex $v$ of $M$ has exactly one edge leaving $v$ on the left
of $M$ (and thus exactly one edge leaving $v$ on the right of $M$).
Note that if $M$ is a middle cycle, and $e$ is an edge of $M$, then
the middle walk from $e$ consists of the sequence of edges of $M$
repeated periodically.  Note that a middle cycle is not contractible,
otherwise in $G^\infty$ it forms a contradiction to
Lemma~\ref{lem:kmoins3}. Similar arguments lead to:

   \begin{lemma}
\label{lem:middleequal}
Two middle cycles that are  weakly homologous are either
vertex-disjoint or equal.
   \end{lemma}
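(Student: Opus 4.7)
The plan is to argue by contradiction, lifting both middle cycles to the universal cover $G^\infty$ and combining the middle property with Lemma~\ref{lem:kmoins3}. Suppose $M_1\neq M_2$ are weakly homologous middle cycles sharing a vertex $v$. Lift $v$ to $\tilde v\in G^\infty$ and, after applying a suitable deck transformation to one of them, obtain lifts $\tilde M_1,\tilde M_2$ of $M_1,M_2$ both passing through $\tilde v$. Each $\tilde M_i$ is a bi-infinite simple path invariant under translation by a nonzero vector $\vec t_i$, and weak homology gives $\vec t_2=\pm\vec t_1$. Writing $\vec t=\vec t_1$, periodicity shows that $\tilde v+\vec t$ lies on both $\tilde M_1$ and $\tilde M_2$ (in the opposite-translation case by translating $\tilde M_2$ by $-\vec t_2=\vec t$).

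A key preliminary observation is that $\tilde M_1$ and $\tilde M_2$ share no directed edge. Indeed, since the middle walk is forward-deterministic, a single shared edge would force $\tilde M_1$ and $\tilde M_2$ to agree on every subsequent edge; combined with periodicity this would give $\tilde M_1=\tilde M_2$ when $\vec t_2=\vec t_1$, and would force $\vec t=0$ (impossible for a non-contractible cycle) when $\vec t_2=-\vec t_1$. Consequently, let $\tilde w$ be the first vertex on $\tilde M_1$ strictly after $\tilde v$ that also lies on $\tilde M_2$; such a $\tilde w$ exists because $\tilde v+\vec t$ is a candidate. The $\tilde M_1$-subpath $P_1$ from $\tilde v$ to $\tilde w$ and the $\tilde M_2$-subpath $P_2$ joining $\tilde v$ and $\tilde w$ (traversed in whichever direction along $\tilde M_2$ connects them) have no common interior vertex, for any such vertex would lie on $\tilde M_2$ and in the interior of $P_1$, contradicting the choice of $\tilde w$. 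Hence $P_1\cup P_2$ is a simple cycle $C$ of length $k=k_1+k_2$ in $G^\infty$, bounding a region $R$, where $k_i$ is the length of $P_i$.

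I would then count the edges leaving $C$ directed into the interior of $R$. At each of the $k_1-1$ internal vertices of $P_1$, the middle property says that one of the three outgoing edges runs along $\tilde M_1$ (hence along $C$), while the other two lie strictly on the two sides of $\tilde M_1$; exactly one of those two points into $R$. The same applies at each of the $k_2-1$ internal vertices of $P_2$, where sides of $\tilde M_2$ coincide locally with sides of $C$. Hence the number of edges leaving $C$ into the interior of $R$ is at least $(k_1-1)+(k_2-1)=k-2$. But Lemma~\ref{lem:kmoins3} states that this number equals exactly $k-3$, which is absurd.

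The main technical obstacle is extracting a simple bounding cycle in $G^\infty$ given that the intersection pattern of $\tilde M_1$ and $\tilde M_2$ may be intricate, especially in the opposite-translation case where one of $P_1,P_2$ must be traversed against its underlying middle orientation. The no-shared-edge observation together with the choice of $\tilde w$ as the first forward common vertex yields such a cycle uniformly in both translation regimes, sidestepping any separate planarity analysis and letting the final count be the same one-line contradiction.
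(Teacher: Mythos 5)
Your proof is correct and fleshes out the argument the paper only alludes to: the paper states this lemma with ``Similar arguments lead to:'' (pointing implicitly at Lemma~\ref{lem:kmoins3} and the preceding non-contractibility observation) and supplies no explicit proof. Lifting to $G^\infty$, using the forward-determinism of middle walks to rule out shared directed edges, taking the first common vertex $\tilde w$ after $\tilde v$ along $\tilde M_1$ to obtain a simple cycle $C=P_1\cup P_2$, and then observing that each of the $(k_1-1)+(k_2-1)=k-2$ interior vertices of $C$ contributes one edge leaving $C$ into its interior, contradicting the exact count of $k-3$ from Lemma~\ref{lem:kmoins3}, is precisely the kind of counting argument the paper has in mind (compare the proof of Lemma~\ref{lem:middlecycle}, which builds a simple cycle in $G^\infty$ from two walks in the same way). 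The one place worth being careful is your translation argument for $\tilde v+\vec t$ lying on both lifts and for ruling out a shared edge in the $\vec t_2=-\vec t_1$ case; your reasoning there is sound since a shared forward ray would have to advance simultaneously by $\vec t_1$ and by $-\vec t_1$ per period, forcing $\vec t_1=0$, which contradicts non-contractibility.
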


We have the following useful lemma concerning
middle walks and middle cycles:

\begin{lemma}
\label{lem:middlecycle}
A middle walk always ends on a middle cycle.
\end{lemma}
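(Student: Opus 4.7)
\emph{Proof plan.} The plan combines the forward determinism of the middle walk with a cycle-counting argument in the universal cover $G^\infty$ based on Lemma~\ref{lem:kmoins3}.

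First, since every edge has a unique middle successor and $G$ has finitely many edges, the sequence $(e_i)_{i\geq 0}$ is eventually periodic. Let $T$ be the smallest period and $N$ the pre-period; then $e_{k+T}=e_k$ for all $k\geq N$ and the edges $e_N,e_{N+1},\ldots,e_{N+T-1}$ are pairwise distinct. Denote the resulting closed walk in $G$ by $M$; the goal is to show that $M$ is a middle cycle.

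The crucial step is to lift the walk to $G^\infty$, with vertex sequence $(\tilde v_i)$, and to prove that the lifted walk visits each vertex at most once. Suppose toward a contradiction that $\tilde v_p=\tilde v_q$ with $p<q$ chosen so that $q-p$ is minimum. Then $\tilde v_p,\ldots,\tilde v_{q-1}$ are pairwise distinct and the edges $\tilde e_{p+1},\ldots,\tilde e_q$ trace a simple directed cycle $\tilde C$ of length $k=q-p$ in the plane $G^\infty$. At every vertex $\tilde v_\ell$ of $\tilde C$, the consecutive pair of edges $(\tilde e_\ell,\tilde e_{\ell+1})$ along $\tilde C$ is a middle-rule pair, so exactly one of the two remaining outgoing edges at $\tilde v_\ell$ falls on each side of $\tilde C$. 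Orienting $\tilde C$ so that its bounded interior lies on the left, this yields exactly $k$ outgoing edges of $\tilde C$ directed into the interior, contradicting Lemma~\ref{lem:kmoins3} which permits only $k-3$.

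With vertex-injectivity of the lifted walk in hand, I would return to $G$: the period $T$ induces a nontrivial deck translation $\tau$ with $\tilde v_{k+T}=\tau\tilde v_k$ for $k\geq N$, and the injectivity together with this translational periodicity forces the vertices $v_N,\ldots,v_{N+T-1}$ to be pairwise distinct in $G$ as well (any coincidence $v_a=v_b$ with $N\leq a<b<N+T$ would place $\tilde v_b$ in a deck-orbit of $\tilde v_a$ not lying in $\langle\tau\rangle$, producing lifts within the walk that must collide). Hence $M$ is a simple directed cycle, the middle property at each of its vertices is automatic from construction (consecutive edges of $M$ are middle-rule pairs), and forward determinism then keeps the walk on $M$ from step $N$ onward, so the walk ends on the middle cycle $M$. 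The main obstacle is the cycle-counting step in $G^\infty$: one must identify the correct simple cycle $\tilde C$ from a minimal repetition and match the count of interior-outgoing edges produced by the middle rule against the count given by Lemma~\ref{lem:kmoins3}; the final deck-theoretic descent back to $G$ is comparatively routine but also requires some care.
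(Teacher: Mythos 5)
Your overall strategy is the one the paper uses: lift the middle walk to $G^\infty$ and play the middle rule off against Lemma~\ref{lem:kmoins3}. The paper, however, constructs the planar cycle in $G^\infty$ differently, and that difference is exactly what your proof is missing.

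A small imprecision first. At the exceptional vertex $\tilde v_p=\tilde v_q$ of your cycle $\tilde C$, the cycle enters via $\tilde e_q$ and leaves via $\tilde e_{p+1}$; the pair $(\tilde e_q,\tilde e_{p+1})$ is \emph{not} a middle-rule pair (the middle-rule pairs at that vertex are $(\tilde e_p,\tilde e_{p+1})$ and $(\tilde e_q,\tilde e_{q+1})$), so you cannot assert exactly one interior-outgoing edge there. The contradiction survives --- the other $k-1$ vertices each contribute exactly one interior-outgoing edge, and $k-1>k-3$ --- but the count is not ``exactly $k$'' as you wrote.

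The genuine gap is the descent back to $G$. Vertex-injectivity of the single lift $\tilde W$ does not imply that the period $M=e_N,\ldots,e_{N+T-1}$ is a simple cycle in $G$. Suppose $v_a=v_b$ with $N\le a<b<N+T$ and split the period at that vertex into two closed subwalks $A=e_{a+1},\ldots,e_b$ and $B=e_{b+1},\ldots,e_{a+T}$, with deck translations $\alpha$ and $\beta$ (so $\tau=\alpha+\beta$). The lift $\tilde W$ visits lifts of $v$ at the positions $n(\alpha+\beta)$ and $(n+1)\alpha+n\beta$ for $n\ge 0$; for linearly independent $\alpha,\beta\in\mathbb{Z}^2$ these are all distinct, so a simple lift is perfectly consistent with a non-simple period. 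Your remark that $\sigma\notin\langle\tau\rangle$ (where $\tilde v_b=\sigma\tilde v_a$) is correct, but two distinct $\tau$-periodic lifts $\tilde W$ and $\sigma\tilde W$ in the plane can cross at a single vertex without any collision elsewhere, so ``lifts within the walk that must collide'' does not follow from it.

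The paper closes exactly this gap by running the same count on a \emph{different} cycle: it lifts both concatenations $AB$ and $BA$ from a common base lift $v_0$ of $v$; since the deck group of the torus is abelian they reach the same lift $v_1$, and truncating them at their first common vertex $u$ yields a simple cycle in $G^\infty$ with at most two exceptional vertices ($v_0$ and $u$), to which Lemma~\ref{lem:kmoins3} applies. This uses two genuinely distinct lifts of the period rather than the self-intersection of a single lift; in the problematic case, the cycle your Step 2 looks for simply does not exist. Keep your Step 2 (with the small fix), but Step 3 should be replaced by this $AB$/$BA$ construction.
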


\begin{proof}
  Start from any edge $e_0$ of $G$ and consider the middle walk
  $W=(e_i)_{i\geq 0}$ from $e_0$.  The graph $G$ has a finite number
  of edges, so some edges will be used several times in $W$.  Consider a
  minimal subsequence $e_k, \ldots, e_\ell$ such that no edge appears
  twice and $e_k=e_{\ell+1}$.  Thus $W$ ends periodically on the
  sequence of edges $e_k, \ldots, e_\ell$. We prove that $e_k, \ldots,
  e_\ell$ is a middle cycle.

  Assume that $k=0$ for simplicity. Thus $e_0, \ldots, e_\ell$ is an
  Eulerian subgraph $E$. If $E$ is a cycle then it is a middle cycle
  and we are done. So we can consider that it visits some vertices
  several times.  Let $e_i, e_j$, with $0\leq i < j\leq \ell$, such
  that $e_i, e_j$ are both leaving the same vertex $v$. By definition
  of $\ell$, we have $e_i\neq e_j$. Let $A$ and
  $B$ be the two closed walks  $e_i,\ldots, e_{j-1}$ and $e_{j},\ldots, e_{i-1}$,
  respectively, where indices are modulo $\ell+1$.

  Consider a copy $v_0$ of $v$ in the universal cover $G^\infty$. 
  Define the walk $P$ obtained by starting at $v_0$ following the
  edges of $G^\infty$ corresponding to the edges of $A$, and then to the
  edges of $B$. Similarly, define the walk $Q$ obtained by starting at
  $v_0$ following the edges of $B$, and then the edges of $A$.  The two
  walks $P$ and $Q$ both start at $v_0$ and both end at the same
  vertex $v_1$ that is a copy of $v$. Note that $v_1$ and $v_0$
  may coincide. All the vertices that
  are visited on the interior of $P$ and $Q$ have exactly one edge
  leaving on the left and exactly one edge leaving on the right.  The
  two walks $P$ and $Q$ may intersect before they end at $v_1$ thus we
  define $P'$ and $Q'$ has the subwalks of $P$ and $Q$ starting at
  $v_0$, ending on the same vertex $u$ (possibly distinct from $v_1$ or
  not) and such that $P'$ and $Q'$ are not intersecting on their
  interior vertices. Then the union of $P'$ and $Q'$ forms a cycle $C$
  of $G^\infty$. All the vertices of $C$ except possibly $v_0$ and $u$,
  have exactly one edge leaving $C$ and directed towards the interior
  of $C$, a contradiction to Lemma~\ref{lem:kmoins3}.
\end{proof}

A consequence of Lemma~\ref{lem:middlecycle} is that any 3-orientation
of a toroidal triangulation has a middle cycle.  The 3-orientation of
the toroidal triangulation on the left of Figure~\ref{fig:orientation}
is an example where there is a unique middle cycle (the diagonal).  We
show in Lemma~\ref{lem:2middlecycle} that for any toroidal
triangulation there exists a 3-orientation with several middle cycles.

Note that a middle cycle $C$ satisfies $\gamma(C)=0$ (when $C$ is
considered in any direction). So, by Lemma~\ref{lem:middlecycle},
there is always a cycle with value $\gamma$ equal to $0$ in a
3-orientation of a toroidal triangulation.

The orientation of the toroidal triangulation on the left of
Figure~\ref{fig:orientation} is an example of a 3-orientation of a
toroidal triangulation where some cycles have value $\gamma$ not equal
to $0$.  The value of $\gamma$ for the three loops is $2, 0$ and $-2$.

Two non-contractible not weakly homologous cycles generate the
homology of the torus with respect to $\mathbb{Q}$. That is if
$B_1,B_2$ are non contractible cycles that are  not weakly
homologous, then for any cycle $C$ there exists
$k,k_1,k_2 \in \mathbb Z$, $k\neq 0$, such that $kC$ is homologous to
$k_1 B_1 + k B_{2}$.

\begin{lemma}
  \label{lem:deltagammatri} In a $3$-orientation, 
consider $B_1,B_2,C$ are non contractible cycles, such that
$B_1,B_2$ are not weakly homologous. Let
$k,k_1,k_2 \in \mathbb Z$, $k\neq 0$ such that $kC$ is homologous to
$k_1 B_1 + k_2B_{2}$. Then
$k\gamma(C)=k_1\,\gamma (B_1) + k_2\,\gamma (B_2)$.
\end{lemma}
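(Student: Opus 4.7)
The plan is to pass to the primal--dual completion and use Lemma~\ref{lem:gammaequaldelta} to rewrite each $\gamma$-value in terms of $\delta$ on flows of $\pdc{G}^*$. Exploiting linearity of $\delta$, together with the explicit values that $\delta$ takes on facial walks of $\pdc{G}^*$ in the $\bmod_3$-orientation of $\pdc{G}$ induced by the $3$-orientation of $G$, we will reduce the claim to a combinatorial identity. In this induced orientation, each primal- or dual-vertex of $\pdc{G}$ has outdegree exactly $3$, so the counterclockwise facial walk of $\pdc{G}^*$ around it has $\delta = +3$; each edge-vertex has outdegree $1$ (equivalently, three incoming out-edges), so the surrounding facial walk has $\delta = -3$.

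Set
\[
\Phi \ :=\ k\bigl(W_L(C)+W_R(C)\bigr)\ -\ k_1\bigl(W_L(B_1)+W_R(B_1)\bigr)\ -\ k_2\bigl(W_L(B_2)+W_R(B_2)\bigr),
\]
a flow of $\pdc{G}^*$. Since $[W_L(X)] = [W_R(X)] = [X]$ in $H_1(T^2)$ for any cycle $X$ of $G$, the homology hypothesis $k[C] = k_1[B_1] + k_2[B_2]$ gives $[\Phi] = 0$, so $\Phi$ decomposes as $\sum_F \lambda_F\,F$ over counterclockwise facial walks of $\pdc{G}^*$. Using Lemma~\ref{lem:gammaequaldelta} and linearity of $\delta$, this yields
\[
k\gamma(C)-k_1\gamma(B_1)-k_2\gamma(B_2)\ =\ \delta(\Phi)\ =\ 3\Bigl(\sum_{F\ \text{prim./dual}}\lambda_F\ -\ \sum_{F\ \text{edge}}\lambda_F\Bigr).
\]
The right-hand side is well-defined on $\Phi$, since the only ambiguity in the facial decomposition is a global constant shift $\lambda_F \mapsto \lambda_F + c$, which by the torus Euler identity $n + f = m$ changes both partial sums by the same amount and therefore preserves their difference.

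The hard part will be showing that this well-defined difference is exactly $0$ for our particular $\Phi$, rather than merely $\equiv 0 \pmod{3}$ (which would already follow from Lemma~\ref{lem:basedelta}). The plan is a direct combinatorial argument: $\pdc{X}$ is a bipartite cycle in $\pdc{G}$ alternating primal- and edge-vertices in equal numbers, so a natural facial decomposition of $W_L(X) + W_R(X)$ pairs primal/dual-face coefficients with edge-face coefficients coming from the same positions along $\pdc{X}$. Setting this up for each of $C, B_1, B_2$ and then taking the $\mathbb{Z}$-linear combination that defines $\Phi$ preserves the pairing, forcing the two partial sums to be equal and hence $\delta(\Phi) = 0$, which gives the claimed identity.
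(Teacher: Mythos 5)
Your reduction is a genuinely different route from the paper's. The paper proves this lemma directly in the universal cover: it builds the path $B$ from $k_1$ copies of $B_1$ and $k_2$ copies of $B_2$, replicates it and $C$ into infinite paths, cuts out a fundamental region $R$, and applies an Euler's-formula count (in the spirit of Lemma~\ref{lem:kmoins3}) to the two sub-regions cut out by $C'$. You instead stay on the surface, pass to $\pdc{G}^*$, form the flow $\Phi$, and rewrite the target identity as $\delta(\Phi)=0$ via Lemma~\ref{lem:gammaequaldelta}, the explicit values $\delta(F)=3$ (primal/dual) and $\delta(F)=-3$ (edge), and the torus Euler identity $n+f=m$ for well-definedness. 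Up to this point the argument is correct and is a clean algebraic reformulation.

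However, there is a genuine gap at exactly the place you flag as ``the hard part.'' The sentence ``a natural facial decomposition of $W_L(X)+W_R(X)$ pairs primal/dual-face coefficients with edge-face coefficients coming from the same positions along $\pdc{X}$'' does not make sense as stated: $W_L(X)+W_R(X)$ is homologous to $2X$, hence \emph{not} $0$-homologous when $X$ is non-contractible, so it has no facial decomposition at all. The flow that does decompose nicely is the \emph{difference} $W_L(X)-W_R(X)=\sum_{w\in V(\pdc{X})}\phi(F_w)$, and since $\pdc{X}$ alternates primal- and edge-vertices in equal number this gives $\delta(W_L(X))=\delta(W_R(X))$, i.e.\ $\gamma(X)=2\,\delta(W_L(X))$. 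That is a true and useful fact, but it only handles the coefficients $\lambda_F$ for faces $F$ lying \emph{on} $\pdc{C},\pdc{B_1},\pdc{B_2}$; it says nothing about the coefficients in the regions between the cycles. Those interior coefficients also contribute to $\sum_{\text{prim/dual}}\lambda_F-\sum_{\text{edge}}\lambda_F$, and showing that their contribution vanishes is precisely where an Euler's-formula argument on the sub-regions of the torus is needed --- it is not a consequence of a ``pairing along $\pdc{X}$.'' (For a sanity check: if $C$ and $B_1$ are disjoint homologous cycles, $k=k_1=1$, $k_2=0$, the interior coefficients sum to $n_I+f_I-m_I$, the Euler characteristic of the open annulus between them, which happens to be $0$; when the cycles cross, the level sets of $\lambda$ can have contractible components and the claim requires more care.) So your proposal reduces the lemma to $\delta(\Phi)=0$, which is equivalent to the lemma itself, and the argument offered for that last step is both based on a nonexistent decomposition and missing the region-by-region count that actually closes it. Replacing the final paragraph with a careful Euler-characteristic computation over the $\lambda$-level sets (or simply importing the paper's universal-cover counting) would repair the proof.
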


\begin{proof} 
  Let $v$ be a vertex in the intersection of $B_1$ and $B_2$. Consider
  a drawing of $G^\infty$ obtained by replicating a flat
  representation of $G$ to tile the plane.  Let $v_0$ be a copy of
  $v$. Consider the path $B$ starting at $v_0$ and following $k_1$
  times the edges corresponding to $B_1$ and then $k_2$ times the
  edges corresponding to $B_2$ (we are going backwards if $k_i$ is
  negative). This path ends at a copy $v_1$ of $v$.  Since $C$ is
  non-contractible we have $k_1$ or $k_2$ not equal to $0$ and thus
  $v_1$ is distinct from $v_0$. Let $B^\infty$ be the infinite path
  obtained by replicating $B$ (forwards and backwards) from
  $v_0$. Since $kC$ is homologous to $k_1 B_1 + k_{2}B_{2}$ we can
  find an infinite path $C^\infty$, that corresponds to copies of $C$
  replicated, that does not intersect $B^\infty$ and situated on the
  right side of $B$. Now we can find a
  copy $B'^\infty$ of $B^\infty$, such that $C^\infty$ lies between
  $B^\infty$ and $B'^\infty$ without intersecting them. Choose a copy
  $v'_0$ of $v$ on $B'^\infty$. Let $B'$ be the copy of $B$ starting
  at $v'_0$ and ending at a vertex $v'_1$.  Let $R$ be the region
  bounded by $B,B'$ and the segments $[v_0,v'_0], [v_1,v_1']$.

  Consider the toroidal triangulation $H$ whose representation is $R$
  (obtained by identifying $B,B'$ and $[v_0,v'_0],[v_1,v_1']$). Note
  that $H$ is just made of several copies of $G$.  Let $C'$ be the
  subpath of $C^\infty$ intersecting the region $R$ corresponding to
  exactly one copy of $kC$.  Let $R_1$ be the subregion of $R$ bounded
  by $B$ and $C'$ and $R_2$ the subregion of $R$ bounded by $B'$ and
  $C'$. By some counting arguments (Euler's formula + triangulation +
  3-orientation) in the region $R_1$ and $R_2$, we obtain that
  $\gamma(C')=\gamma(B)$ and thus
  $k\gamma(C)=k_1 \gamma (B_1) + k_{2} \gamma (B_{2})$.
\end{proof}

By Lemma~\ref{lem:middlecycle}, a middle walk $W$ always ends on a
middle cycle. Let us denote by $M_W$ this middle cycle and $P_W$ the
part of $W$ before $M_W$. Note that $P_W$ may be empty. We say that a
middle walk is leaving a cycle $C$ if its starting edge is incident to
$C$ and leaving $C$.

Let us now prove the main lemma of this section.

\begin{lemma}
\label{lem:2middlecycle}
$G$ admits a 3-orientation with two middle cycles that are not
weakly homologous. 
\end{lemma}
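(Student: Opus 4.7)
The plan is to start with an arbitrary 3-orientation $D$ of $G$. Its existence follows from Theorem~\ref{th:barat}; the extension to non-simple triangulations (contractible loops and double edges being disallowed) is standard, e.g.\ by contracting a multi-edge, applying Theorem~\ref{th:barat} on the resulting simple triangulation, and lifting the orientation back. By Lemma~\ref{lem:middlecycle}, $D$ carries at least one middle cycle $M$, and by the observation recorded just before Lemma~\ref{lem:middleequal} every middle cycle is non-contractible. If $D$ already has a middle cycle not weakly homologous to $M$, there is nothing to do; so assume all middle cycles of $D$ are weakly homologous to a common class $\alpha = [M]$.

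Lift $D$ to the universal cover $G^\infty$. By Lemma~\ref{lem:middleequal} the middle-cycle lifts form a family of pairwise vertex-disjoint bi-infinite directed lines, all in direction $\alpha$. Pick two consecutive such lifts $L_1,L_2$ bounding an open strip $S$ that contains no middle-cycle lift in its interior. At every vertex of $L_1$ there is, by the middle-cycle property, exactly one outgoing edge leaving $L_1$ into $S$. I would examine the middle walks starting at these edges: by Lemma~\ref{lem:middlecycle} each such walk ends on a middle cycle, and by the choice of $S$ this middle cycle must be (a lift of) $L_1$ or $L_2$. A counting argument in bounded subregions of $S$, patterned on Lemma~\ref{lem:kmoins3}, shows that it is impossible for \emph{all} these walks to loop back to $L_1$: otherwise the associated flow forces too many boundary edges on certain finite subregions of $S$. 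Hence at least one middle walk $W$ crosses $S$ from $L_1$ to $L_2$.

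It remains to convert this topological data into a concrete modification of $D$ that produces a middle cycle in a class non-parallel to $\alpha$. The idea is to combine $W$ with well-chosen arcs of $L_1$ and $L_2$ to exhibit a directed cycle $C$ of transverse homology (in $D$, or in an auxiliary 3-orientation obtained from $D$ by first reversing one of the $L_i$), and then reverse $C$. Reversal of a directed cycle preserves outdegrees, so the resulting $D'$ is again a 3-orientation; moreover $C$ can be chosen disjoint from some horizontal middle cycle of $D$ (for instance a translate of $M$ far from $S$), so that $D'$ still contains a middle cycle of class $\alpha$. Applying Lemma~\ref{lem:middlecycle} to $D'$ yields another middle cycle, and one checks from the construction of $C$ that this new middle cycle cannot be weakly homologous to $\alpha$ (its homology class has a nonzero transverse component inherited from $W$). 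The two middle cycles in $D'$ are therefore not weakly homologous, as required.

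The main obstacle is the last step: the naive closed walk $W\cup\text{arc}(L_1)\cup\text{arc}(L_2)$ is not a directed cycle of $D$, because $L_1$ and $L_2$ are coherently oriented whereas closing the walk requires traversing one of them backwards. Overcoming this requires a careful combinatorial combination, likely a preliminary reversal of one $L_i$ (which is itself a directed cycle and hence reversible within the class of 3-orientations), so that $W$ together with the arcs can be realized as a genuine directed cycle of a 3-orientation. Managing this bookkeeping while ensuring that a middle cycle in direction $\alpha$ survives is the technical heart of the proof.
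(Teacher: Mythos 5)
Your proposal adopts a genuinely different strategy from the paper, and it contains a real gap that you yourself flag but do not resolve.

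On the difference of approach: the paper argues by contradiction using the invariant $\gamma$. Assuming all middle cycles are weakly homologous, it first establishes (via an auxiliary claim proved by a maximality argument on the length of $P_W$) that some $3$-orientation has a middle walk $W$ leaving a middle cycle $M$ with $M_W = M$. Reversing $M$ then produces a $3$-orientation in which one can exhibit two non-contractible cycles with $\gamma = 0$ that are not weakly homologous, forcing $\gamma \equiv 0$ on all non-contractible cycles by Lemma~\ref{lem:deltagammatri}; but the same orientation also contains a cycle $C'$, built from the new middle walk and a piece of $M_{W'}$, with $\gamma(C') = \pm 2$, a contradiction. Your plan instead tries to produce the transverse middle cycle directly by a cycle-reversal construction in a strip of the universal cover. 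That is a reasonable geometric picture, but it bypasses the $\gamma$ machinery that makes the paper's contradiction land.

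There are two concrete problems. First, the counting claim that not every middle walk leaving $L_1$ into the strip $S$ can return to $L_1$ is not established. A middle walk could re-enter $L_1$ at a translate of its starting vertex, having drifted in the $\alpha$ direction; bounding this behaviour is not an immediate application of Lemma~\ref{lem:kmoins3}, and you do not say which finite subregion you would apply it to or how to control the boundary contribution on the two infinite sides of the strip. (The paper handles the analogous issue through the sub-claim that a middle cycle cannot intersect the interior of a middle walk leaving it, which relies on Lemma~\ref{lem:middleequal} and a universal-cover lift — not on a counting argument.) Second, and more seriously, the final conversion — closing $W$ up with arcs of $L_1$ and $L_2$ into a directed cycle $C$, reversing $C$, and deducing that the new $3$-orientation has a middle cycle of transverse homology — is precisely the part you acknowledge as unresolved. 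The obstruction you name (that $L_1, L_2$ are coherently oriented, so one of them must be traversed backwards) is genuine, and reversing $L_i$ first is not obviously compatible with also keeping a middle cycle in class $\alpha$: reversing a middle cycle destroys the middle property of that cycle, and there is no a priori reason a middle cycle in class $\alpha$ survives elsewhere. Moreover, even granting the reversal, nothing in the construction forces the middle cycle $M_{W'}$ produced by Lemma~\ref{lem:middlecycle} in $D'$ to lie in a transverse class; middle cycles are found by iterating the middle rule, not by prescription. This is not bookkeeping to be "managed": it is the step that the paper's $\gamma$-based contradiction is designed to replace.
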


\begin{proof}
  Towards a contradiction, suppose that there is no 3-orientation of
  $G$ with two middle cycles that are not weakly homologous. We first
  prove the following claim:

  \begin{claim}
\label{cl:middleequal}
There exists a 3-orientation of $G$ with a middle cycle $M$, a middle
walk $W$ leaving $M$ and $M_W=M$.
  \end{claim}

 \begin{proofclaim}
   Towards a contradiction, suppose that there is no 3-orientation of $G$ with
   a middle cycle $M$, a middle walk $W$ leaving $M$ and $M_W=M$. We
   first prove the following:

\begin{sclaim}
\label{claim:middleinterior}
Any 3-orientation of $G$, middle cycle $M$ and middle walk $W$ leaving
$M$ are such that $M$ does not intersect the interior of $W$.
\end{sclaim}

\begin{proofsclaim}
  Towards a contradiction, suppose that $M$ intersects the interior of
  $W$.  By assumption, cycles $M_W$ and $M$ are weakly homologous and
  $M_W\neq M$. Thus by Lemma~\ref{lem:middleequal}, they are
  vertex-disjoint.  So $M$ intersects the interior of $P_W$.  Assume
  by symmetry that $P_W$ is leaving $M$ on its left side.  If $P_W$ is
  entering $M$ from its left side, in $G^\infty$, the edges of $P_W$
  plus $M$ form a cycle contradicting Lemma~\ref{lem:kmoins3}. So
  $P_W$ is entering $M$ from its right side. Hence $M_W$ intersects
  the interior of $P_W$ on a vertex $v$. Let $e$ be the edge of $P_W$
  leaving $v$. Then the middle cycle $M_W$ and the middle walk $W'$
  started on $e$ satisfies $M_{W'}=M_W$, contradicting the
  hypothesis. So $M$ does not intersect the interior of $W$.
\end{proofsclaim}

Consider a 3-orientation, a middle cycle $M$ and a middle walk $W$
leaving $M$ such that the length of $P_W$ is maximized.  By assumption
$M_{W}$ is weakly homologous to $M$. Assume by symmetry that $P_W$ is
leaving $M$ on its left side. By assumption $M_W\neq M$.
(\ref{claim:middleinterior}) implies that $M$ does not intersect the
interior of $W$. Let $v$ (resp. $e_0$) be the starting vertex
(resp. edge) of $W$.  Consider now the 3-orientation obtained by
reversing $M_W$.  Consider the middle walk $W'$ started at $e_0$.
Walk $W'$ follows $P_W$, then arrives on $M_W$ and crosses it (since
$M_W$ has been reversed).  (\ref{claim:middleinterior}) implies that
$M$ does not intersect the interior of $W'$.  Similarly,
(\ref{claim:middleinterior}) applied to $M_W$ and $W'\setminus P_W$
(the walk obtained from $W'$ by removing the first edges corresponding
to $P_W$), implies that $M_W$ does not intersect the interior of
$W'\setminus P_W$. Thus, $M_{W'}$ is weakly homologous to $M_W$
and $M_{W'}$ is in the interior of the region between $M$ and $M_W$
on the right of $M$. Thus $P_{W'}$ strictly contains $P_W$
and is thus longer, a contradiction.
  \end{proofclaim}

  By Claim~\ref{cl:middleequal}, consider a 3-orientation of $G$ with
  a middle cycle $M$ and a middle walk $W$ leaving $M$ such that
  $M_W=M$. Note that $W$ is leaving $M$ from one side and entering it
  in the other side, otherwise $W$ and $M$ contradicts
  Lemma~\ref{lem:kmoins3}.  Let $e_0$ be the starting edge of $W$. Let
  $v,u$ be the starting and ending point of $P_W$, respectively, where
  $u=v$ may occur.  Consider the 3-orientation obtained by reversing
  $M$. Let $Q$ be the directed path from $u$ to $v$ along $M$ ($Q$ is
  empty if $u=v$). Let $C$ be the directed cycle $P_W\cup Q$. We
  compute the value $\gamma$ of $C$. If $u\neq v$, then $C$ is almost
  everywhere a middle cycle, except at $u$ and $v$. At $u$, it has two
  edges leaving on its right side, and at $v$ it has two edges leaving
  on its left side. So we have $\gamma(C)=0$. If $u=v$, then $C$ is a
  middle cycle and $\gamma(C)=0$.  Thus, in any case
  $\gamma(C)=0$. Note that furthermore $\gamma(M)=0$ holds.  The two
  cycles $M,C$ are non contractible and not weakly homologous
so any non-contractible
  cycle of $G$ has $\gamma$ equal to zero by
  Lemma~\ref{lem:deltagammatri}.

  Consider the middle walk $W'$ from $e_0$. By assumption $M_{W'}$ is
  weakly homologous to $M$.  The beginning $P_{W'}$ is the same as for
  $P_W$. As we have reversed the edges of $M$, when arriving on $u$,
  path $P_{W'}$ crosses $M$ and continues until reaching
  $M_{W'}$. Thus $M_{W'}$ intersects the interior of $P_{W'}$ at a
  vertex $v'$. Let $u'$ be the ending point of $P_{W'}$ (note that we
  may have $u'=v'$). Let $P'$ be the non-empty subpath of $P_{W'}$
  from $v'$ to $u'$. Let $Q'$ be the directed path from $u'$ to $v'$
  along $M_{W'}$ ($Q'$ is empty if $u'=v'$). Let $C'$ be the
  non-contractible directed cycle $P'\cup Q'$. We compute
  $\gamma(C')$. The cycle $C'$ is almost everywhere a middle cycle,
  except at $v'$. At $v'$, it has two edges leaving on its left or
  right side, depending on $M_{W'}$ crossing $P_{W'}$ from its left or
  right side. Thus, we have $\gamma(C')=\pm 2$, a contradiction.
\end{proof}

By Lemma~\ref{lem:2middlecycle}, for any toroidal triangulation, there
exists a 3-orientation with two middle cycles that are not weakly
homologous. By Lemma~\ref{lem:deltagammatri}, any non-contractible
cycle of $G$ has value $\gamma$ equal to zero. Note that $\gamma(C)=0$
for any non-contractible cycle $C$ does not necessarily imply the
existence of two middle cycle that are not weakly homologous.  The
3-orientation of the toroidal triangulation of Figure~\ref{fig:gamma0}
is an example where $\gamma(C)=0$ for any non-contractible cycle $C$
but all the middle cycle are weakly homologous. The colors should help
the reader to compute all the middle cycles by starting from any edge
and following the colors. One can see that all the middle cycles are
vertical (up or down) and that the horizontal (non-directed) cycle has
value $\gamma$ equal to $0$ so we have $\gamma$ equal to $0$
everywhere.  Of course, the colors also show the underlying Schnyder
wood.

\begin{figure}[!h]
\center
\includegraphics[scale=0.5]{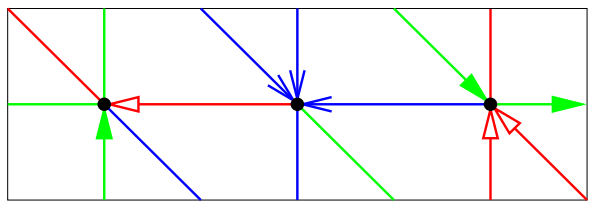} 
\caption{A 3-orientation of a toroidal triangulation with
  $\gamma(C)=0$ for any non-contractible cycle $C$. All the middle
  cycle are weakly homologous.}
\label{fig:gamma0}
\end{figure}

By combining Lemma~\ref{lem:2middlecycle} and
Theorem~\ref{th:characterizationgamma}, we obtain the following:

\begin{theorem}
\label{th:triagnulationtore}
A toroidal triangulation admits a 1-{\sc edge}, 1-{\sc vertex}, 1-{\sc
  face} angle labeling and thus a  Schnyder wood.
\end{theorem}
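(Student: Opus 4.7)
The plan is to derive the theorem by combining the existence of a well-structured 3-orientation from Lemma~\ref{lem:2middlecycle} with the characterization of Schnyder orientations via homology from Theorem~\ref{th:characterizationgamma}. The starting input is Theorem~\ref{th:barat} (extended to non-simple triangulations as remarked just after its statement) together with Lemma~\ref{lem:2middlecycle}, which produces a 3-orientation of $G$ admitting two middle cycles $M_1, M_2$ that are not weakly homologous. Directly from the definition of a middle cycle one has $\gamma(M_1) = \gamma(M_2) = 0$, since at every vertex of a middle cycle exactly one outgoing edge leaves on each side.

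The second step is to upgrade this to $\gamma \equiv 0 \ (\bmod 3)$ on every non-contractible cycle. Because $M_1$ and $M_2$ are not weakly homologous, they generate the homology of the torus over $\mathbb{Q}$, so for any cycle $B$ there exist integers $k \neq 0, k_1, k_2$ with $kB$ homologous to $k_1 M_1 + k_2 M_2$, and Lemma~\ref{lem:deltagammatri} then yields $k\, \gamma(B) = k_1\, \gamma(M_1) + k_2\, \gamma(M_2) = 0$, so $\gamma(B) = 0$. In particular this holds for any chosen basis $\{B_1, B_2\}$ of the homology of $G$.

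The third step is to transport the 3-orientation of $G$ into an orientation of $\pdc{G}$ via the local rule depicted in Figure~\ref{fig:trimap}. In the resulting orientation, every primal-vertex has outdegree~$3$, every edge-vertex has outdegree~$1$, and every dual-vertex has outdegree~$3$ (the latter because $G$ is a triangulation), so that the outdegree congruences defining a $\bmod_3$-orientation of $\pdc{G}$ hold at all three types of vertices. Using the simplified formula for $\gamma$ valid for triangulations (mentioned right after Theorem~\ref{th:characterizationgamma}), the basis cycles still satisfy $\gamma(B_i) \equiv 0 \ (\bmod 3)$. Theorem~\ref{th:characterizationgamma} then certifies that this is a Schnyder orientation, and the correspondence of Figure~\ref{fig:edgelabeling} produces the associated {\sc edge} angle labeling $\ell$ of $G$. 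Since every edge-vertex has outdegree exactly~$1$, only type-$1$ edges appear, so $\ell$ is $1$-{\sc edge}; by Lemma~\ref{lem:EDGElabeling} it is then {\sc vertex} and {\sc face}, and a direct local inspection using the outdegree-three property at each vertex and the triangulation hypothesis forces each vertex and each face to be of type exactly~$1$. Proposition~\ref{prop:bijtore} then delivers the claimed Schnyder wood.

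The main obstacle, which has already been handled in Section~\ref{sec:toretri} prior to the theorem, is producing a 3-orientation whose middle cycles collectively span the homology of the torus; this is precisely the content of Lemma~\ref{lem:2middlecycle} and its delicate proof by iterative reversals of middle cycles combined with the constraints of Lemma~\ref{lem:kmoins3}. Once that geometric input is in hand, the present proof is a translation between the 3-orientation of a triangulation and the $\bmod_3$-orientation of its primal-dual completion, followed by a routine verification that the $\gamma$-divisibility hypothesis of Theorem~\ref{th:characterizationgamma} is met.
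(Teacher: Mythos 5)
Your proposal follows precisely the same argument as the paper: produce a 3-orientation with two non-weakly-homologous middle cycles via Lemma~\ref{lem:2middlecycle}, observe that middle cycles have $\gamma=0$, propagate this to all non-contractible cycles via Lemma~\ref{lem:deltagammatri}, invoke Theorem~\ref{th:characterizationgamma} (after moving to $\pdc{G}$), and read off the $1$-{\sc edge}, $1$-{\sc vertex}, $1$-{\sc face} types from the outdegree and triangulation constraints before applying Proposition~\ref{prop:bijtore}. The argument is correct and matches the paper's proof step for step, with the passage to the primal-dual completion merely spelled out in a little more detail.
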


\begin{proof} 
  By Lemma~\ref{lem:2middlecycle}, there exists a 3-orientation with
  two middle cycles that are not weakly homologous. By
  Lemma~\ref{lem:deltagammatri}, any non-contractible cycle of $G$ has
  value $\gamma$ equal to zero.  Thus by
  Theorem~\ref{th:characterizationgamma}, this implies that the
  orientation corresponds to an {\sc edge} angle labeling.  Then by
  Lemma~\ref{lem:EDGElabeling}, the labeling is also {\sc vertex} and
  {\sc face}.  As all the edges are oriented in one direction only, it
  is 1-{\sc edge}.  As all the vertices have outdegree three, it is
  1-{\sc vertex}. Finally as all the faces are triangles it is 1-{\sc
    face} (in the corresponding orientation of $\pdc{G}$, all the
  edges incident to dual-vertices are outgoing). By
  Proposition~\ref{prop:bijtore}, this 1-{\sc edge}, 1-{\sc vertex},
  1-{\sc face} angle labeling corresponds to a Schnyder wood.
\end{proof}

Theorem~\ref{th:triagnulationtore} corresponds to the case $g=1$ of
Conjecture~\ref{conjecture}.  By~\cite{GL13}, we already knew that
Schnyder woods exist for toroidal triangulations, but this section provides
an alternative proof based on the structure of
3-orientations and the characterization theorem of
Section~\ref{sec:char}.

\subsection{The crossing property}
\label{sec:crossing}

A  Schnyder wood of a toroidal triangulation is
\emph{crossing}, if for each pair $i,j$ of different colors, there
exist a monochromatic cycle of color $i$ intersecting a monochromatic
cycle of color $j$.  In~\cite{GL13} a strengthening of Theorem~\ref{th:triagnulationtore} is proved :

\begin{theorem}[\cite{GL13}]
\label{th:3connected}
An essentially 3-connected toroidal map admits a crossing 
Schnyder wood.
\end{theorem}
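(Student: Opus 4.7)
The plan is to reduce to the triangulation case handled by Theorem~\ref{th:triagnulationtore}, then upgrade the resulting Schnyder wood to one satisfying the crossing property.

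First, I would extend $G$ to a toroidal triangulation $G'$ by adding interior diagonals inside every non-triangular face, chosen so as to preserve the absence of contractible loops or digons, and to preserve essential $3$-connectivity. In essentially $3$-connected maps such a triangulation is always possible; a case analysis on faces of small length, combined with Theorem~\ref{lem:conjessentially} applied ``in reverse'' (using the structure of $G^\infty$ around each face) suffices to make a legal choice.

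Second, by Theorem~\ref{th:triagnulationtore}, $G'$ admits a Schnyder wood, i.e.\ an {\sc edge}, $\mathbb{N}^*$-{\sc vertex}, $\mathbb{N}^*$-{\sc face} angle labeling. Using Theorem~\ref{cor:lattice} I would navigate the distributive lattice of homologous Schnyder orientations, searching for one in which every added diagonal receives a type~$0$ labeling whose two pairs of angles match the surrounding labels of $G$; removing such diagonals preserves the {\sc edge}, $\mathbb{N}^*$-{\sc vertex}, $\mathbb{N}^*$-{\sc face} property and, by Proposition~\ref{prop:bijtore}, yields a Schnyder wood of $G$ itself.

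Third, for the crossing property, the bridge is that middle cycles of a $3$-orientation of a triangulation correspond to monochromatic directed cycles of the associated Schnyder wood. Lemma~\ref{lem:2middlecycle} already gives two such cycles in different weak homology classes, and on the torus two non-contractible cycles in independent classes must share at least one vertex. To force a crossing for \emph{every} pair of colors, I would choose a Schnyder wood that is extremal in the lattice of Theorem~\ref{cor:lattice} (for instance the minimum with respect to a well-chosen reference face $f_0$), so that the three color families of monochromatic cycles are pushed as far apart as possible in homology. The essential $3$-connectivity, combined with the connectivity arguments of Section~\ref{sec:universalcover} (Lemma~\ref{lem:nocommon} and Theorem~\ref{lem:conjessentially}), is then used to exclude degenerate configurations in which one color contains only cycles weakly homologous to those of another.

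The hard part will be the third step: producing two non-weakly-homologous monochromatic cycles is essentially Lemma~\ref{lem:2middlecycle}, but arranging this \emph{simultaneously} for all three pairs of colors within a single Schnyder wood is delicate, and is where essential $3$-connectivity must be exploited in a global way. I would expect the final argument to proceed by contradiction: assume the crossing property fails for some pair of colors in a lattice-minimal Schnyder wood and construct, via the paths $P_i(v)$ of Section~\ref{sec:universalcover} pushed to $G^\infty$, a bounded vertex set whose removal disconnects $G^\infty$, contradicting essential $3$-connectivity.
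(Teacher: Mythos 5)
Your proposal faces a structural obstacle already at its second step, and the paper itself explicitly warns against the route you are taking. Theorem~\ref{th:3connected} is cited from~\cite{GL13}; the present paper does not prove it, and the authors state right after it that their machinery from Section~\ref{sec:proof} only yields Theorem~\ref{th:triagnulationtore} for \emph{triangulations}, and that extending it to essentially $3$-connected toroidal maps would require redoing the middle-walk argument inside the primal-dual completion $\pdc{G}$ rather than triangulating.

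The concrete gap in your step three is that a triangulation can never have a type~$0$ edge in any generalized Schnyder wood. By Proposition~\ref{prop:bijtore} a Schnyder wood of a triangulation $G'$ corresponds to an {\sc edge}, $\mathbb{N}^*$-{\sc vertex}, $\mathbb{N}^*$-{\sc face} labeling; a triangular face has only three angles, so $\mathbb{N}^*$-{\sc face} forces every face to be of type exactly~$1$, i.e.\ its three angles carry three distinct colors. A type~$0$ edge would give its two incident triangles a repeated angle color, which is impossible. Consequently there is nothing to search for in the lattice of Theorem~\ref{cor:lattice}: all edge-vertices of $\pdc{G'}$ have outdegree~$1$, and the lattice (which preserves outdegrees, by Theorem~\ref{th:transformations}) can never move to an orientation where a diagonal becomes type~$0$. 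Note also that deleting a type~$1$ or type~$2$ edge breaks the {\sc edge} property, since the two angles merging at at least one endpoint get different colors; so type~$0$ was indeed the only hope, and it is unavailable.

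Your step four has a second, independent gap. Lemma~\ref{lem:2middlecycle} gives only two non-weakly-homologous middle cycles, hence two colors that cross; the paper calls this \emph{half-crossing} and exhibits (Figure~\ref{fig:halfcrossing}) a half-crossing Schnyder wood that is \emph{not} crossing, so the implication you need does not hold in general. Choosing a lattice-extremal element does not obviously repair this: extremality controls $0$-homologous reorientations (Proposition~\ref{prop:maximal}), which by Theorem~\ref{th:transformations} keep the homology type of every monochromatic cycle fixed, so the homology classes of the three color families are constant across the whole lattice; being minimal gives no leverage on which classes occur. The argument you sketch by contradiction, producing a small separator in $G^\infty$, is exactly what one would like, but as written there is no mechanism connecting ``two colors never cross'' to ``bounded separator''; this is the genuinely hard content of~\cite{GL13} and is not a corollary of the material in Sections~\ref{sec:universalcover} and~\ref{sec:toretri}.
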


Theorem~\ref{th:3connected} is stronger than
Theorem~\ref{th:triagnulationtore} for two reasons. First, it
considers essentially 3-connected toroidal maps and not only
triangulations, thus it proves Conjecture~\ref{conjecture2} for
$g=1$. Second, it shows the existence of crossing Schnyder woods.

However, what we have done in Section~\ref{sec:proof} for triangulation
can be generalized to essentially 3-connected toroidal maps. For that
purpose one has to work in the primal-dual completion. Proofs get more
technical and instead of walks in the primal now walks in the dual of the primal-dual completion have to
be considered. This is why we restrict ourselves to triangulations.

Even if we did not prove the existence of crossing Schnyder woods, 
Lemma~\ref{lem:2middlecycle} gives a bit of crossing in the following sense.
A 3-orientation obtained by Lemma~\ref{lem:2middlecycle}
has two middle cycles that are not weakly homologous. Thus in the
corresponding  Schnyder wood, these two cycles correspond to
two monochromatic cycles that intersect. We say that
the Schnyder wood obtained
by Theorem~\ref{th:triagnulationtore} is \emph{half-crossing}, i.e.,
 there exists a pair $i,j$ of different colors, such that there exist a monochromatic cycle
of color $i$ intersecting a monochromatic cycle of color $j$.

A half-crossing Schnyder wood is not necessarily
crossing. The 3-orientation of the toroidal triangulation of
Figure~\ref{fig:halfcrossing} is an example where two middle cycles
are not weakly homologous, so it corresponds to a half-crossing
 Schnyder wood. However, It is not crossing because the green and the
blue cycle do not intersect.

\begin{figure}[!h]
\center
\includegraphics[scale=0.5]{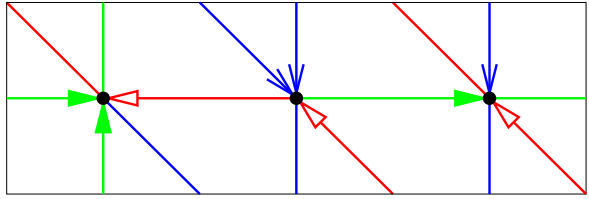} 
\caption{A not crossing but half-crossing Schnyder wood of a toroidal triangulation.}
\label{fig:halfcrossing}
\end{figure}

Consider a toroidal triangulation $G$ and a pair $\{B_1,B_2\}$ of cycles
that form a basis for the homology.  Figure~\ref{fig:trimap} shows how
to transform an orientation of $G$ into an orientation of
$\pdc{G}$. With this transformation a Schnyder wood of $G$ naturally
corresponds to a Schnyder orientation of $\pdc{G}$. This allows us to
not distinguish between a Schnyder wood or the corresponding Schnyder
orientation of $\pdc{G}$.  Recall from
Section~\ref{sec:transformations}, that the type of a Schnyder
orientation of $\pdc{G}$ in the basis $\{B_1,B_2\}$ is the pair
$(\gamma(B_1),\gamma(B_2))$.

\begin{lemma}
\label{lem:crossingtype}
A half-crossing Schnyder wood is of type $(0,0)$ (for the considered
basis).
\end{lemma}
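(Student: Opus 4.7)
The plan is to use the half-crossing hypothesis to extract two monochromatic cycles of distinct colors that form a basis of the first homology of the torus and on which $\gamma$ vanishes, and then to transfer this vanishing to $B_1,B_2$ via Lemma~\ref{lem:deltagammatri}.

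First I would show that every monochromatic cycle $C$ of color $k$ in the Schnyder wood is a middle cycle in the sense of Section~\ref{sec:proof} and satisfies $\gamma(C)=0$. At a vertex $v$ on $C$, which has outdegree three, the generalized Schnyder property forces the incoming color-$k$ edge used by $C$ to lie in the counterclockwise sector between $e_{k+1}(v)$ and $e_{k-1}(v)$, whereas $C$ leaves $v$ along $e_k(v)$. Reading off the local cyclic order, the two remaining outgoing edges $e_{k-1}(v)$ and $e_{k+1}(v)$ then sit on opposite sides of $C$ at $v$, so exactly one outgoing edge of $G$ leaves $C$ on the left and exactly one on the right at $v$. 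Hence $C$ is a middle cycle and its local contributions to $\gamma(C)$ cancel, giving $\gamma(C)=0$. Moreover, $C$ lifts to a directed walk along color-$k$ edges in the universal cover, and Lemma~\ref{lem:nodirectedcycle} forbids such a walk from being a cycle, so $C$ is non-contractible in $G$.

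Next, I would use the half-crossing hypothesis to fix distinct colors $i\neq j$ and monochromatic cycles $C_i, C_j$ of colors $i,j$ that share a vertex. Both are middle cycles by the previous step. By Lemma~\ref{lem:middleequal}, weakly homologous middle cycles are either equal or vertex-disjoint, and $C_i, C_j$ are neither (they carry different colors yet share a vertex), so they are not weakly homologous. Being non-contractible and not weakly homologous, $C_i$ and $C_j$ span the first homology of the torus over $\mathbb{Q}$. Now I would apply Lemma~\ref{lem:deltagammatri} with the pair $\{C_i, C_j\}$ in the role of the basis pair and with $C=B_\ell$ for $\ell\in\{1,2\}$. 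This provides integers $k, k_1, k_2$ with $k\neq 0$ and
\[
k\,\gamma(B_\ell) \;=\; k_1\,\gamma(C_i)+k_2\,\gamma(C_j) \;=\; 0,
\]
where the last equality uses the vanishing from the first step. Dividing by $k$ yields $\gamma(B_\ell)=0$ for $\ell\in\{1,2\}$, which is the definition of type $(0,0)$.

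The most delicate step is the first one. The vanishing of the local contribution to $\gamma(C)$ at $v$ relies crucially on $v$ having outdegree exactly three, so that the two remaining outgoing edges are perfectly split by the cycle. This matches the setting of the half-crossing Schnyder wood produced by Theorem~\ref{th:triagnulationtore}; if one wanted to extend the lemma to Schnyder woods with vertices of higher outdegree, the vertex-by-vertex balance can fail (there are then $3k_v-1$ extra outgoing edges to sort into two sides), and a more global counting argument would be required to recover $\gamma(C)=0$.
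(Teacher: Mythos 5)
Your argument is correct and follows essentially the same route as the paper's proof: exhibit two crossing monochromatic cycles, observe that each has $\gamma = 0$ and that they are non-contractible and not weakly homologous, then invoke Lemma~\ref{lem:deltagammatri} to propagate $\gamma = 0$ to $B_1, B_2$. The paper asserts $\gamma(C_1)=\gamma(C_2)=0$ and the non-weak-homology without elaboration; you supply the justification (monochromatic cycles are middle cycles in the underlying $3$-orientation, hence $\gamma=0$, and Lemma~\ref{lem:middleequal} plus the shared vertex forces them not to be weakly homologous), and your closing caveat about outdegree three is moot here since in any Schnyder wood of a toroidal triangulation every vertex has outdegree exactly three ($m=3n$ and outdegrees are positive multiples of $3$).
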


\begin{proof}
  Consider a half-crossing Schnyder wood of $G$ and $C_1,C_2$ two
  crossing monochromatic cycles. We have
  $\gamma(C_1)=\gamma(C_2)=0$. The cycles $C_1,C_2$ are not
  contractible and not weakly-homologous. So by
  Lemma~\ref{lem:deltagammatri}, any non-contractible cycle $C$ of $G$
  satisfies $\gamma(C)=0$. Thus $\gamma(B_1)=\gamma(B_2)=0$.
\end{proof}

A consequence of Lemma~\ref{lem:crossingtype} is the following:

\begin{theorem}
\label{cor:halfcrossing}
  Let $G$ be a toroidal triangulation, given with a particular half-crossing
  Schnyder wood $D_0$, then the set $T(G,D_0)$ of all 
  Schnyder woods of ${G}$ that have the same type as $D_0$ contains
  all the half-crossing Schnyder woods of $G$.
\end{theorem}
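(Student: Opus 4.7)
The plan is to observe that the theorem is a direct corollary of Lemma~\ref{lem:crossingtype}, with essentially no additional work required beyond unpacking definitions. Let $\{B_1,B_2\}$ be the fixed basis for the homology of the torus used to define the notion of type. Since $D_0$ is half-crossing by hypothesis, Lemma~\ref{lem:crossingtype} applied to $D_0$ yields that $D_0$ has type $(0,0)$, i.e.\ $\gamma(B_1)=\gamma(B_2)=0$ in $D_0$.

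Now let $D$ be any half-crossing Schnyder wood of $G$. Applying Lemma~\ref{lem:crossingtype} a second time, this time to $D$, yields that $D$ also has type $(0,0)$ in the same basis $\{B_1,B_2\}$. Hence $D$ and $D_0$ have the same type, and by the definition of $T(G,D_0)$ this gives $D\in T(G,D_0)$, which is exactly what we needed to show.

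Since this is entirely a matter of invoking Lemma~\ref{lem:crossingtype} twice, there is no genuine obstacle: the substantive content of the theorem is entirely captured in that lemma (which in turn rests on the fact that monochromatic cycles in a Schnyder wood satisfy $\gamma=0$, together with the linearity statement of Lemma~\ref{lem:deltagammatri} that propagates this vanishing to every non-contractible cycle, and in particular to $B_1$ and $B_2$). The only point worth flagging is that the notion of ``same type'' is a priori basis-dependent, but since both $D_0$ and $D$ land on the distinguished value $(0,0)$ in the chosen basis, the comparison is unambiguous and no appeal to basis-independence is needed.
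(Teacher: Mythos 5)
Your proof is correct and matches the paper's intent exactly: the paper presents Theorem~\ref{cor:halfcrossing} as an immediate ``consequence of Lemma~\ref{lem:crossingtype}'' without a separate proof, and the two-fold application of that lemma (to $D_0$ and then to an arbitrary half-crossing Schnyder wood) is precisely the implied argument.
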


Recall from Section~\ref{sec:lattice}, that the set $T(G,D_0)$ carries
the structure of a distributive lattice. This lattice contains all the 
half-crossing Schnyder woods. It shows the existence of a canonical 
lattice useful for bijection purpose, see~\cite{DGL15}.

Note that $T(G,D_0)$ may contain Schnyder woods that are not
half-crossing.  The  Schnyder wood of Figure~\ref{fig:gamma0}
is an example where $\gamma(C)=0$ for any non-contractible cycle
$C$. So it is of the same type as any half-crossing Schnyder wood but
it is not half-crossing.

Note also that in general there exist Schnyder woods not
in $T(G,D_0)$. The Schnyder wood of Figure~\ref{fig:gammanot0}
is an example where the horizontal cycle has $\gamma$ equal to $\pm
6$. Thus it cannot be of the same type as a half-crossing Schnyder
wood.

\begin{figure}[!h]
\center
\includegraphics[scale=0.5]{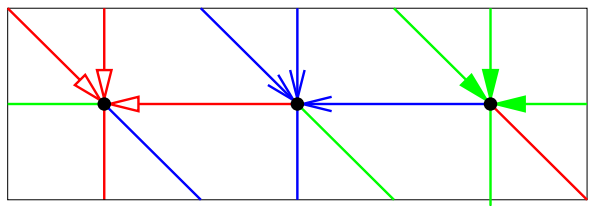} 
\caption{A  Schnyder wood of a toroidal triangulation where
  $\gamma(C)\neq 0$ for a non-contractible cycle $C$.}
\label{fig:gammanot0}
\end{figure}

\subsection{A lattice example}
\label{sec:example}

Figure~\ref{fig:lattice} illustrates the Hasse diagram of the set
$T(G,D_0)$ for the toroidal triangulation $G$ of
Figure~\ref{fig:halfcrossing}. Bold black edges are the edges of the
Hasse diagram $\mathcal{H}$. Each node of the diagram is a Schnyder
wood of $G$. Since we are considering a triangulation $\pdc{G}$ is not
represented in the figure. Indeed, all
the edges of $\pdc{G}$ incident to dual-vertices are outgoing in any
Schnyder orientation of $G$, thus these edges are rigid and do not
play a role for the structure of the lattice. In every Schnyder wood,
a face is dotted if its boundary is directed. In the case of
 the special face $f_0$ the dot is black. Otherwise, the  dot is magenta 
 if the boundary cycle is oriented \ccw and cyan otherwise. 
 An edge in the Hasse diagram from $D$ to $D'$
(with $D\leq D'$) corresponds to a face oriented \ccw in $D$ whose
edges are reversed to form a face oriented \cw in $D'$, i.e., a magenta
dot is replaced by a cyan dot. The outdegree of a node is its
number of magenta dots and its indegree is its number of cyan
dots. By Proposition~\ref{lem:necessary}, all the faces have a dot
at least once. The special face is not allowed to be flipped, it
is oriented \ccw in the maximal Schnyder wood and \cw in the minimal
Schnyder wood by Proposition~\ref{lem:maxtilde}.  By
Proposition~\ref{prop:maximal}, the maximal (resp. minimal) Schnyder
wood contains no other faces oriented \ccw (resp. \cww), indeed in
contains only cyan (resp. magenta) dots.  The words ``no'', ``half'',
``full'' correspond to Schnyder woods that are not
half-crossing, half-crossing (but not crossing), and crossing, respectively. By
Theorem~\ref{cor:halfcrossing}, the figure contains all the
half-crossing Schnyder woods of $G$. The minimal element is the
Schnyder wood of Figure~\ref{fig:gamma0}, and its neighbor is the
Schnyder wood of Figure~\ref{fig:halfcrossing}. 

The graph is very symmetric so the lattice does not depend on the
choice of special face. In the example the two crossing Schnyder woods
lie in the ``middle'' of the lattice.  These Schnyder woods are of
particular interests for graph drawing (see~\cite{GL13}) whereas the
minimal Schnyder wood (not crossing in this example) is important for
bijective encoding (see~\cite{DGL15}).

\begin{figure}[!h]
\center
\includegraphics[scale=0.25]{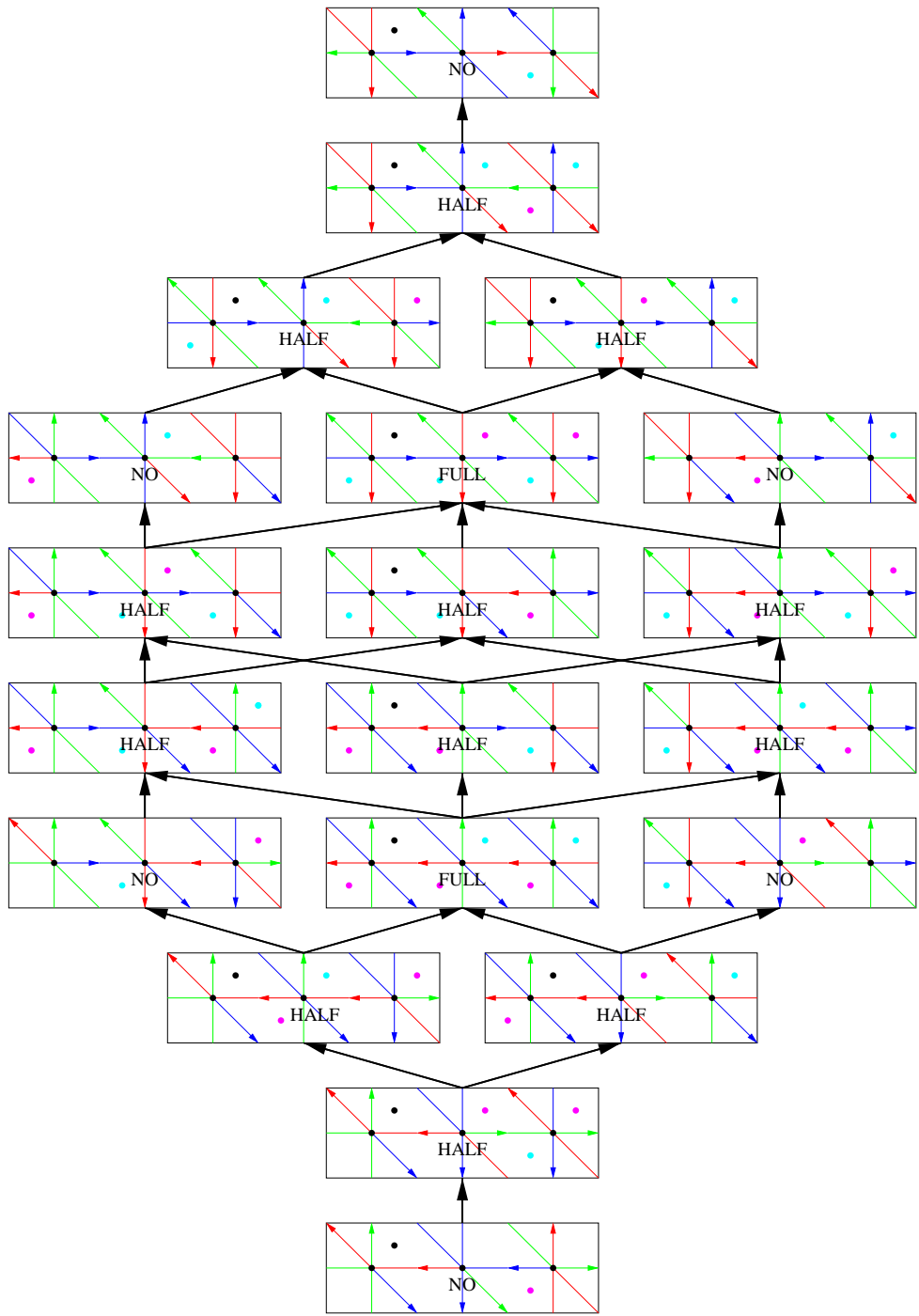} 
\caption{Example of the Hasse diagram of the distributive lattice of
  homologous orientations of a toroidal triangulation.}
\label{fig:lattice}
\end{figure}

The underlying toroidal triangulation of Figure~\ref{fig:lattice} has
only two Schnyder woods not depicted in
Figure~\ref{fig:lattice}.
One of them two Schnyder wood is shown in Figure~\ref{fig:gammanot0}
and the other one is a 180\textdegree rotation of Figure~\ref{fig:gammanot0}. 
Each of these Schnyder wood is  alone in its lattice of
homologous orientations. 
All their edges are rigid. They have no
0-homologous oriented subgraph. 

Theorem~\ref{th:transformations} says that one can take the Schnyder
wood of Figure~\ref{fig:gammanot0}, reverse three or six vertical cycle
(such cycles form an Eulerian-partitionable oriented subgraph) to obtain
another Schnyder wood. Indeed, reversing any three of these cycles leads
to one of the Schnyder wood of Figure~\ref{fig:lattice} (for example
reversing the three loops leads to the crossing Schnyder wood of the
bottom part). Note that ${3 \choose 6}=20$ and there are exactly twenty
Schnyder woods on Figure~\ref{fig:lattice}. Reversing six cycles leads
to the same picture pivoted by 180°.

\section{Conclusions}

In this paper we propose a generalization of Schnyder woods to higher
genus via angle labelings. We show that these objects behave nicely
with simple characterization theorems and strong structural
properties. Unfortunately, we are not able to prove that every essentially 3-connected map admits
a generalized Schnyder wood.

As mentioned earlier, planar Schnyder woods have applications in
various areas. In the toroidal case, they already lead to some results
concerning graph drawing~\cite{GL13} and optimal
encoding~\cite{DGL15}.  It would be interesting to see which other
applications can be generalized to higher genus. 

Note also that the distributive lattice structure of homologous
orientations of a given map (see Theorem~\ref{th:lattice}) is a very
general result that may be useful to study other objects (transversal
structures, $\frac{d}{d-2}$-orientations, etc.) associated to other
kinds of maps (4-connected triangulations, d-angulations, etc.).

\end{document}